\newif\ifllncs\llncsfalse
\newif\ifanon\anonfalse

\ifllncs
\documentclass[runningheads,envcountsame]{llncs}

\else
\documentclass[11pt,letterpaper]{article}
\fi

\newcommand{\refsec}[1]{\hyperref[#1]{Section~\ref*{#1}}}
\newcommand{\reftab}[1]{\hyperref[#1]{Table~\ref*{#1}}}
\newcommand{\reffig}[1]{\hyperref[#1]{Figure~\ref*{#1}}}
\newcommand{\reflem}[1]{\hyperref[#1]{Lemma~\ref*{#1}}}
\newcommand{\refthm}[1]{\hyperref[#1]{Theorem~\ref*{#1}}}
\newcommand{\refprop}[1]{\hyperref[#1]{Proposition~\ref*{#1}}}
\newcommand{\refcor}[1]{\hyperref[#1]{Corollary~\ref*{#1}}}
\newcommand{\refalg}[1]{\hyperref[#1]{Algorithm~\ref*{#1}}}
\newcommand{\refapp}[1]{\hyperref[#1]{Appendix~\ref*{#1}}}
\newcommand{\refeqs}[1]{\hyperref[#1]{Equations~(\ref*{#1})}}
\newcommand{\refeqplain}[1]{\hyperref[#1]{Equation~\ref*{#1}}}
\newcommand{\refeqsplain}[1]{\hyperref[#1]{Equations~\ref*{#1}}}
\newcommand{\refeqlowerplain}[1]{\hyperref[#1]{equation~\ref*{#1}}}
\newcommand{\refconj}[1]{\hyperref[#1]{Conjecture~\ref*{#1}}}

\newif\ifnotes
\notesfalse

\title{Streaming Regev's Quantum Factoring Algorithm}

\ifanon
    \author{}
    \institute{}
    \date{}
\else
    \author{
    Katherine Van Kirk\thanks{Email: \texttt{kvankirk@g.harvard.edu}. This work was done while the author was working at Google Quantum AI in Santa Barbara. In her academic capacity, she is supported by the Fannie and John Hertz Foundation, an NDSEG fellowship, and the Harvard Quantum Initiative. }\\Google Quantum AI, Harvard
    \and
    Craig Gidney\thanks{Email: \texttt{craiggidney@google.com}.}\\Google Quantum AI
    }
    \date{}
\fi

\ifllncs\else
\date{\today}
\fi

\usepackage[T1]{fontenc}
\usepackage{amssymb}
\usepackage{amsmath}
\usepackage{amsthm}
\usepackage{braket}
\PassOptionsToPackage{hyphens}{url}
\usepackage[ruled, algosection]{algorithm2e}
\usepackage{ulem}
\usepackage{multirow}
\usepackage{float}

\ifllncs
\usepackage[pdfstartview=FitH,colorlinks,urlcolor=black,linkcolor=black,
    citecolor=black,pdfpagelabels,bookmarksdepth=2,bookmarksopen=true,bookmarksnumbered,breaklinks]{hyperref}
\usepackage[font={small}]{caption}
\usepackage[nottoc,notlot,notlof]{tocbibind}

\else
\usepackage[utf8]{inputenc}
\usepackage{fullpage}
\usepackage[margin=1in]{geometry}
\usepackage{caption}
\usepackage{xcolor}
\definecolor{DarkBlue}{RGB}{0,0,150}
\definecolor{NotSoDarkBlue}{RGB}{15,15,210}
\definecolor{DarkRed}{RGB}{150,0,0}
\definecolor{DarkGreen}{RGB}{0,100,0}
\usepackage[pdfstartview=FitH,pdfpagemode=UseNone,colorlinks,linkcolor=DarkRed,filecolor=blue,citecolor=DarkRed,urlcolor=DarkRed,pagebackref,breaklinks]{hyperref}
\usepackage{microtype}      
\usepackage{libertine}
\usepackage{libertinust1math}

\usepackage{breakcites}
\usepackage[noadjust,sort,nocompress]{cite}
\usepackage{graphicx}

\usepackage{tikz,pgfplots}
\pgfplotsset{compat=1.18}
\usetikzlibrary{calc}

\usepackage{mathtools, xparse}
\usepackage{graphicx}
\usepackage{rotating}
\usepackage{url}
\usepackage{bm}
\usepackage{soul}

\usepackage{adjustbox}

\fi

\usepackage{orcidlink}
\usepackage{booktabs}
\usepackage[table,dvipsnames]{xcolor}

\usepackage{bbm}
\usepackage{mathtools}

\usepackage{booktabs,tabularx,makecell,adjustbox,multirow,array}
\newcolumntype{Y}{>{\centering\arraybackslash}X} 

\usepackage{framed}
\usepackage{float} 
\usepackage{multicol}
\usepackage[capitalise]{cleveref}
\usepackage[inline]{enumitem}

\ifllncs
\spnewtheorem{informal-theorem}[theorem]{Informal Theorem}{\bfseries}{\itshape}
\spnewtheorem{experiment}{Experiment}[section]{\bfseries}{}
\spnewtheorem{fact}[theorem]{Fact}{\bfseries}{\itshape}
\spnewtheorem{defn}[theorem]{Definition}{\bfseries}{}
\spnewtheorem{const}[theorem]{Construction}{\bfseries}{}
\spnewtheorem{assumption}{Assumption}{\bfseries}{}
\else 

\usepackage{amsthm}
\theoremstyle{plain}
\newtheorem{theorem}{Theorem}[section]{\bfseries}{\itshape}
\newtheorem{informal-theorem}[theorem]{Informal Theorem}{\bfseries}{\itshape}
{\bfseries}{}
\newtheorem{lemma}[theorem]{Lemma}{\bfseries}{\itshape}
\newtheorem{proposition}[theorem]{Proposition}{\bfseries}{\itshape}
{\bfseries}{\itshape}
\newtheorem{conjecture}{Conjecture}

\theoremstyle{remark}
\newtheorem{remark}[theorem]{Remark}{\itshape}{}
\theoremstyle{definition}
{\bfseries}{}
\newtheorem{definition}[theorem]{Definition}{\bfseries}{}
{\bfseries}{}
\newtheorem{corollary}[theorem]{Corollary}{\bfseries}{\itshape}
{\bfseries}{\itshape}
{\bfseries}{\itshape}

\numberwithin{theorem}{section}
\numberwithin{conjecture}{section}
\numberwithin{problem}{section}
\numberwithin{assumption}{section}
\fi

\ifllncs
\else

\fi

\newcommand{\RR}{\mathbb{R}}

\newcommand{\ZZ}{\mathbb{Z}}

\newcommand{\poly}{\mathsf{poly}}

\DeclareMathOperator{\argmax}{argmax}

\DeclarePairedDelimiter{\floor}{\lfloor}{\rfloor}
\DeclarePairedDelimiter{\ceil}{\lceil}{\rceil}

\allowdisplaybreaks

\DeclarePairedDelimiterX{\bigket}[1]{\bigg\lvert}{\bigg\rangle}{\,#1}
\DeclarePairedDelimiter{\norm}\lVert\rVert

\renewcommand{\refeq}[1]{\hyperref[#1]{Equation~(\ref*{#1})}}

\begin{document}
\maketitle

\thispagestyle{plain}
\frenchspacing

\begin{abstract}

    A major practical obstacle to Regev's factoring algorithm is its large space usage.
    In particular, the algorithm was believed to be incompatible with "qubit recycling", 
    which streams the control qubits in phase estimation rather than storing them all simultaneously.
    We show how to fix this streaming problem.
    We build on the construction of Ragavan and Vaikuntanathan \cite{rv24}, which used Fibonacci accumulators to reduce
    the qubit count from $O(n^{1.5})$ to $11.32n$.
    We make two key improvements.
    First, we use superposition masking to prevent unwanted information from entering the accumulators,
    removing the need for an uncomputation.
    Second, we show how to perform phase estimation with Fibonacci powers of a unitary operation, rather than power-of-2 powers. This avoids various encoding conversions required of Ref~\cite{rv24}.
    After these improvements, the phase estimation qubits are used only once in the circuit, allowing them to be streamed.
    
    Our improvements reduce the space usage of Regev's factoring algorithm to $n + M(n) + O(\sqrt{n})$, where $M(n)$ is the space of an in-place multiplication.
    For example, using the in-place multiplier from \cite{LuoEtAl26}, Regev's algorithm can factor an $n$-bit integer using $4n + O(\sqrt n)$ qubits.
    Although our work still falls short of the $n/2 + \epsilon n$ qubits sufficient for the leading implementations of Shor's algorithm \cite{DBLP:journals/iacr/ChevignardFS24,gidney2025factor2048bitrsa}, it is the lowest-space instantiation of Regev's algorithm to date, and it suggests Regev’s algorithm is still far from fully optimized. 
\end{abstract}

\ifllncs
\else
\thispagestyle{empty}
\newpage
\thispagestyle{empty}
\tableofcontents
\newpage
\pagenumbering{arabic}
\fi

\section{Introduction}

Since Shor introduced a polynomial-time quantum algorithm for factoring
$n$-bit integers~\cite{shor97}, substantial effort has gone into creating new implementations \cite{beckman,vedral,seifert2001using,Cop02,CleveW00,beauregard,takahashi,zalka2006shors,DBLP:conf/pqcrypto/EkeraH17,gidney2017factoring,hrs17,gidney2019windowed,DBLP:journals/quantum/GidneyE21,DBLP:journals/iacr/ChevignardFS24,kahanamokumeyer2024fast,gidney2025factor2048bitrsa}.
A particularly notable advance came from 
Regev in 2023, when he
promoted Shor's one-dimensional period-finding problem to its
high-dimensional analog~\cite{Regev23}. 
His factoring algorithm differed from Shor's in various interesting ways. It uses $O(\sqrt{n})$ generators instead of one generator. Its gate cost is dominated by unconditionally squaring a superposed accumulator, rather than by conditionally multiplying a classical value into a superposed accumulator. And finally, it distributes the computation over $O(\sqrt{n})$ shots,
reducing the number of quantum
multiplications per run from $O(n)$ to $O(\sqrt n)$.
As such, a common misconception is that Regev's factoring algorithm uses asymptotically fewer gates than Shor's. His per-run improvement often gets conflated with total gate count, where the benefit of fewer operations per run is canceled out by requiring more shots.  

While the differences in 
Regev's formulation make it a source of potential future improvements to quantum factoring, a major practical issue is its large qubit count.
In Regev's original paper, he used $\Omega(n^{1.5})$ qubits. 
This gap in the concrete resource efficiency between Shor and Regev spurred a sequence of works in the following years~\cite{cryptoeprint:2024/636,rv24,ekeragartnercomparison,KRVK26}.
In particular, Ragavan and Vaikuntanathan were able to reduce the space cost to $O(n)$, but their $11.32n$ qubit count~\cite{rv24} is still far from the $n/2 + \epsilon n$ achieved by certain variants of Shor's algorithm~\cite{DBLP:journals/iacr/ChevignardFS24,gidney2025factor2048bitrsa}.
In this paper, our goal is to further improve the qubit count of Regev's algorithm.
This motivates
the central question of our work: 
\begin{quote}
    \centering
    \textit{
Can we improve the efficiency of Regev's algorithm,  further narrowing the gap \\ between Shor and Regev for  quantumly factoring 2048-bit integers?}
\end{quote}

A major obstacle in reducing Regev's high qubit count has been its presumed incompatibility with ``qubit recycling''~\cite{Parker2000,Mosca1999}.
Recasting factoring as phase estimation, one can dramatically reduce the space required, \textit{if} the control qubits can be used, measured, and recycled. If they can be recycled, they can take turns existing and do not have to all be stored simultaneously. 
Unfortunately, previous implementations of Regev's algorithm always perform multiple sweeps over the qubits in the phase estimation register, and so these qubits could not be recycled.
Concretely, in the low-space implementation of Ragavan and Vaikuntanathan, these multiple sweeps come from extra uncomputations and extra conversions \cite{rv24}.

We present a streamed implementation of Regev's algorithm. We overcome these two obstacles to make the algorithm compatible with qubit recycling.
First, the extra-uncomputation obstacle arose because Ref~\cite{rv24} uses two accumulators but only wants to measure one.
The second accumulator ``knows too much'', forcing a reverse pass over the phase estimation qubits to uncompute it.
We address this using superposition masking~\cite{JaquesGidney20,zalka2006shors,Gidney19Encoded}, preventing the unwanted information from entering into the accumulators.
Second, the extra-conversions obstacle came from converting the phase estimation register back and forth between binary and Zeckendorf representations~\cite{Zeckendorf72}.
They need a binary representation when initially preparing the register, need a Zeckendorf representation for the control operations, and need a binary representation when performing the inverse QFT. 
We make these conversions obsolete by showing how to perform phase estimation directly on Fibonacci powers of an operation, rather than power-of-2 powers.

Our implementation of Regev's algorithm is the lowest-space implementation to date.
We need $n + M(n) + O(\sqrt{n})$ qubits, where $M(n)$ is the number of qubits required for an in-place modular multiplication.
We parameterize the space cost in this way because there has been substantial recent progress improving this subroutine~\cite{LuoEtAl26,BabbushEtAl26,Schrottenloher26,long2026}.
For example, using the in-place multiplier from \cite{LuoEtAl26}, the space cost would be $4n + O(\sqrt{n})$. See \reftab{tab:intro-factoring-comparison}.
Regev's algorithm can be implemented with half the space required by the previous state of the art. 

While the margin between Regev and Shor is now narrower, our construction still falls short of the $n/2$ + $\epsilon n$ qubits sufficient for the leading implementations of Shor’s algorithm
\cite{DBLP:journals/iacr/ChevignardFS24, gidney2025factor2048bitrsa}.
That said, we see no inherent obstacle to building on top of our ideas to even further
reduce the cost of Regev’s algorithm, and we believe future work to be critical. 
Shor's circuit has benefited from decades of optimizations, while Regev's circuit has only existed for a few years. 
Even if we ultimately determine Regev does not outperform Shor, we cannot make this determination until we have the best practical implementation of Regev.
Our work makes substantial progress towards this goal.

\begin{table}[t]
    \centering
    \renewcommand{\arraystretch}{1.15}
    \begin{tabular}{@{}llcc@{}}
        \toprule
        Algorithm & Implementation & Qubits & $n$-bit Multiplications (per run) \\
        \midrule
        Shor
        & Chevignard et al.~\cite{DBLP:journals/iacr/ChevignardFS24}
        & $n/2$ & N/A \\
        \midrule
        \multirow{4}{*}{Regev}
        & Regev~\cite{Regev23}
        & $3n^{3/2}$ & $12\sqrt{n}$ \\
        & Meyer--Ragavan--Van Kirk~\cite{KRVK26}
        & $3.72n\log n$ & $12 \sqrt{n}$ \\
        & Ragavan--Vaikuntanathan~\cite{rv24}
        & $11.32n$ & $86.4\sqrt{n}$ * \\
        & Ragavan~\cite{cryptoeprint:2024/636}
        & $10.43n$ & $45.7\sqrt{n}$ * \\
        & \textbf{This work}
        & $\mathbf{}$ & $\mathbf{}$ \\
         & \hspace{2mm} • Dialog Primitive \cite{BabbushEtAl26, Schrottenloher26}
        & $\mathbf{6.33n}$ & $\mathbf{30.25\sqrt{n}}$ \\
        & \hspace{2mm}  • Luo \textit{et al.} Primitive \cite{LuoEtAl26}
        & $\mathbf{4n}$ & $\mathbf{30.25\sqrt{n}}$ \\
        \bottomrule
    \end{tabular}
    \caption{High-level space and multiplication costs for leading
    implementations of Shor's and Regev's factoring algorithms. We report
    leading-order qubit counts and the number of quantum-quantum modular
    multiplications per circuit run; lower-order terms are omitted. Since Meyer, Ragavan, and Van Kirk \cite{KRVK26} did not report multiplication ancilla space, we assume they use the in-place multiplication primitive of \cite{LuoEtAl26}. Also note that both Ref~\cite{rv24} and Ref~\cite{cryptoeprint:2024/636} report their construction in terms of out-of-place multiplications (marked with *). 
    For our streamed implementation, we plug in two different in-place modular multiplication primitives, and reported multiplication count comes from our numerical simulations. We also derive an upper bound on required multiplications, for which the $30.25\sqrt{n}$ would simply be replaced by $O(\sqrt{n}\log n)$.
    }\label{tab:intro-factoring-comparison}
\end{table}

Our paper is structured as follows.
In \refsec{sec:techoverview}, we give a technical overview. We show how to stream Regev's factoring, directly applying superposition masking and phase estimation with Fibonacci powers. However, these key techniques are likely to also be of independent interest. So,
in \refsec{sec:space-efficient-exponentiation}, we describe how superposition masking works and pedagogically show how it prevents unwanted information from entering into accumulators.
Then,
in \refsec{sec:iterative-fibonacci-phase-estimation} we describe how to perform phase estimation with Fibonacci powers of an operation.
Finally, in \refsec{sec:regev-factoring-as-qpe} we assemble these pieces in a formal analysis, providing rigorous guarantees for our lowest-space implementation of Regev's algorithm.

\section{Technical Overview}\label{sec:techoverview}

\subsection{Streaming Regev's Factoring Algorithm}
Regev's algorithm \cite{Regev23} to factor $n$-bit numbers $N$ is the multidimensional analog of Shor's algorithm.
He extends Shor's periodic function to multiple dimensions, replacing the traditional
scalar exponent by a vector $z \in \ZZ^d$ and using a different base $a_i$ for each dimension $i$. His function takes the form
\begin{equation}
    \label{eqn:regev-function-periodic}
    f(z)=\prod_{i=1}^d a_i^{z_i}\bmod N,
    \qquad z\in\ZZ^d.
\end{equation}
This function also exhibits periodicity, $f(z+p) = f(z)$, and just as in Shor's algorithm, knowledge of this period can be used to factor $N$. 
All period vectors $p$ of this function together define a lattice $\mathcal{L}$: $p \in\mathcal{L}$ if and only if $f(z + p)=f(z)$ for $z\in\ZZ^d$. The goal of Regev's quantum circuit is to generate outputs that, with efficient classical overhead, allow us to recover a period vector $p$ in $\mathcal{L}$.

\paragraph{Recasting Regev Factoring as Iterative Quantum Phase Estimation.}
We stream Regev's algorithm by recasting the quantum part as iterative quantum phase estimation (IQPE). Recall that, for some unitary $U$ and its phase-labeled eigenstate $\ket{u_\alpha}$,  IQPE uses the eigenrelation $U\ket{u_\alpha}=e^{2\pi i\alpha}\ket{u_\alpha}$ to learn $\alpha$ with a single control qubit. 
Each iteration prepares the control qubit in $\ket{+}$, applies a
controlled power of $U$, and measures the control qubit in a chosen
basis. The classical outcomes determine the measurement bases used
in later iterations, and each iteration further refines the estimate of $\alpha$. 
In our recasting of Regev's algorithm as IQPE, we put it in this form. 
One run of Regev's original quantum circuit can be viewed as sampling a point
$\theta\in\mathcal{L}^*/\ZZ^d$ \cite{Regev23}, and our Regev IQPE circuit will sample $\theta$ from the
same distribution.

Due to its multidimensionality, Regev's algorithm as IQPE will require controlled powers of $d$ different modular multiplication unitaries, corresponding to the generators $\{a_i\}_{i=1}^d$.
As a result, the ``phase'' we learn will be a $d$-dimensional vector
$\theta=(\theta_1,\ldots,\theta_d)$.
Each iteration further refines all $\theta_i$ estimates in parallel, using a $d$-qubit control register rather than a single control. The $i$th qubit
controls the $i$th modular multiplication unitary $U_i$, which is multiplication by $a_i \pmod N$.
An iteration takes the following steps:
\begin{enumerate}
    \item Prepare the
control register in $\ket{+}^{\otimes d}$.
\item Each qubit controls its corresponding power of $U_i$.
\item Measure the controls in a basis determined by the previous measurement outcomes. 
\end{enumerate}
Given that the modular multiplication unitaries $\{U_i\}_{i=1}^d$ commute, we use their simultaneous eigenstates $\ket{u_\theta}$ for the target state. 
Each $U_i$ probes a different component $\theta_i$ of $\theta$,
\begin{equation}
    U_i \ket{u_\theta} = e^{2\pi i\theta_i}\ket{u_\theta},
\end{equation}
and we show that $\ket{1\bmod N}$ is an equal superposition of
these common eigenstates: $\ket{1\bmod N}
=
\frac{1}{\mathcal{N}}
\sum_{\theta}\ket{u_\theta}$, where
 $\mathcal{N}$ is  normalization. 
Therefore, we start our IQPE with the initial target state $\ket{1\bmod N}$, and as the circuit proceeds, it will select one of the phase vectors $\theta$ from this superposition.

\subsubsection{Superposition masking for streaming modular exponentiation}

\begin{figure}[t]
    \centering

    \begin{minipage}{\textwidth}
    \makebox[\linewidth][l]{\footnotesize\bfseries (a)}\par
    \vspace{0.2em}
    \centering
    \includegraphics[width=\linewidth]{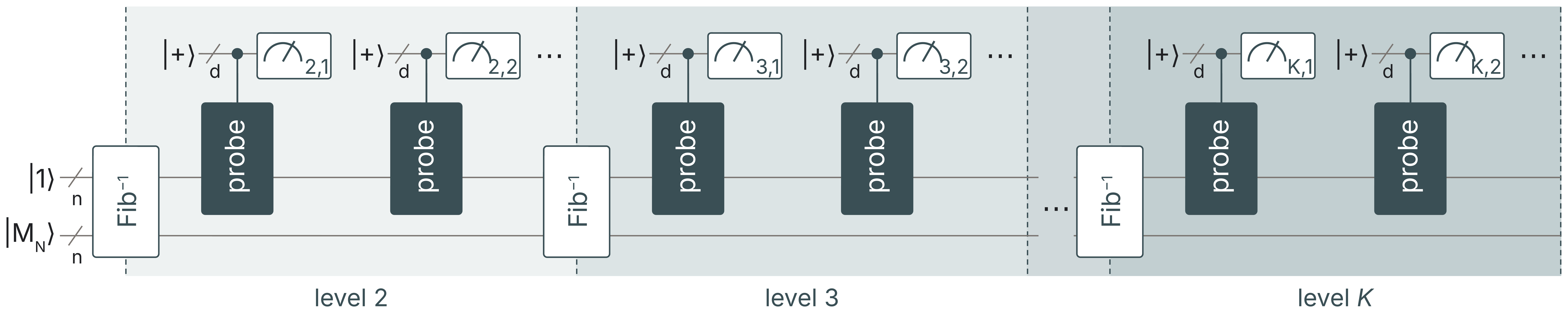}
\end{minipage}

\vspace{0.8em}

\begin{minipage}{\textwidth}
    \makebox[\linewidth][l]{\footnotesize\bfseries (b)}\par
    \vspace{0.2em}
    \centering
    \includegraphics[width=0.75\linewidth]{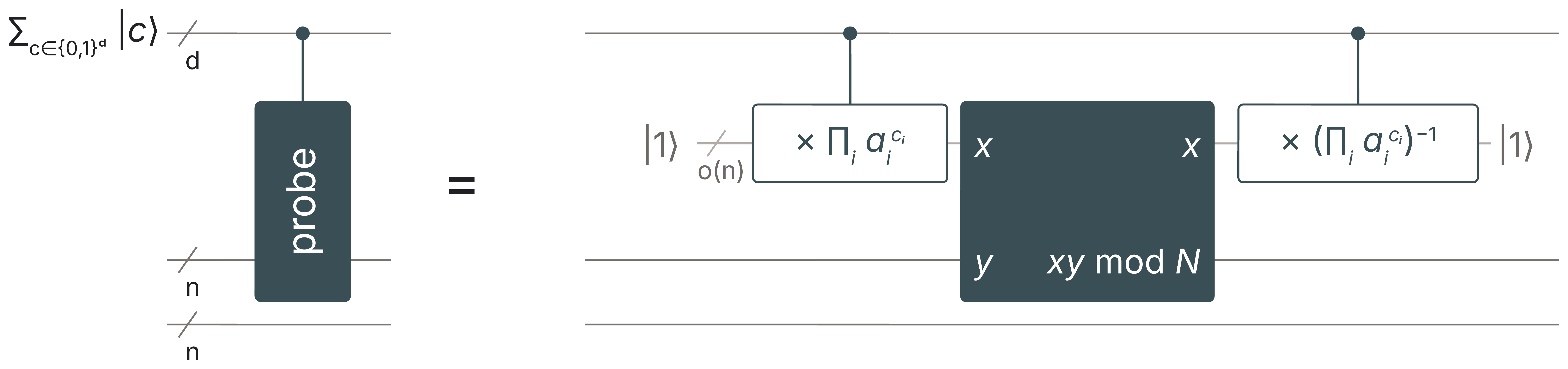}
\end{minipage}

    \caption{\textbf{Streamed implementation of the Fibonacci phase probes.}
    (a) The two persistent $n$-qubit accumulators are retained throughout
    the computation. Each inverse Fibonacci update advances the traveling
    eigenstate to the next level. At level $k$ we are in traveling eigenstate $\ket{\widetilde{u_\theta}^{(k)}}$. The recycled $d$-qubit
    control register is initialized, used for a simultaneous phase probe of $F_k\theta$,
    measured in that probe's adaptive basis (labeled $k,s$), and reset. 
    This can be repeated
    for multiple shots, indexed by $s$, before advancing.
    (b) Construction of one simultaneous probe. On a computational basis state
    branch $c=(c_1,\ldots,c_d)$ of the control register, the selected product
    $\prod_i a_i^{c_i}$ is coherently constructed in an $o(n)$-qubit
    register, multiplied into one accumulator register, and then uncomputed. The other $n$-qubit
    accumulator remains unchanged.}
    \label{fig:streaming-regev-circuit}
\end{figure}

\paragraph{Building Fibonacci phase probes via ``superposition masking.''}
Our first major technical advance is how we implement the controlled powers of $U_i$ space-efficiently, without losing Regev's favorable per-shot depth asymptotics. 
We build on
the Fibonacci accumulator method of Ragavan and
Vaikuntanathan~\cite{rv24}, which generates Fibonacci powers $U_i^{F_k}$ reversibly and online. As such we implement phase estimation with Fibonacci powers rather than the traditional power-of-2 powers.  
Their construction begins with
$\ket{1}\ket{a_i}$ and repeatedly applies the Fibonacci accumulator update $\mathsf{Fib}: \ket{x}\ket{y}
    \longmapsto
    \ket{y}\ket{xy\bmod N}.$
After $k-1$ updates, the two accumulators contain
\begin{equation}
    \mathsf{Fib}^{k-1}\ket{1}\ket{a_i} = \ket{a_i^{F_{k-1}}\bmod N}
    \ket{a_i^{F_k}\bmod N}.
    \label{eqn:RV24fibupdate}
\end{equation}
The second register has our target multiplier. One may think we are now done, but we cannot simply discard the first register. In a controlled computation,
it contains unwanted information about the control. At this point one must copy out $a_i^{F_k}$ and reverse the
Fibonacci computation to erase $a_i^{F_{k-1}}$. Repeating this forward and
backward traversal for every phase probe would eliminate the advantage of
streaming. 

We avoid this uncomputation using superposition masking.
Rather than trying to generate an efficient circuit for controlled exponentiation $U_i^{F_k}$, we define a ladder of eigenstates with eigenvalues $F_k \theta_i$. Indexed by ladder level $k$, these eigenstates $\ket{\widetilde{u_\theta}^{(k)}}$ are 
simultaneous eigenstates of all $\mathbb{I} \otimes U_i$, with eigenrelation $\mathbb{I} \otimes U_i \ket{\widetilde{u_\theta}^{(k)}} = e^{2 \pi i F_k \theta_i}\ket{\widetilde{u_\theta}^{(k)}}$.
This has the same effect as Fibonacci exponentiation, but we have offloaded the work of exponentiation to stepping along this ladder of ``traveling eigenstates.'' They take the form
\begin{equation}
\ket{\widetilde{u_\theta}^{(k)}}
\coloneqq
\mathsf{Fib}^{-(k-1)}
\bigl(\ket{\mathsf M_N}\ket{u_\theta}\bigr),
\label{eq}
\end{equation} 
where
$\ket{\mathsf M_N}$ is the uniform superposition over the multiplicative group modulo $N$. Crucially, we can efficiently step between states on this ladder. 
We move between the level $k$ and level $k+1$ eigenstates by applying the inverted Ragavan and Vaikuntanathan Fibonacci accumulator update,
$\mathsf{Fib}^{-1}$: $\mathsf{Fib}^{-1}\ket{\widetilde{u_\theta}^{(k)}} = \ket{\widetilde{u_\theta}^{(k+1)}}$. See \hyperref[fig:streaming-regev-circuit]
{Figure~\ref*{fig:streaming-regev-circuit}}(a).

The key insight in constructing these traveling eigenstates was that multiplying $\ket{\mathsf M_N}$, by any element of the multiplicative group modulo $N$, just
permutes the terms of
$\ket{\mathsf M_N}$. This is called ``superposition masking,'' where $\ket{\mathsf M_N}$ is our mask. 
Because multiplication (mod $N$) leaves the mask unchanged, 
we could use it to hide the unwanted information in Ragavan and Vaikuntanathan's first accumulator register (\refeq{eqn:RV24fibupdate}). This lets us construct the eigenstates
$\ket{\widetilde{u_\theta}^{(k)}}$ with the desired phase kickback.

\paragraph{Space cost of Fibonacci IQPE.}
Our streamed quantum circuit's qubit count is set by the space required for one level of our Fibonacci IQPE procedure. Each level consists of (1) one backward Fibonacci update between consecutive
levels and (2) some number of phase probes (controlled multiplications into the right register of the traveling eigenstate). Both feature in-place quantum-quantum multiplications, but the latter sets the space cost; there is an extra $n$ lying around for the left register of the traveling eigenstate. Again,  see \hyperref[fig:streaming-regev-circuit]
{Figure~\ref*{fig:streaming-regev-circuit}}(a). 
For this latter step, we follow Regev's careful accounting. Since the generators $a_i$ are small \cite{Regev23}, 
instead of control-multiplying each $a_i$ into the traveling eigenstate directly, we first combine the $a_i$s and then multiply in their $o(n)$-sized product.
See \hyperref[fig:streaming-regev-circuit]
{Figure~\ref*{fig:streaming-regev-circuit}}(b). 
This turns $d$ quantum-quantum in-place multiplications, where the quantum input is large, into only one such multiplication.  
This gate improvement comes at a sublinear space cost because of the additional register required 
to aggregate the $o(n)$-sized product.
Since each $a_i$ is $O(\log n)$, the product is $O(d \log n) = o(n)$ because $d = \floor{\sqrt{n}}$. 

Therefore, the space-dominant step is the multiplication of this product into the right register of the traveling eigenstate. If the space cost of an in-place multiplication of $n$-bit registers is $M(n)$, then the space cost is
\begin{equation}
n + M(n) + O(\sqrt{n}),
\end{equation}
where the extra $n$ comes from the idle left register in the traveling eigenstate. The extra $O(\sqrt{n})$ comes from the controls.

\subsubsection{Fibonacci IQPE for learning dual lattice points}

\paragraph{Regev Factoring as backwards Fibonacci IQPE.} 
Instead of traditional IQPE with power-of-2 powers, which begins with the largest power of 2 and steps down, we start with $F_2$ and then step up through the Fibonacci numbers.
We run “backwards Fibonacci IQPE”.
At the lowest level of the ladder, $F_1 = 1$, we generate a superposition over the traveling eigenstates by preparing $\ket{\mathsf M_N}\ket{1}$. This level is used only for initialization; we then apply one $\mathsf{Fib}^{-1}$ update and begin probing at level
$k=2$, for which $F_2=1$.
At each Fibonacci level $k\geq 2$, 
we probe the phase $F_k\theta$ a few times and then move to the $k+1$ traveling eigenstate.
After the final level $K$, the two accumulator registers contain the final traveling eigenstate $\ket{\widetilde{u_\theta}^{(K)}}$, corresponding to the selected $\theta$. Since we were always in an eigenstate of our controlled modular multiplications, this state is not entangled with our control qubits  and can be
discarded. 

The remaining challenge is the classical inverse problem: given the measurement transcript produced by our IQPE circuit, we must generate an estimate $w$ of the selected phase vector $\theta$.
Our streamed circuit's measurements provide a sequence of noisy observations of the Fibonacci
phases $F_k\theta_i$, so this is not immediate. 
Our second major technical advance is, therefore, to address this inverse problem. 
We provide a classical protocol for determining how many $F_k$ phase probes are needed at each level, choosing the measurement basis for the probe controls based on previous outcomes, and finally efficiently generating an estimator $w \approx \theta$.

\paragraph{Factoring from our IQPE output: A slow-refinement adaptive Bayesian protocol.}
Although each iteration measures all \(d\) coordinates simultaneously, the
\(i\)th output bit depends only on the scalar phase \(\theta_i\). We can
therefore recover the coordinates separately, treating each one as its own scalar
inverse problem.
For each coordinate, we construct a Bayesian estimator that progressively zooms in on the
unknown phase. It maintains a weighted partition of the phase circle into
\(\textsf{poly}(n)\) cells, where each cell represents a phase interval and is weighted according to the posterior probability that
\(\theta_i\) lies in that interval. Before each measurement, we adaptively
choose the control qubit's measurement basis to maximize the
information gain, based on these weights. We then use the observed outcome to update the cell
weights.

As the measurements accumulate, the posterior concentrates around a smaller
region of the phase circle. We move to larger Fibonacci powers to amplify
the remaining uncertainty, allowing subsequent measurements to efficiently
distinguish candidate phases in this high-probability region. At the same time, we refine the
cells only when there is significant posterior weight. This prevents
the classical representation from growing beyond \(\textsf{poly}(n)\).

We numerically simulate this adaptive protocol and find it needs
approximately six scalar probes per Fibonacci level. We considered bit lengths up to the cryptographically relevant sizes $n = 1024, 2048, 4096$. At each bit length, we performed 1,000 independent scalar trials.  Across the tested
instances, $99.7\%$ of scalar trials were accurate, to the level required by Regev's classical postprocessing to factor the composite $n$-bit number with constant probability.
In addition to these experiments, we also provide a worst-case asymptotic guarantee on the number of scalar probes needed per Fibonacci level. 
We show we need at most $O(\log n)$ probes at each of $K=\Theta(\sqrt n)$ Fibonacci levels
to factor an $n$-bit number with 
constant probability. 
Because each simultaneous probe
measures all $d$ coordinates in parallel, this gives
$O(\sqrt n\log n)$ total IQPE iterations.

\paragraph{Remark: Prior work in Bayesian QPE.}
Our approach is related to prior Bayesian phase-estimation techniques
\cite{SvoreHastingsFreedman13,WiebeGranade16,VanDenBerg21,DeNeeveEtAl24}.
These methods sequentially update a classical posterior distribution for an
unknown phase and can obtain high precision from relatively few
measurements. However, our inverse problem exhibits a more restricted setting. For example, Ref.\ \cite{SvoreHastingsFreedman13} allows the exponent of the
next controlled unitary to be chosen adaptively. Whereas, our exponents are fixed to
the Fibonacci numbers, and since our traveling eigenstate climbs the Fibonacci ladder, we cannot query frequencies $F_2,F_3,\ldots,F_K$ in any arbitrary order.
Finally, some prior methods also admit expensive classical postprocessing, such as direct Bayesian implementation on a uniform
grid. This requires a number of grid points inversely proportional to the
target precision, which for us would be
$2^{\Theta(\sqrt n)}$, making the classical space and time intractable.

\paragraph{Remark: The Gaussian input state of Regev's algorithm is not necessary.}
Pervasive among all variants of Regev's algorithm to date has been the need for an entangled input state \cite{Regev23, rv24, cryptoeprint:2024/636, ekeragartnercomparison, KRVK26}. For instance, in Regev's original work, the exponent registers are initialized in a discretized Gaussian
superposition over vectors in \(\ZZ^d\). This choice permits an elegant
analysis of the high-dimensional Fourier transform and yields the low-noise
dual-lattice samples required by his classical postprocessing.

Our work shows such an entangled exponent-register state is not
necessary. 
Instead, our construction replaces the original exponent registers with a sequence of \(d\)-qubit control registers initialized
to \(\ket{+}^{\otimes d}\). We show that our streamed phase-estimation procedure satisfies the same
postprocessing requirements as Regev's low-noise dual-lattice samples. This is shown formally via our analytic upper bound on the number of probes required per Fibonacci level
(\refprop{prop:streamed-regev-correctness} and
\refcor{cor:fibonacci-phase-estimation-for-regev}).

\subsection{Preliminaries}\label{sec:preliminaries}

We retain the notation of Regev~\cite{Regev23} and
Ragavan--Vaikuntanathan~\cite{rv24} and restate it here for convenience.
Let $N<2^n$ be the $n$-bit integer to be factored. Let
$d=\floor{\sqrt n}$, let $b_1,\ldots,b_d$ be small $O(\log n)$-bit integers
(for example, the first $d$ primes), and set $a_i=b_i^2\bmod N$. For an
integer $t$, let $[t]=\{1,\ldots,t\}$. All logarithms are base two, and
$\phi$ denotes the golden ratio. Regev's algorithm uses the following
parameters:
\begin{itemize}
    \item Let $C>0$ be the absolute constant in
    \refconj{conj:short-nontrivial-relation} below.

    \item Let $A>C$ be a constant specified later. For Regev's factoring
    algorithm, one takes $A=C+2+o(1)$.

    \item Let $R=2^{(A+o(1))\sqrt n}$.

    \item Let $D$ be a power of two in $[2\sqrt d\,R,4\sqrt d\,R]$.
    Also, $D=2^{(A+o(1))\sqrt n}$. These are the same parameters $R$ and
    $D$ defined by~\cite{Regev23}.
\end{itemize}

\noindent
Our results rely heavily on the structure of the Fibonacci sequence. In this paper, we use the conventional initial conditions for the Fibonacci sequence.  Namely, we use the following Fibonacci convention:
\begin{equation}
    F_0=0,\qquad F_1=1,\qquad
    F_{j+1}=F_j+F_{j-1},
\end{equation}
so that $F_2=1$, $F_3=2$, and so on.

\paragraph{Regev's factoring lattice.}
Regev defines the following lattices in $d$ dimensions:
\begin{align}
    \mathcal{L}
    &\coloneqq
    \left\{(z_1,\ldots,z_d)\in\ZZ^d:
    \prod_{i=1}^d a_i^{z_i}\equiv1\pmod N\right\},
    \label{eq:factoring-lattice}\\
    \mathcal{L}_0
    &\coloneqq
    \left\{(z_1,\ldots,z_d)\in\ZZ^d:
    \prod_{i=1}^d b_i^{z_i}\equiv\pm1\pmod N\right\}
    \subseteq\mathcal{L}.
    \label{eq:trivial-factoring-lattice}
\end{align}
The dual lattice of $\mathcal{L}$ is
\begin{equation}
    \mathcal{L}^*\coloneqq
    \left\{y\in\RR^d:
    \langle x,y\rangle\in\ZZ\quad\text{for all }x\in\mathcal{L}\right\}
    \supseteq\ZZ^d.
    \label{eq:dual-factoring-lattice}
\end{equation}

Later, when we pose Regev's algorithm as quantum phase estimation, the phase
vectors we read out will be elements of $\mathcal{L}^*/\ZZ^d$. Notice that
vectors $y$ and $y+k$, for $k\in\ZZ^d$, represent the same element: adding
an integer to any coordinate $i$ does not change a phase $e^{2\pi i y_i}$.
Thus $\mathcal{L}^*/\ZZ^d$ is the set of these equivalence classes. We
represent each class by the vector $y$ satisfying $0\leq y_i<1$ for every
$i\in[d]$. 

\paragraph{Factoring from a lattice relation.}
Regev's algorithm~\cite{Regev23} starts by ruling out simple possibilities
where factoring is easy, i.e., if $N$ is even, a prime power, or shares a
common factor with any of the $b_i$. For the remainder of this paper, we
assume none of these is the case.

The goal of Regev's algorithm~\cite{Regev23} is to find one vector
$z=(z_1,\ldots,z_d)$ in $\mathcal{L}\setminus\mathcal{L}_0$. Given such a
vector, one can construct
\begin{equation}
    b=\prod_{i=1}^d b_i^{z_i}\bmod N.
\end{equation}
Since $z\in\mathcal{L}$, $b$ must be a square root of $1$ modulo $N$;
moreover, it must be a nontrivial square root of $1$ since
$z\notin\mathcal{L}_0$. Hence $N$ divides $(b-1)(b+1)$ but not either term
individually, so $\gcd(b-1,N)$ will be a nontrivial divisor of $N$.

Regev's algorithm can be broken down into two distinct pieces. First, there
is a quantum circuit that essentially produces one noisy sample from
$\mathcal{L}^*/\ZZ^d$. This circuit is run $d+4$ times. Second, there is a
classical postprocessing procedure that takes these samples and essentially
recovers a basis for all vectors in $\mathcal{L}$ of norm at most
$T=2^{C\sqrt n}$. Under the heuristic assumption made by~\cite{Regev23}, at
least one of these vectors belongs to $\mathcal{L}\setminus\mathcal{L}_0$:

\begin{conjecture}
\label{conj:short-nontrivial-relation}
For some absolute constant $C>0$, the lattice
$\mathcal{L}\setminus\mathcal{L}_0$ contains a vector of Euclidean norm at
most $T=2^{C\sqrt n}$.
\end{conjecture}

Pilatte proved a variant of this conjecture for larger, randomly chosen prime
bases; the bases $b_1, ...  , b_d$ are allowed to be as large as $\exp(O(d \log d))$. This proven variant yields unconditional correctness for a slightly modified version of Regev's algorithm, at the cost of
polylogarithmic factors in the gate and space complexity~\cite{pilatte2024unconditional}.

\section{Streaming Modular Exponentiation}\label{sec:space-efficient-exponentiation}

\subsection{Superposition masking}
\label{sec:superposition-masking}

Suppose that a reversible computation acts as $\ket{z}\ket{0}\ket{0}
    \longmapsto
    \ket{z}\ket{f(z)}\ket{g(z)}.$
Here, $z$ is the input, $f(z)$ is the function we want to compute, and the third register contains arithmetic ``garbage'' $g(z)$ left over from computing $f(z)$.
Unfortunately, when $z$ is in superposition, the workspace containing $g(z)$ is generally
entangled with the output and cannot be measured or discarded. A standard
reversible implementation removes this garbage by copying out $f(z)$ and
running the computation backward. 

Superposition masking allows us to overcome this obstacle by arranging for the final workspace register
to have the same state for every $z$. In other words, it no longer exhibits the $z$ dependence that entangles it with the other registers, and measuring it tells us nothing about the rest of the computation. 
Superposition masking was first introduced by Jaques and Gidney~\cite{JaquesGidney20}; see also~\cite{Zalka06,Gidney19Encoded}.
Consider the following mask: let
\begin{equation}
    \ket{\mathsf{M}_N}
    \coloneqq
    \frac{1}{\sqrt{|\ZZ_N^*|}}
    \sum_{m\in\ZZ_N^*}\ket{m}
    \label{eq:uniform-multiplicative-mask}
\end{equation}
be the uniform superposition over the multiplicative group modulo $N$. If $g(z)\in\ZZ_N^*$ for every $z$, we can utilize this state as a mask; we arrange for the final workspace register to end up in this state for every $z$. 
The following lemma states the masking property that we use throughout
the paper.

\begin{lemma}[Multiplicative masking]
\label{lem:multiplicative-masking}
For every $h\in\ZZ_N^*$,
\begin{equation}
    \frac{1}{\sqrt{|\ZZ_N^*|}}
    \sum_{m\in\ZZ_N^*}\ket{mh\bmod N}
    =\ket{\mathsf{M}_N}.
\end{equation}
\end{lemma}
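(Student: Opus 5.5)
The plan is to show that the map $m\mapsto mh \bmod N$ is a bijection of $\ZZ_N^*$ onto itself, and then reindex the sum. This is the standard fact that left multiplication by a fixed group element permutes the group.

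First, I check that the map $\sigma_h:\ZZ_N^*\to\ZZ_N^*$, $\sigma_h(m)=mh\bmod N$, is well defined. If $\gcd(m,N)=\gcd(h,N)=1$, then $\gcd(mh,N)=1$, so $mh\bmod N\in\ZZ_N^*$.

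Second, I show that $\sigma_h$ is a bijection. Since $h\in\ZZ_N^*$, it has an inverse $h^{-1}$ modulo $N$ (by B\'ezout). The map $\sigma_{h^{-1}}$ is then a two-sided inverse of $\sigma_h$, because $(mh)h^{-1}\equiv m$ and $(mh^{-1})h\equiv m \pmod N$. Alternatively, $\sigma_h$ is injective by cancellation and therefore surjective, since it maps a finite set to itself.

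Finally, I substitute $m'=\sigma_h(m)$ in the sum:
\begin{equation*}
\frac{1}{\sqrt{|\ZZ_N^*|}}\sum_{m\in\ZZ_N^*}\ket{mh\bmod N}=\frac{1}{\sqrt{|\ZZ_N^*|}}\sum_{m'\in\ZZ_N^*}\ket{m'}=\ket{\mathsf M_N}.
\end{equation*}
Each basis state $\ket{m'}$ appears exactly once because $\sigma_h$ is a bijection, and the normalization constant is unchanged.

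I do not expect a real obstacle here. The only point needing care is closure together with invertibility, which is exactly where the hypothesis $h\in\ZZ_N^*$ is used. If $h$ were not a unit, the map would collapse distinct terms and the identity would fail.
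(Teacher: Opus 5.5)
Your proposal is correct and follows the paper's proof: the paper also observes that multiplication by $h$ permutes $\ZZ_N^*$ and relabels $m' = mh \bmod N$. You additionally verify closure and invertibility of $m \mapsto mh \bmod N$, which the paper takes for granted.
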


\begin{proof}
Multiplication by $h$ permutes the elements of $\ZZ_N^*$, so the claim follows
by relabeling the summation variable as $m'=mh\bmod N$.
\end{proof}

More specifically, if $g(z)\in\ZZ_N^*$ for every $z$, then we want to set up our computation such that our garbage ends up inside the mask $\ket{\mathsf{M}_N}$. As such, following \reflem{lem:multiplicative-masking}, 
\begin{align}
    &\sum_z \alpha_z\ket{z}\ket{f(z)}
      \left(
      \frac{1}{\sqrt{|\ZZ_N^*|}}
      \sum_{m\in\ZZ_N^*}\ket{m g(z)\bmod N}
      \right) \nonumber\\
    &\hspace{5em}=
      \left(\sum_z\alpha_z\ket{z}\ket{f(z)}\right)
      \ket{\mathsf{M}_N}.
    \label{eq:masked-garbage-factorizes}
\end{align}
The workspace is therefore unentangled from the computation. Measuring it
returns a uniformly random element of $\ZZ_N^*$ and reveals nothing about
$g(z)$.

\paragraph{Efficiently creating the masked state.}

This subsection describes how to efficiently prepare the mask used by our
algorithm. 
Instead of preparing the masked state $\ket{\mathsf{M}_N}$ exactly, we prepare the uniform superposition
over all residues modulo $N$,
\begin{equation}
    \ket{R_N}
    \coloneqq
    \frac{1}{\sqrt N}
    \sum_{m=0}^{N-1}\ket{m}.
    \label{eq:uniform-residue-mask}
\end{equation}
In addition to every element of $\ZZ_N^*$, this state contains zero and all
residues sharing a nontrivial common factor with $N$. For
$n=\ceil{\log_2 N}$, it can be prepared using $2n+O(1)$ qubits and
$O\!\left(n\right)$ gates and depth ~\cite[Figure~12]{babbush2018encoding}. 
This  circuit from Babbush \textit{et
al. }
begins with a uniform superposition over all $2^n$ $n$-bit strings and uses amplitude
amplification to retain only strings representing integers less
than $N$. 
We will show that, when $N$ is a balanced semiprime, replacing
$\ket{\mathsf{M}_N}$ by $\ket{R_N}$ changes the output distribution of our
algorithm by only an exponentially small amount.

In the lemma below and throughout this paper, \(G\), \(S\), and \(D\) denote gate count, total qubit count, and circuit depth, respectively.

\begin{lemma}[Uniform-residue mask preparation]
\label{lem:uniform-residue-mask-preparation}
Assume $N$ odd.
There exists a quantum circuit that prepares the
state $\ket{R_N}$ using
\begin{equation}
    G_{\mathrm{mask}}=O(n) \text{ gates},
    \qquad
    S_{\mathrm{mask}}=2n+O(1)\text{ space},
    \qquad
    D_{\mathrm{mask}}=O(n)\text{ depth}.
\end{equation}
\end{lemma}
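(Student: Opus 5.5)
The plan is to give an explicit construction in the style of \cite[Figure~12]{babbush2018encoding}: exact amplitude amplification on the uniform superposition over $n$-bit strings, with a comparator against the classical constant $N$ as the marking oracle. The steps, in order, are as follows.

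First, apply $H^{\otimes n}$ to an $n$-qubit register to get $\frac{1}{\sqrt{2^n}}\sum_{x<2^n}\ket{x}$. Let the ``good'' subspace be spanned by $\ket{x}$ with $x<N$. Its initial amplitude satisfies $\sin^2\beta = N/2^n \in (1/2,1]$, since $n=\ceil{\log_2 N}$ gives $2^{n-1}<N\le 2^n$; the case $N=2^n$ is excluded because $N$ is odd and $n\ge 1$, apart from the trivial case $N=1$.

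Second, implement the oracle that flips the phase of $\ket{x}$ iff $x<N$. To do this, compute the comparison bit $[x<N]$ into an ancilla using a Cuccaro-style or Gidney-style comparator against the classical constant $N$, apply a $Z$ to that bit, and uncompute. Comparing against a classical constant needs at most $n+O(1)$ ancillas in a simple ripple-borrow circuit, with $O(n)$ gates and $O(n)$ depth. This is the source of the $2n+O(1)$ space bound: the data register plus the comparator workspace.

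Third, build the reflection about the initial state as $H^{\otimes n}$, then a reflection about $\ket{0^n}$, then $H^{\otimes n}$. The reflection about $\ket{0^n}$ is an $n$-controlled $Z$, which costs $O(n)$ Toffolis and $O(n)$ depth using the same borrowed ancillas.

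Fourth, apply one round of exact amplitude amplification. Ordinary Grover iteration with $\sin^2\beta>1/2$ overshoots, so one step does not give probability exactly $1$. I would use the standard fix of rotating by a tunable angle: replace the $-1$ phases in both reflections by phases $e^{i\varphi}$ and $e^{i\psi}$, chosen classically from $N$ so that a single iteration maps the state exactly onto the good subspace. Equivalently, one can add one ancilla qubit with a $R_y$ rotation that dilutes the good amplitude to exactly $1/4$, after which a single standard Grover step succeeds with certainty. The result is $\frac{1}{\sqrt N}\sum_{x<N}\ket{x}=\ket{R_N}$ with the ancilla returned to $\ket{0}$.

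The main obstacle is making this exactness argument rigorous with only $O(1)$ extra rotations. Concretely, the issue is verifying that the phase-matching condition (the $\pi/3$-type or Long's exact Grover condition) has a solution for every $\sin^2\beta\in(1/2,1]$. It does, because one iteration with generalized phases can reach any target once the initial overlap is at least $1/4$.

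The remaining task is totaling costs. There are a constant number of comparator and multi-controlled-$Z$ calls plus $O(n)$ Hadamards, giving $G=O(n)$ and $D=O(n)$. The space is $n$ data qubits plus at most $n+O(1)$ ancillas, giving $S=2n+O(1)$.
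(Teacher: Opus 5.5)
Your proposal is correct and follows the paper's route. The paper simply invokes the uniform-state preparation circuit of Babbush et al.\ \cite[Figure~12]{babbush2018encoding} with $L=N$, $k=0$, and that circuit is exactly what you describe: Hadamards, a comparator against the constant $N$ as the marking oracle, an ancilla rotation fixing the initial success probability, and one round of amplitude amplification, giving $2n+O(1)$ qubits and $O(n)$ gates and depth. One small slip: in your dilution variant the target is success \emph{probability} $1/4$ (amplitude $1/2$, i.e.\ $\beta=\pi/6$ so that $\sin(3\beta)=1$), not amplitude $1/4$.
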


\begin{proof}
Apply the uncontrolled uniform-state preparation circuit of
Babbush et al.~\cite[Figure~12]{babbush2018encoding} with $L=N$ and $k=0$.
Their circuit spans $2n+O(1)$ qubits.
Its gates consist of Hadamards, comparisons, and Clifford operations
of linear size. Executing these components serially gives the stated gate
and depth bounds.
\end{proof}

\paragraph{Negligible error from the approximate mask.} To verify that we can use the approximate mask state, rather than the true mask, we want to bound the error that the approximate mask introduces in our factoring algorithm. 
To do so, we will show that the initial approximation error cannot grow during the
remainder of the algorithm. 
Let $\varphi(N)$ count the number of residues invertible modulo \(N\):
\begin{equation}
\varphi(N)=\left|\{m\in\{0,\ldots,N-1\}:\gcd(m,N)=1\}\right|.
\end{equation}
For example, for \(N=pq\) with distinct primes, we have $\varphi(N)=(p-1)(q-1)$.

\begin{proposition}[Error from the approximate mask, per Regev run]
\label{prop:approximate-mask-error}
Let $P_{\mathrm{exact}}$ denote the distribution of the complete measurement
transcript when the algorithm begins with mask $\ket{\mathsf{M}_N}$, and let
$P_{\mathrm{approx}}$ denote the corresponding distribution when the algorithm begins with the
preparation circuit of
\reflem{lem:uniform-residue-mask-preparation}. Then
\begin{equation}
    d_{\mathrm{TV}}
    \!\left(
        P_{\mathrm{approx}},
        P_{\mathrm{exact}}
    \right)
    \leq
    \sqrt{1-\frac{\varphi(N)}{N}}.
    \label{eq:approximate-mask-transcript-error}
\end{equation}
\end{proposition}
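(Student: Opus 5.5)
The plan is to compare the two initial states, use the fact that the rest of the algorithm cannot increase distinguishability, and then compute one inner product.

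Write the approximate mask as a superposition of the exact mask and an orthogonal part:
\begin{equation}
\ket{R_N}=\sqrt{\tfrac{\varphi(N)}{N}}\,\ket{\mathsf M_N}+\sqrt{1-\tfrac{\varphi(N)}{N}}\,\ket{B_N},
\end{equation}
where $\ket{B_N}$ is the normalized uniform superposition over the non-invertible residues $m\in\{0,\dots,N-1\}$ with $\gcd(m,N)>1$. Since $\ket{\mathsf M_N}$ and $\ket{B_N}$ have disjoint supports in the computational basis, they are orthogonal, and $|\braket{\mathsf M_N|R_N}|^2=\varphi(N)/N$. The full initial state of the algorithm is the mask tensored with fixed states on the other registers: the second accumulator in $\ket{1}$, and the control and ancilla registers in fixed states. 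Hence the overlap of the two full initial states $\ket{\psi_{\mathrm{exact}}}$ and $\ket{\psi_{\mathrm{approx}}}$ is also $\sqrt{\varphi(N)/N}$. If the preparation circuit of \reflem{lem:uniform-residue-mask-preparation} leaves its $n+O(1)$ ancillas returned to $\ket{0}$, these ancillas factor out. If the circuit is only approximate or leaves flag qubits, I would absorb that into the same overlap by conditioning on the successful branch.

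Next, I would model the rest of the algorithm as one quantum instrument applied identically in both cases. It consists of the $\mathsf{Fib}^{-1}$ updates, the controlled multiplications, the fresh $\ket{+}^{\otimes d}$ control preparations, the adaptive mid-circuit measurements in bases chosen from earlier outcomes, and the resets. Purify it by deferring every measurement: each classical-controlled basis choice becomes a quantum-controlled rotation, and each reset becomes a swap into fresh ancillas. This yields one fixed isometry $V$ followed by a single computational-basis measurement of the transcript registers. Then $P_{\mathrm{exact}}$ and $P_{\mathrm{approx}}$ are the outcome distributions of the same measurement on the pure states $V\ket{\psi_{\mathrm{exact}}}$ and $V\ket{\psi_{\mathrm{approx}}}$. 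Because $V$ is an isometry, their overlap is still $\sqrt{\varphi(N)/N}$.

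Finally, I would apply two standard facts. First, the total variation distance between the outcome distributions of any measurement is at most the trace distance of the measured states. Second, for pure states the trace distance equals $\sqrt{1-|\braket{\psi|\psi'}|^2}$. Together these give $d_{\mathrm{TV}}\le\sqrt{1-\varphi(N)/N}$, which is \refeq{eq:approximate-mask-transcript-error}.

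I expect the main obstacle to be the second step: arguing cleanly that the adaptive, mid-circuit-measured, recycled streaming procedure is a single fixed channel, independent of which initial mask was used. The deferred-measurement purification handles this. The one point that needs care is that $\mathsf{Fib}^{-1}$ must be a well-defined unitary on the entire input space, including non-invertible residues in the left register. If the reversible multiplier's behaviour there is only specified on $\ZZ_N^*$, I would extend it to an arbitrary unitary on the complement, which does not affect the argument.
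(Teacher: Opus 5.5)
Your proposal is correct and follows essentially the paper's route. You compute $\lvert\braket{\mathsf{M}_N|R_N}\rvert^2=\varphi(N)/N$, convert it to the pure-state trace distance $\sqrt{1-\varphi(N)/N}$, and use that the same fixed adaptive algorithm cannot increase trace distance, which in turn bounds the transcript total variation distance; the only difference is that you prove this contractivity by purifying the adaptive measurements and resets into one isometry followed by a terminal measurement, whereas the paper directly invokes the data-processing inequality for the channel $\mathcal{E}$. Your fallback of ``conditioning on the successful branch'' is not needed, since the lemma prepares $\ket{R_N}$ exactly with clean ancillas, as the paper also assumes; it would not preserve the stated bound anyway, because postselection is not the same channel applied in both cases, so you can drop it.
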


\begin{proof}
The squared overlap between the two mask states is
\begin{equation}
    \left|\braket{\mathsf{M}_N|R_N}\right|^2
    =
    \frac{\varphi(N)}{N}.
    \label{eq:mask-state-overlap}
\end{equation}
Their trace distance is therefore
\begin{equation}
    \sqrt{1-\frac{\varphi(N)}{N}}.
    \label{eq:mask-state-trace-distance}
\end{equation}
By the data-processing inequality for trace distance, trace distance cannot increase under quantum
operations or measurements, and it upper-bounds the total variation
distance between the resulting measurement distributions. 
We put this into action on the state $\rho$ generated by the preparation circuit of
\reflem{lem:uniform-residue-mask-preparation}. Let
$\mathcal{E}$ denote our algorithm, including all adaptive
measurements. The data-processing
inequality and triangle inequality give
\begin{align}
d_{\mathrm{TV}}
\!\left(P_{\mathrm{approx}},P_{\mathrm{exact}}\right)
&\leq
D_{\mathrm{tr}}
\!\left(
\mathcal{E}(\rho),
\mathcal{E}\!\left(
\ket{\mathsf{M}_N}\!\bra{\mathsf{M}_N}
\right)
\right)
\notag\\
&\leq
D_{\mathrm{tr}}
\!\left(
\rho,
\ket{\mathsf{M}_N}\!\bra{\mathsf{M}_N}
\right)
\notag\\
&\leq
\sqrt{1-\frac{\varphi(N)}{N}}.
\end{align}
\end{proof}

In other words,  no subsequent operation or measurement can make the initial
approximation more distinguishable than it was when the mask was prepared.

If $N=pq$, then $\varphi(N)=(p-1)(q-1)$. Substituting this into
\refprop{prop:approximate-mask-error}, we get the following total variation distance: 
\begin{align}
    d_{\mathrm{TV}}
    \!\left(
        P_{\mathrm{approx}},
        P_{\mathrm{exact}}
    \right)
    &\leq
    \sqrt{1-\frac{(p-1)(q-1)}{pq}}
    \notag\\
    &=
    \sqrt{
        \frac{1}{p}
        +\frac{1}{q}
        -\frac{1}{pq}
    }.
    \label{eq:semiprime-mask-transcript-error}
\end{align}
So for balanced $n$-bit semiprimes, the error in
\refeq{eq:semiprime-mask-transcript-error} is
$O(2^{-n/4})$. In particular, if $N$ is an RSA-2048 modulus
whose factors are both 1024-bit primes, the mask infidelity is less than
$2^{-1022}$ and the resulting trace-distance error is less than $2^{-511}$. 
This is  $1.49...\times10^{-154}$.

For clarity, the remainder of our circuit analysis uses the ideal mask
$\ket{\mathsf M_N}$. We account for the negligible discrepancy introduced
by the exactly prepared state $\ket{R_N}$ in the final correctness result.

\subsection{Application: Modular Fibonacci exponentiation}\label{sec:masked-fib-exponentiation}

\subsubsection{Ragavan and Vaikuntanathan's Fibonacci exponentiation.}
We can now apply superposition masking to online Fibonacci exponentiation. Recall that Ragavan and Vaikuntanathan \cite{rv24} replaced the typical exponentiation procedure by Fibonacci exponentiation: implementing reversible exponentiation with
Fibonacci numbers in the exponent, rather than the usual powers of 2.  Looking at the typical Fibonacci recurrence $F_{j+1}=F_j+F_{j-1}$, one can notice that generating sequential Fibonacci powers
translates directly into multiplication:
\begin{equation}
a^{F_j}a^{F_{j-1}}
    =a^{F_j+F_{j-1}}
    =a^{F_{j+1}}\pmod N.
\end{equation}
Thus, successive Fibonacci powers can be generated by alternately multiplying two accumulator registers into one another. 
For a fixed $a\in\ZZ_N^*$, the accumulator
update $\ket{x}\ket{y}
    \longmapsto
    \ket{y}\ket{xy\bmod N}$  sends
\begin{equation}
    \ket{a^{F_j}\bmod N}\ket{a^{F_{j+1}}\bmod N}
    \longmapsto
    \ket{a^{F_{j+1}}\bmod N}\ket{a^{F_{j+2}}\bmod N}.
\end{equation}
Thus the construction of Ragavan and Vaikuntanathan~\cite{rv24} begins with
\(
    \ket{1}\ket{a}
    =\ket{a^{F_0}}\ket{a^{F_1}}
\)
and, after $k-1$ updates, obtains
\(
    \ket{a^{F_{k-1}}\bmod N}\ket{a^{F_k}\bmod N}.
\)
If only $a^{F_k}$ is desired, the first register cannot simply be erased.
Their clean implementation copies $a^{F_k}$ into a fresh register and then
reverses the $k-1$ accumulator updates.

\paragraph{Example: Superposition masking for 1D Fibonacci exponentiation.}
Our goal is to avoid the clean-up of the
unwanted intermediate value $\ket{a^{F_{k-1}}}$.  Consider masking the latter so that the
 computation instead ends in
\begin{equation}
    \frac{1}{\sqrt{|\ZZ_N^*|}}
    \sum_{m\in\ZZ_N^*}
        \ket{a^{F_{k-1}}m\bmod N}\ket{a^{F_k}\bmod N}
    =\ket{\mathsf{M}_N}\ket{a^{F_k}\bmod N}.
    \label{eq:masked-fibonacci-goal}
\end{equation}
The first register can then be measured and discarded without revealing the
value in the second register. The question is then: \textit{how do we modify our Fibonacci computation to end in this state?} If we isolate out the masked part from \refeqplain{eq:masked-fibonacci-goal} (i.e. removing the $a$ contributions), we have $\sum_{m\in\ZZ_N^*}\ket{m}\ket{1}$ up to normalization,
and this state can itself be written in terms of Fibonacci exponents. In particular, 
\begin{equation}
\frac{1}{\sqrt{|\ZZ_N^*|}}
\sum_{m\in\ZZ_N^*}\ket{m}\ket{1}
= 
\frac{1}{\sqrt{|\ZZ_N^*|}}
\sum_{m\in\ZZ_N^*}\ket{m^{F_1}}\ket{1^{F_0}}.
\end{equation}
We want to end in this state after the Fibonacci computation terminates. Since the original computation of \(a^{F_k}\) consists of \(k-1\) accumulator updates, we will end in this required mask state if we start in the state that is this state with the Fibonacci recurrence run backwards  \(k-1\) times.
Running the recurrence backward $k-1$ times from the exponent pair \((1,0)=(F_{-1},F_0)\) gives \((F_{-k},F_{1-k})\). Notice that the Fibonacci sequence can be extended to have negative indices where $F_{-j}=(-1)^{j+1}F_j$.
Therefore, if we begin in the state
\begin{equation}
\frac{1}{\sqrt{|\ZZ_N^*|}}
\sum_{m\in\ZZ_N^*}
\ket{m^{F_{-k}}\bmod N}
\ket{a\,m^{F_{1-k}}\bmod N},
\label{eq:masked-fibonacci-initial-state}
\end{equation}
then after \(k-1\) forward updates, the \(a\)-exponents evolve from \((F_0,F_1)\) to \((F_{k-1},F_k)\), while the \(m\)-exponents evolve from \((F_{-k},F_{1-k})\) to \((F_{-1},F_0)=(1,0)\). Hence \refeq{eq:masked-fibonacci-initial-state} evolves into \refeq{eq:masked-fibonacci-goal}.

For example, take $k=5$, so $F_4=3$ and $F_5=5$. Running four accumulator steps backward from the desired mask factors gives
\[
\bigl(m^5,m^{-3}\bigr)
\longleftarrow
\bigl(m^{-3},m^2\bigr)
\longleftarrow
\bigl(m^2,m^{-1}\bigr)
\longleftarrow
\bigl(m^{-1},m\bigr)
\longleftarrow
\bigl(m,1\bigr).
\]
We therefore initialize the two accumulators as
\begin{equation}
\frac{1}{\sqrt{|\ZZ_N^*|}}
\sum_{m\in\ZZ_N^*}
\ket{m^5}\ket{a m^{-3}}.
\end{equation}
Applying the same multiply-and-swap update $\ket{x}\ket{y}\longmapsto\ket{y}\ket{xy\bmod N}$
at each step gives
\begin{align*}
\sum_m\ket{m^5}\ket{a m^{-3}}
&\longmapsto
\sum_m\ket{a m^{-3}}\ket{a m^2}\\
&\longmapsto
\sum_m\ket{a m^2}\ket{a^2m^{-1}}\\
&\longmapsto
\sum_m\ket{a^2m^{-1}}\ket{a^3m}\\
&\longmapsto
\sum_m\ket{a^3m}\ket{a^5}\\
    &=\sum_m\ket{m} \ket{a^5},
\end{align*}
where we omit the common normalization factor and every value is taken modulo $N$. The final equality follows by relabeling $m'=a^3m$.
The desired Fibonacci power has separated from the accumulator
garbage without running the recurrence backward, achieving the desired effect.

Crucially, while this construction is pedagogical, \textit{it turns out to do more than we need in practice}. To see this, recall that we will want to probe every Fibonacci power through $F_K$: $a^{F_2},a^{F_3},\ldots,a^{F_K}.$
We can initialize the accumulators in $\ket{\mathsf{M}_N}\ket{1}$ and move backward through the Fibonacci levels, using the level-$k$ controls to multiply $a$ into the appropriate accumulator. 
After the final controlled multiplication at level $K$, the construction above would suggest traversing the Fibonacci recurrence forward, to restore the accumulators to their initial state before discarding them. 

However, this final traversal is unnecessary. 
At this point, every control has already acquired its desired phase and been measured. 
The forward traversal acts only on the accumulator registers, so it cannot change the probability distribution of the recorded control outcomes. By the deferred-measurement principle, we could imagine measuring the controls only after this traversal, but because the forward Fibonacci updates act only on accumulators, these forward updates will not affect the controls.  We may therefore discard the accumulators immediately after the final level, avoiding the forward traversal entirely.
Finally, note that we will develop this idea formally in \refsec{sec:regev-factoring-as-qpe}, using the traveling eigenstate picture.

\vspace{3mm}
\subsubsection{Masking Fibonacci Exponentiation}

\paragraph{Extension to multiple dimensions and efficiency guarantees.} 
We now extend the one-dimensional construction to Regev's full $d$-dimensional setting. Instead of just one $a$, we have an $a_i$ for each $i \in [d]$. 
Recall that, in each QPE iteration, we use $d$ controls, one for each of the $a_i$ generators. 
Let
    $c=(c_{1},\ldots,c_{d})\in\{0,1\}^d$
denote the computational-basis value of the control qubits. Then, at Fibonacci level $k$, the
multiplication operation required for this probe is
$\prod_{i=1}^d U_i^{F_kc_{i}},$
where $U_i$ denotes multiplication by $a_i$ modulo $N$. 

Rather than multiply each selected $a_i$ separately into an $n$-bit
accumulator, we first aggregate all contributions having the same Fibonacci
order. 
Define
\begin{equation}
p(c)
\coloneqq
\prod_{i=1}^d a_i^{c_{i}}.
    \label{eq:same-order-contribution}
\end{equation}
Thus, the multiplier required at level $k$ is
$\prod_{i=1}^d a_i^{F_kc_{i}}=p(c)^{F_k}\pmod N.$
We combine the $d$ controlled contributions and then multiply the product into the $n$-bit accumulator.
The benefit of doing this is that $p(c)$ is substantially smaller than $N$. Each
$a_i$ has $O(\log n)$ bits, so every possible value of $p(c)$ can be
represented using $O(d\log n)
=
O(\sqrt n\log n)$ qubits.
In particular, this register is sublinear, so $p(c)<N$ for sufficiently
large $n$ and no modular reduction is required.
This scheme of first combining the $a_i$s and then multiplying into the accumulator is depicted in \hyperref[fig:streaming-regev-circuit]
{Figure~\ref*{fig:streaming-regev-circuit}}(b).

\paragraph{Efficiency guarantees for aggregating same-order contributions across dimensions.}
We compute $p(c)$ coherently using the balanced selected-product tree
organization introduced by Regev in his original circuit~\cite{Regev23} and stated in clean
reversible form by Ragavan and
Vaikuntanathan~\cite[Lemma~5.5]{rv24}. At the leaves of the tree, the $i$th
control selects either $1$ or $a_i$; the selected values are then multiplied
together until the root register $P$ contains $\ket{p(c)}$. Once the tree has produced $\ket{p(c)}$, we perform the single
small-by-large in-place modular multiplication
\begin{equation}
\ket{p(c)}_P\ket{y}
\longmapsto
\ket{p(c)}_P
\ket{p(c)y\bmod N}
\label{eq:small-by-big-multiplication}
\end{equation}
on the relevant $n$-bit accumulator.
Finally, we reverse the selected-product tree, returning $P$ and every
intermediate tree register to $\ket{1}$. Hence no arithmetic workspace
remains entangled with the controls when they are rotated and measured.

We now formalize the clean interface provided by this construction. Let
$G_{\mathrm{tree}}$, $S_{\mathrm{tree}}$, and $D_{\mathrm{tree}}$ denote
the gate count, total qubit count, and depth of the selected-product tree.
Similarly, let $G_{\mathrm{sb}}$, $S_{\mathrm{sb}}$, and
$D_{\mathrm{sb}}$ denote the corresponding costs of the small-by-large
modular multiplication in place (\refeq{eq:small-by-big-multiplication}). Thus $S_{\mathrm{tree}}$
and $S_{\mathrm{sb}}$ include their respective input and output registers. The lemma below states the clean $p(c)$-into-accumulator update in terms of these primitives.

\begin{lemma}[Clean same-order accumulator update]
\label{lem:clean-same-order-accumulator-update}
There exists a quantum circuit implementing
\begin{equation}
    \ket{c}\ket{y}\ket{0}^{\otimes s_\mathrm{ancilla}}
    \longmapsto
    \ket{c}
    \ket{p(c)y\bmod N}
    \ket{0}^{\otimes s_\mathrm{ancilla}},
    \label{eq:clean-same-order-accumulator-update}
\end{equation}
where the ancillary workspace is size $s_\mathrm{ancilla}=\max\!\left\{ S_{\mathrm{tree}}- d, S_{\mathrm{sb}}-n\right\}$.
The circuit has gate count $2G_{\mathrm{tree}}+G_{\mathrm{sb}}$, depth $2D_{\mathrm{tree}}+D_{\mathrm{sb}}$, and total qubit count 
\begin{equation}
    d + n + s_\mathrm{ancilla} = 
    \max\!\left\{
        n+S_{\mathrm{tree}},
        d+S_{\mathrm{sb}}
    \right\}.
    \label{eq:clean-same-order-qubit-count}
\end{equation}
\end{lemma}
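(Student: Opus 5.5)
We build the circuit as a compute--multiply--uncompute sandwich and then account for the costs of the three stages in sequence. First, we run the selected-product tree of Ragavan and Vaikuntanathan~\cite[Lemma~5.5]{rv24} on the control register $\ket{c}$. It uses the $d$ control qubits as input, and we borrow the remaining $S_{\mathrm{tree}}-d$ qubits from the ancilla pool. Its clean interface gives
\[
\ket{c}\ket{0}^{\otimes(S_{\mathrm{tree}}-d)}\longmapsto\ket{c}\ket{p(c)}_P\ket{\mathrm{junk}(c)},
\]
where the intermediate tree registers hold values depending only on $c$. During this stage the accumulator $\ket{y}$ is idle.

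Second, with $P$ holding $p(c)$, we apply the small-by-large in-place multiplier of \refeq{eq:small-by-big-multiplication} to $P$ and the $n$-bit accumulator. This requires $p(c)\in\ZZ_N^*$ so that the map is a bijection on residues. That holds because each $a_i=b_i^2$ and $N$ shares no factor with any $b_i$, which is assumed after Regev's preprocessing. This stage turns $\ket{y}$ into $\ket{p(c)y\bmod N}$ and returns its internal workspace to $\ket{0}$. Third, since $c$ and $P$ are unchanged by the multiplication, we run the tree circuit in reverse. It exactly uncomputes $\ket{p(c)}_P$ and all tree intermediates, returning them to their initial states, which we identify with $\ket{0}$ up to fixed $X$ gates. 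This yields \refeq{eq:clean-same-order-accumulator-update} on every basis state, and by linearity on superpositions.

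Gate count and depth add directly: $G_{\mathrm{tree}}+G_{\mathrm{sb}}+G_{\mathrm{tree}}$, and similarly for depth, since the stages run serially.

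The step needing care is the space accounting, specifically the overlap of registers between stages. I would argue as follows.
\begin{itemize}
\item During the tree stages, the live qubits are the $d$ controls, the $n$-bit accumulator, and $S_{\mathrm{tree}}-d$ tree qubits, for $n+S_{\mathrm{tree}}$ in total.
\item During the multiplication, the register $P$ (which counts inside $S_{\mathrm{sb}}$ as its small input) must stay alive. The remaining tree intermediates must also be accounted for. I would handle this by taking $S_{\mathrm{sb}}$ to include $P$ and regarding the leftover tree registers as part of $S_{\mathrm{tree}}$. Alternatively, following \cite{rv24}, one can arrange that the intermediates are reclaimed before the root-level multiplication. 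The live set is then the controls, the accumulator (counted in $S_{\mathrm{sb}}$), and the multiplier workspace, for $d+S_{\mathrm{sb}}$ in total.
\end{itemize}
The ancilla pool therefore needs $\max\{S_{\mathrm{tree}}-d,\,S_{\mathrm{sb}}-n\}$ qubits. Adding the $d+n$ interface qubits gives \refeq{eq:clean-same-order-qubit-count}, since $d+n+\max\{S_{\mathrm{tree}}-d,S_{\mathrm{sb}}-n\}=\max\{n+S_{\mathrm{tree}},d+S_{\mathrm{sb}}\}$.

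The main obstacle is justifying that the tree intermediates do not overlap with the multiplier's workspace beyond these bounds. I would settle it by adopting the convention that $S_{\mathrm{tree}}$ and $S_{\mathrm{sb}}$ are defined as the peak qubit footprints, including shared registers, for each primitive's execution in this context.
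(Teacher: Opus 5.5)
Your compute--multiply--uncompute construction and gate/depth/space accounting are essentially the paper's proof. Your remark that $p(c)\in\ZZ_N^*$ because $\gcd(b_i,N)=1$ is a useful detail the paper leaves implicit.

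The ``main obstacle'' you flag does not actually arise. The selected-product tree is already clean, $\ket{c}\ket{0}\mapsto\ket{c}\ket{p(c)}\ket{0}$ (Lemma~\ref{lem:selected-product-computation}, from \cite[Lemma~5.5]{rv24}), so only $P$ survives into the multiplication stage. That is your ``alternatively'' branch, and it is exactly what the paper's ``workspace can be reused'' relies on. Your other two routes would not give the stated bound. Carrying $\ket{\mathrm{junk}(c)}$ through the multiplication costs $d+S_{\mathrm{sb}}+|\mathrm{junk}|$ qubits in the middle stage. Fixing this by redefining $S_{\mathrm{tree}},S_{\mathrm{sb}}$ ``by convention'' silently changes quantities the paper has already defined. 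So drop both and commit to the clean-tree interface.
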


\begin{proof}
First compute $p(c)$ into the root register $P$ using
the selected-product tree. Apply the small-by-large multiplication from
\refeq{eq:small-by-big-multiplication}, and then reverse the tree.
This restores $P$ and every intermediate register to $\ket{0}$, proving
\refeq{eq:clean-same-order-accumulator-update}.

The tree is executed once forward and once backward, giving the 2 for product-tree computation gates and depth. The two subroutines are executed
sequentially, so their workspace can be reused. In addition to the
root register, we therefore require only the larger of the ancillas for the product-tree computation versus the ancillas for the small-big-multiplication. Including the $d$ control
qubits and the $n$-qubit accumulator gives
\refeq{eq:clean-same-order-qubit-count}.
\end{proof}

We achieve the second lemma below when using the construction of Regev and Ragavan-Vaikuntanathan (\cite{rv24}, Lemma~5.5), for the selected-product tree computation of Regev~\cite{Regev23}.  We rewrite it in our notation: 

\begin{lemma}[Selected-product computation]
\label{lem:selected-product-computation}
There exists a clean quantum circuit implementing 
$\ket{c}\ket{0}
    \longmapsto
    \ket{c}\ket{p(c)}\ket{0},$
whose gate count, total qubit count, and depth satisfy
\begin{equation}
    G_{\mathrm{tree}}=O(d\log^3 d),
    \qquad
    S_{\mathrm{tree}}=\widetilde O(d),
    \qquad
    D_{\mathrm{tree}}=O(d\log^2 d).
    \label{eq:selected-product-resource-bounds}
\end{equation}
\end{lemma}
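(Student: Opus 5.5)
The construction is a balanced binary tree over the $d$ controls. Choose $a_i$ to be $O(\log n)=O(\log d)$-bit integers, as in the preliminaries; with $d=\floor{\sqrt n}$ we have $\log n=\Theta(\log d)$. Index the leaves by $i\in[d]$ and pad to a power of two with leaves fixed to $1$.

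\textbf{Leaves.} At leaf $i$, write the selected value $a_i^{c_i}$ into an $O(\log d)$-bit register. This needs no arithmetic: initialize the register to $1$, then use $c_i$ to control an XOR of the classical constant $a_i\oplus 1$. That is $O(\log d)$ CNOTs per leaf, so $O(d\log d)$ gates in total and constant depth.

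\textbf{Internal levels.} At level $\ell$ of the tree (leaves are $\ell=0$), each node holds a product of $2^\ell$ selected values. Such a product has $O(2^\ell\log d)$ bits, and each level contains $d/2^\ell$ nodes.

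At each node we compute the product of its two children out of place into a fresh register, using a schoolbook (or Karatsuba) reversible multiplier. For $b$-bit inputs this costs $O(b^2)$ gates and $O(b\log b)$ or $O(b)$ depth. The sum over level $\ell$ is then $(d/2^\ell)\cdot O(4^\ell\log^2 d)=O(d\,2^\ell\log^2 d)$ gates. Summing over the $\log d$ levels is dominated by the root and gives $O(d^2\log^2 d)$, which is too much.

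\textbf{Fixing the gate count.} This is the main obstacle: getting the stated $O(d\log^3 d)$ gate count instead of the quadratic schoolbook bound. I would instead cite an $O(b\log^2 b)$-gate (or $O(b\log b\log\log b)$) fast integer multiplier, as used in \cite{Regev23}. Then level $\ell$ costs $(d/2^\ell)\cdot O(2^\ell\log d\cdot\log^2 d)=O(d\log^2 d)$ gates up to the polylog factor of the multiplier. Summing over $\log d$ levels gives $G_{\mathrm{tree}}=O(d\log^3 d)$. All nodes on one level are multiplied in parallel, so the depth is the sum of per-level multiplier depths. With $O(b\log b)$-depth multipliers this is at most $\sum_\ell O(2^\ell\log d\cdot\log d)=O(d\log^2 d)$, again dominated by the root.

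\textbf{Space and cleanliness.} The total register size across all levels is $\sum_\ell (d/2^\ell)\,O(2^\ell\log d)=O(d\log^2 d)$, and the multiplier ancillas are $\widetilde O(d)$, so $S_{\mathrm{tree}}=\widetilde O(d)$. To make the circuit clean (only $\ket{c}\ket{p(c)}$ remain), I would use a Bennett-style cleanup. After the root is computed, uncompute the intermediate levels in reverse order, from level $\log d-1$ down to the leaves, running each level's multiplications backward. This at most doubles the gates and depth and keeps the asymptotics. The result is exactly \cite[Lemma~5.5]{rv24}, restated in our notation.
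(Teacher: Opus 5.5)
\textbf{Gap in the gate count.} Your overall route is the paper's. You build a balanced selected-product tree, run all multiplications on one level in parallel on disjoint registers, and sum the per-level multiplier depths $O(b\log b)$ with $b=O(2^\ell\log d)$ to get $D_{\mathrm{tree}}=O(d\log^2 d)$. Your leaf construction, your Bennett-style uncompute order (level $\log d-1$ down to the leaves, root untouched) and your $\widetilde O(d)$ space count are all fine. The step that fails as written is the gate count, which you get by citing an $O(b\log^2 b)$-gate or $O(b\log b\log\log b)$-gate multiplier.

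\textbf{Why it fails.} With an $O(b\log^2 b)$ multiplier, level $\ell$ costs $(d/2^\ell)\cdot O\bigl(2^\ell\log d\cdot\log^2(2^\ell\log d)\bigr)=O(d\log^3 d)$, not $O(d\log^2 d)$. Summing over the $\log d$ levels then gives $O(d\log^4 d)$. A tighter sum does not rescue this, since $\sum_\ell(\ell+\log\log d)^2=\Theta(\log^3 d)$. With Sch\"onhage--Strassen you get $O(d\log^3 d\log\log d)$ instead. Neither reaches the claimed $G_{\mathrm{tree}}=O(d\log^3 d)$; the phrase ``up to the polylog factor of the multiplier'' hides exactly the factor the statement pins down.

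\textbf{The fix.} Use an $O(b\log b)$-gate multiplier, namely Harvey--van der Hoeven \cite{HvdH21}. This is what \cite[Lemma~5.5]{rv24} and the paper's proof invoke. Level $\ell$ then costs $d\log d\cdot O(\ell+\log\log d)=O(d\log^2 d)$, and the total over $\log d$ levels is $O(d\log^3 d)$.

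The same choice also justifies your depth step. A reversible circuit with $O(b\log b)$ gates trivially has $O(b\log b)$ depth. For an $O(b\log^2 b)$-gate multiplier, the $O(b\log b)$ depth you assume would need a separate argument.
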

\begin{proof}
The gate and space bounds follow from
Ref.~\cite[Lemma~5.5]{rv24}. They do not report depth, so we account for it
here.

In Regev's balanced product tree, each leaf is one of the $a_i$ contributions we want to multiply in, and we accumulate products as we move ``up'' the product tree. 
At level $j$ of Regev's balanced product tree, the operands have
$O(2^j\log d)$ bits. All multiplications within that level act on disjoint
registers and can therefore be performed in parallel. So a compute-copy-uncompute construction for this level has depth $D_{\mathrm{mult}}(2^j\log d)$,
where $D_{\mathrm{mult}}(b)$ denotes the depth required to multiply two $b$-bit integers. The compute-copy-uncompute only changes the depth by a constant factor. Therefore, we must simply add up the depth contribution from each level and choose the same multiplication as Ragavan and Vaikuntanathan. They invoke fast integer multiplication~\cite{HvdH21}, whose reversible multiplication circuit has depth
$D_{\mathrm{mult}}(b)=O(b\log b)$. Since level $j$ has integers with bits $b = 2^j \log d$, we find
\begin{align}
    D_{\mathrm{tree}}
    &=
    O\!\left(
        \sum_{j=0}^{\ceil{\log_2 d}-1}
        2^j\log d\,
        \log\!\left(2^j\log d\right)
    \right) \notag\\
    &=
    O\!\left(
        \log d
        \sum_{j=0}^{\ceil{\log_2 d}-1}
        2^j\left(j+\log\log d\right)
    \right) \notag\\
    &=
    O\!\left(
        \log d\left(d\log d+d\log\log d\right)
    \right) \notag\\
    &=
    O(d\log^2 d).
    \label{eq:selected-product-tree-depth}
\end{align}
\end{proof}

\section{Learning Dual Lattice Points from Streamed Circuit}
\label{sec:iterative-fibonacci-phase-estimation}

We present our protocol for Adaptive Fibonacci IQPE, which addresses the inverse problem underlying our streamed circuit. However, before presenting our protocol and the associated numerics, we also give an analytic upper bound on (1) the number of required Fibonacci levels and (2) the number of required phase probes per level.  The section proceeds as follows: 
\begin{itemize}
    \item  \refsec{sec:regev-required-precision}: We review the precision and success probability required by Regev's
classical lattice postprocessing~\cite[Corollary~4.5]{Regev23} when recovering dual lattice points. 

\item \refsec{sec:fibonacci-estimator-analysis}: We prove it is possible to achieve these requirements using $O(\log n)$ phase probes per level and $\Theta(\sqrt n)$ Fibonacci levels.

\item \refsec{sec:FIQPE-protocol}: Finally, we outline
our ``Slow-Refinement Bayesian Fibonacci IQPE algorithm'' and provide results of numerical experiments.
We show that, for factoring a 2048-bit number, we need on average six probes per Fibonacci level. 
\end{itemize}

\paragraph{Setting up the inverse problem of Adaptive Fibonacci IQPE.}
Our instantiation, of Regev's factoring algorithm as IQPE, proceeds through the Fibonacci levels
$k=2,\ldots,K$ in increasing order. 
At Fibonacci level $k$, 
each probe has one control qubit per dimension $i\in[d]$, and our circuit's phase kickback makes the $i$th control pick up a relative phase $e^{2\pi iF_k\theta_i}$.
We then apply an adaptive rotation $R_{k,s}$, where $s$ labels the probe shot and
$R_{k,s}  =   \bigotimes_{i=1}^d R_z(\phi_{k,i,s}).$ Here $\phi_{k,i,s}$ denotes the rotation
angle applied to the control associated with coordinate \(i\). 
The angle may depend on the outcomes of earlier measurements and may differ
between coordinates and shots. Up to a global phase, for the $s$th shot at the $k$th level, the \(i\)th control is
transformed as
\begin{equation}
    \ket{+}
    \longmapsto
    \frac{\ket{0}+e^{2\pi iF_k\theta_i}\ket{1}}{\sqrt2}
    \longmapsto
    \frac{
        \ket{0}
        +e^{i(2\pi F_k\theta_i+\phi_{k,i,s})}\ket{1}
    }{\sqrt2}.
    \label{eq:rotated-fibonacci-control}
\end{equation}
After applying a Hadamard gate and measuring in the computational basis,
the resulting bit \(b_{k,i,s}\in\{0,1\}\) satisfies
\begin{equation}
    \Pr\!\left[
        b_{k,i,s}=b
        \,\middle|\,
        \theta_i,F_k,\phi_{k,i,s}
    \right]
    =
    \frac{
        1+(-1)^b
        \cos(2\pi F_k\theta_i+\phi_{k,i,s})
    }{2}.
    \label{eq:fibonacci-probe-likelihood}
\end{equation}
We call the measurement of one coordinate in
\refeq{eq:fibonacci-probe-likelihood} a
\textit{scalar Fibonacci phase probe}.
A simultaneous phase probe performs
this measurement for all \(d\) coordinates in parallel and returns the
\(d\)-bit outcome $b_{k,s} = (b_{k,1,s},\ldots,b_{k,d,s}).$
We append these outcomes to a measurement transcript \(\tau\).
Given this transcript, our task is to solve the inverse problem: recover the sampled vector
$\theta\in\mathcal{L}^*/\ZZ^d$ accurately enough that, after $d+4$
independent runs, Regev's classical lattice postprocessing recovers a short
vector in $\mathcal{L}\setminus\mathcal{L}_0$ with constant probability. 
As discussed in Preliminaries (\refsec{sec:preliminaries}), this lets us factor our composite number with constant probability.

\subsection{Precision required for Regev's postprocessing}
\label{sec:regev-required-precision}

Regev's classical postprocessing takes $d+4$ noisy dual-lattice samples.
More precisely, let
$\theta^{(1)},\ldots,\theta^{(d+4)}
    \in\mathcal{L}^*/\ZZ^d$
be independent and uniformly distributed, and let
$w^{(j)}\in[0,1)^d$ approximate $\theta^{(j)}$. Regev's analysis requires
every sample to satisfy
\begin{equation}
    \operatorname{dist}_{\RR^d/\ZZ^d}
    \bigl(w^{(j)},\theta^{(j)}\bigr)
    <
    \frac{\sqrt d}{\sqrt{2}R},
    \qquad j\in[d+4],
    \label{eq:regev-required-vector-precision}
\end{equation}
where $\operatorname{dist}_{\RR^d/\ZZ^d}$ denotes distance on the
$d$-dimensional torus.
We use the torus distance because phases are defined modulo one, so coordinates that differ by an integer
represent the same phase. The torus distance between
$x,y\in\RR^d/\ZZ^d$ takes the form
\begin{equation}
    \operatorname{dist}_{\RR^d/\ZZ^d}(x,y)
    \coloneqq
    \min_{z\in\ZZ^d}\norm{x-y-z}_2.
    \label{eq:torus-distance}
\end{equation}
In other words, we may shift each coordinate of $x-y$ by an integer before
measuring its Euclidean length. 
Regev proves that, provided all $d+4$ estimates satisfy
\refeq{eq:regev-required-vector-precision}, his classical
postprocessing recovers a vector in
$\mathcal{L}\setminus\mathcal{L}_0$ with constant probability
\cite[Corollary~4.5 and Section~5]{Regev23}.

\begin{theorem}[Regev's noisy-sample recovery~{\cite[Corollary~4.5 and
Section~5]{Regev23}}]
\label{thm:regev-noisy-sample-recovery}
Assume \refconj{conj:short-nontrivial-relation}. Let
$\theta^{(1)},\ldots,\theta^{(d+4)}$ be independent uniform elements of
$\mathcal{L}^*/\ZZ^d$. If their estimates $w^{(1)},\ldots,w^{(d+4)}$ satisfy
\refeq{eq:regev-required-vector-precision}, then Regev's classical
postprocessing recovers a vector in $\mathcal{L}\setminus\mathcal{L}_0$ with
probability at least $1/4$ over the sampled dual-lattice points.
Consequently, it recovers a nontrivial factor of $N$.
\end{theorem}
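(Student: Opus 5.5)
This statement is essentially a restatement of Regev's Corollary~4.5 together with the factoring step of his Section~5, so the plan is to reduce to those results rather than reprove the lattice analysis. First, I will check that our parameter choices match Regev's hypotheses. We have $d=\floor{\sqrt n}$, $a_i=b_i^2\bmod N$ with small bases, $R=2^{(A+o(1))\sqrt n}$ with $A=C+2+o(1)$, and $D$ a power of two in $[2\sqrt d R,4\sqrt d R]$; these agree with the choices in \cite{Regev23}. The precision condition \refeq{eq:regev-required-vector-precision}, $\operatorname{dist}_{\RR^d/\ZZ^d}(w^{(j)},\theta^{(j)})<\sqrt d/(\sqrt2 R)$, is exactly the per-sample noise bound Regev's lattice-reduction corollary assumes. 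This is the bound his Gaussian-state circuit produces after rounding to the $1/D$ grid.

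Second, I will set up the postprocessing as Regev does. Form the $(d+4)$-sample lattice $\mathcal{L}'$ generated by $\ZZ^d$-coordinates scaled by a large parameter together with the scaled estimates $w^{(j)}$ modulo one. Then run LLL, which returns a basis whose short vectors span all vectors of $\mathcal{L}$ of norm at most $T=2^{C\sqrt n}$. Corollary~4.5 of \cite{Regev23} gives this, with probability at least $1/4$ over the independent uniform $\theta^{(j)}\in\mathcal{L}^*/\ZZ^d$, whenever every noise vector is within the stated torus distance. The key observation is that the corollary only consumes the estimates $w^{(j)}$ and the promise on their distance to $\theta^{(j)}$. It does not depend on how they were produced, so the fact that our samples come from Fibonacci IQPE rather than a Gaussian-state QFT is irrelevant at this stage, provided the $\theta^{(j)}$ are independent and uniform.

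Third, I combine this with \refconj{conj:short-nontrivial-relation}. Some vector of $\mathcal{L}\setminus\mathcal{L}_0$ has norm at most $T$, so it lies in the span of the recovered short vectors. Since $\mathcal{L}_0$ is a sublattice, not all recovered basis vectors can lie in $\mathcal{L}_0$; otherwise their integer span would be contained in $\mathcal{L}_0$. Checking each recovered vector $z$, by computing $\prod b_i^{z_i}\bmod N$ and testing it against $\pm1$, therefore finds a $z\in\mathcal{L}\setminus\mathcal{L}_0$. The argument in the Preliminaries then gives a nontrivial square root $b$ of $1$, and $\gcd(b-1,N)$ is a nontrivial factor.

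The main obstacle is the bookkeeping in the first step. Regev states his corollary with the noise bound expressed through $D$ and $R$ in a slightly different normalization, and the torus distance must be matched to his requirement of $\sqrt d/(\sqrt2R)$ accuracy. I would resolve this by carefully aligning constants with \cite[Corollary~4.5]{Regev23}, then treat the result as cited rather than rederived.
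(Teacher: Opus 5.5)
Your proposal is correct and matches the paper. The paper gives no independent proof: it imports this result directly from Regev's Corollary~4.5 and Section~5 as the target for its phase estimator, relying on the same observation you make, namely that the postprocessing uses only the estimates and their precision guarantee relative to independent uniform $\theta^{(j)}$, regardless of how those estimates were produced. Your added Section~5 step is also sound: some recovered generator must lie outside the sublattice $\mathcal{L}_0$, and testing $\prod_i b_i^{z_i}\bmod N$ against $\pm1$ detects it.
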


We import this consequence of Regev's postprocessing analysis as
the target for our phase estimator.
While
the analysis in  Ref~\cite{Regev23}'s Sections 4 and 5 assumes that all estimates satisfy \refeq{eq:regev-required-vector-precision}, this is, of course, not always guaranteed. 
Regev shows his quantum sampling procedure satisfies this desired estimation error, across his $d+4$ independent
executions, except
with probability $1/\poly(d)$~\cite[Sections~3 and~5]{Regev23}.  Notice that this estimation error guarantee is separate from the constant failure probability in
Regev's lattice postprocessing.
We match this estimation error guarantee for the complete collection of estimates.
We require all our
$d+4$ estimates to satisfy \refeq{eq:regev-required-vector-precision}
jointly with probability at least $1-1/\poly(d)$:
\begin{equation}
    \Pr\!\left[
        \operatorname{dist}_{\RR^d/\ZZ^d}
        \bigl(w^{(j)},\theta^{(j)}\bigr)
        < \frac{\sqrt d}{\sqrt{2}R}
        \ \text{for every }j\in[d+4]
    \right]
    \geq 1-\frac{1}{\poly(d)}.
    \label{eq:regev-batch-precision-target}
\end{equation}

Although each probe measures all $d$ coordinates simultaneously, its $i$th
outcome depends only on the scalar phase $\theta_i$. To simplify our analysis, we therefore only need to analyze
the scalar estimate $w_i^{(j)}$ of $\theta_i^{(j)}$, where $i\in[d]$ indexes
the coordinate and $j\in[d+4]$ indexes the independent Regev sample. 
Therefore,
we will state our iterative phase estimation's formal target in terms of learning a single coordinate. 
Our
goal then, when analyzing Fibonacci IQPE, is to choose the accuracy and failure probability of each scalar
estimate so that all $d+4$ vector estimates simultaneously satisfy
\refeq{eq:regev-batch-precision-target}.

\begin{proposition}[Scalar phase-estimation target for Regev's algorithm]
\label{thm:regev-phase-estimation-target}
Let $p$ be any polynomial. For each $j\in[d+4]$, let
$\theta^{(j)}$ be an independent uniform element of
$\mathcal{L}^*/\ZZ^d$. Suppose that, for every $i\in[d]$ and $j\in[d+4]$,
our scalar phase estimator outputs $w_i^{(j)}$ satisfying
\begin{equation}
    \Pr\!\left[
        \operatorname{dist}_{\RR/\ZZ}
        \bigl(w_i^{(j)},\theta_i^{(j)}\bigr)
        \geq
        \frac{1}{\sqrt{2}R}
    \right]
    \leq
    \frac{1}{d(d+4)p(d)}.
    \label{eq:regev-required-scalar-failure-probability}
\end{equation}
Then, with probability at least $1-1/p(d)$, all $d+4$ estimated vectors
$w^{(j)}=(w_1^{(j)},\ldots,w_d^{(j)})$ simultaneously satisfy the precision requirement of Regev's
classical postprocessing:
\begin{equation}
    \operatorname{dist}_{\RR^d/\ZZ^d}
    \bigl(w^{(j)},\theta^{(j)}\bigr)
    <
    \frac{\sqrt d}{\sqrt{2}R}
    .
    \label{eq:regev-vector-precision-from-scalar-estimates}
\end{equation}
\end{proposition}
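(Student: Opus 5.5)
The argument is a union bound over all $d(d+4)$ scalar estimates, followed by a deterministic coordinatewise-to-vector conversion of the torus distance.

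\textbf{Step 1 (bad events).} For each pair $(i,j)\in[d]\times[d+4]$, let $E_{i,j}$ be the event that $\operatorname{dist}_{\RR/\ZZ}\bigl(w_i^{(j)},\theta_i^{(j)}\bigr)\geq \frac{1}{\sqrt2 R}$. The hypothesis \refeq{eq:regev-required-scalar-failure-probability} bounds each $\Pr[E_{i,j}]$ by $1/(d(d+4)p(d))$. The union bound needs no independence assumptions between coordinates or samples. Summing over the $d(d+4)$ pairs gives
\[
\Pr\Bigl[\textstyle\bigcup_{i,j}E_{i,j}\Bigr]\leq \frac{1}{p(d)}.
\]
So with probability at least $1-1/p(d)$, every scalar estimate has error strictly less than $\frac{1}{\sqrt2R}$. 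Independence of the $\theta^{(j)}$ is not needed here; it is only used later by \refthm{thm:regev-noisy-sample-recovery}.

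\textbf{Step 2 (torus distance separates over coordinates).} On this good event, fix $j$. For each $i$, choose an integer $z_i$ attaining the scalar torus distance, that is, $\bigl|w_i^{(j)}-\theta_i^{(j)}-z_i\bigr|=\operatorname{dist}_{\RR/\ZZ}\bigl(w_i^{(j)},\theta_i^{(j)}\bigr)$. The minimum is attained because the set of values $|x-z|$ over $z\in\ZZ$ is discrete.

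Set $z=(z_1,\ldots,z_d)\in\ZZ^d$. By the definition \refeq{eq:torus-distance},
\[
\operatorname{dist}_{\RR^d/\ZZ^d}\bigl(w^{(j)},\theta^{(j)}\bigr)^2\leq \sum_{i=1}^d\bigl|w_i^{(j)}-\theta_i^{(j)}-z_i\bigr|^2 < d\cdot\frac{1}{2R^2}.
\]
The inequality is strict because each of the $d\geq1$ terms is strictly smaller than $\frac{1}{2R^2}$.

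\textbf{Step 3 (conclusion).} Taking square roots gives $\operatorname{dist}_{\RR^d/\ZZ^d}\bigl(w^{(j)},\theta^{(j)}\bigr)<\sqrt d/(\sqrt2R)$ for every $j\in[d+4]$ simultaneously. This is \refeq{eq:regev-vector-precision-from-scalar-estimates}.

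There is no real obstacle in this argument. The only point needing care is the strictness of the inequalities: the hypothesis bounds the probability of a $\geq$ event, so on its complement each coordinate error is strict, and the vector bound is therefore also strict.
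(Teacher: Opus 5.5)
Your proposal is correct and follows the paper's proof: a union bound over the $d(d+4)$ scalar failure events, then the coordinatewise decomposition of the torus distance to obtain the strict vector bound. The paper states the decomposition as an equality (the minimization over $\ZZ^d$ separates across coordinates). You only need the inequality from choosing each coordinate's minimizing integer, and that suffices.
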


\begin{proof}
There are $d(d+4)$ scalar estimates. By a union bound, the probability that
any one of them has error at least $\frac{1}{\sqrt{2}R}$ is at most
\begin{equation}
    d(d+4)\frac{1}{d(d+4)p(d)}
    =
    \frac{1}{p(d)}.
\end{equation}
Suppose that no such failure occurs. Then for every $j\in[d+4]$, the minimization
inside the torus distance of \refeq{eq:torus-distance} separates across coordinates. Taking the square root of the expression below gives
\refeq{eq:regev-vector-precision-from-scalar-estimates}
and completes the claim.
\begin{align}
    \operatorname{dist}_{\RR^d/\ZZ^d}
    \bigl(w^{(j)},\theta^{(j)}\bigr)^2
    &=
    \sum_{i=1}^d
    \operatorname{dist}_{\RR/\ZZ}
    \bigl(w_i^{(j)},\theta_i^{(j)}\bigr)^2
    \notag\\
    &<
    d\left(\frac{1}{\sqrt{2}R}\right)^2.
\end{align}
\end{proof}

\refprop{thm:regev-phase-estimation-target} reduces the remaining
estimation task to $d(d+4)$ scalar inverse problems: one for each coordinate
$i\in[d]$ of each Regev run $j\in[d+4]$. 
As such, our analysis can abstract away the lattice problem and just consider
the scalar inverse problem for general precision $\varepsilon$ and failure
probability $\delta$.
Recall that Regev defines $d=\floor{\sqrt n}$ and $R=2^{(A+o(1))\sqrt n}$.
The required scalar precision and failure probability are therefore $\varepsilon=\varepsilon_R$ and $\delta=\delta_R$, where 
\begin{equation}
    \varepsilon_R
    \coloneqq
    \frac{1}{\sqrt{2}R}
    =
    2^{-(A+o(1))\sqrt n},
    \label{eq:regev-required-scalar-precision}
\end{equation}
\begin{equation}
    \delta_R
    \coloneqq
    \frac{1}{d(d+4)p(d)}
    =
    \frac{1}{\poly(d)}.
    \label{eq:regev-required-scalar-failure}
\end{equation}
By \refprop{thm:regev-phase-estimation-target}, any scalar estimator
achieving these two targets supplies all $d+4$ samples required by Regev's
postprocessing. In the next section, we prove
that these targets can be achieved using $O(\log n)$ probes per Fibonacci
level.

\subsection{Upper bound: \(O(\sqrt n\log n)\) phase probes per Regev run}
\label{sec:fibonacci-estimator-analysis}

Here we upper bound the number of phase probes required to solve the scalar inverse problem.
We first establish a general guarantee in terms of the final Fibonacci level
\(K\) and the number \(M_{\mathrm{lev}}\) of probes taken at each level. The value \(F_K\)
determines the final resolution, while \(M_{\mathrm{lev}}\) controls the probability of making a bad estimate.
We then use the resulting guarantee to choose
\(K=\Theta(\sqrt n)\) and \(M_{\mathrm{lev}}=O(\log n)\), such that the resulting estimate
meets the precision $\varepsilon_R$ and failure-probability $\delta_R$ requirements of Regev's postprocessing. As a result, one complete Regev run
uses \(KM_{\mathrm{lev}} = O(\sqrt n\log n)\) simultaneous phase probes. Note that no
additional factor of \(d\) appears in this probe count because all \(d\) scalar estimators are executed in parallel.

\paragraph{Upper bound on the required phase probes.}
We establish the upper bound by construction, using a deliberately simple estimator whose
purpose is only to prove the desired asymptotic guarantee. 
Fix one of the \(d+4\) independent Regev runs, indexed by \(j\in[d+4]\),
and one coordinate \(i\in[d]\). To simplify notation, write
\begin{equation}
    \theta\coloneqq\theta_i^{(j)}
    \qquad\text{and}\qquad
    w\coloneqq w_i^{(j)}.
\end{equation}
The scalar inverse problem is to estimate the unknown phase
\(\theta\in\RR/\ZZ\).

Our constructed estimator uses $2M_{\mathrm{lev}}$ probes at the first Fibonacci level and $M_{\mathrm{lev}}$ probes at each subsequent level.
At the first level \(F_2=1\), the estimator uses two measurement
quadratures to locate \(\widehat\theta_{2} \approx \theta\) on the unit circle. 
At every
subsequent level \(k\), it uses the preceding estimate
\(\widehat\theta_{k-1}\) to isolate and learn the current, unknown residual phase $\theta-\widehat\theta_{k-1}$. 
The outcomes of the \(M_{\mathrm{lev}}\) probes at this level $k$ then let us apply a correction to \(\widehat\theta_{k-1}\), obtaining a more precise estimate that we call \(\widehat\theta_k\). 
The estimator processes its outcomes online and is classically efficient.
\refapp{app:asymptotic_factoring_proof} gives the complete
construction and proves its accuracy, probe complexity, and classical
efficiency. The resulting guarantee is stated below.

\begin{theorem}[Accuracy and probe complexity of slow-refinement Fibonacci phase estimation]
\label{thm:slow-refinement-fibonacci-phase-estimation}
For every \(\theta\in\RR/\ZZ\) and integers \(K\geq2\) and \(M_{\mathrm{lev}}\geq1\),
there exists an adaptive phase-estimation procedure using exactly \(KM_{\mathrm{lev}}\)
scalar Fibonacci phase probes that outputs an estimate \(w\in\RR/\ZZ\)
satisfying
\begin{equation}
    \Pr\!\left[
        \operatorname{dist}_{\RR/\ZZ}(w,\theta)
        >
        \frac{1}{8F_K}
    \right]
    \leq
    2K e^{-M_{\mathrm{lev}}/24}.
    \label{eq:slow-refinement-scalar-failure-probability}
\end{equation}
\end{theorem}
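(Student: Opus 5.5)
We will build the estimator level by level and maintain the invariant
\[
\operatorname{dist}_{\RR/\ZZ}(\widehat\theta_k,\theta)\le \frac{1}{8F_k}.
\]
The final output is $w=\widehat\theta_K$. Each level may break the invariant only with probability $O(e^{-M_{\mathrm{lev}}/24})$, and a union bound over levels gives \refeq{eq:slow-refinement-scalar-failure-probability}.

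The single tool is Hoeffding's inequality applied to the $\pm1$ variables $(-1)^{b}$. By \refeq{eq:fibonacci-probe-likelihood} these have mean $\cos(2\pi F_k\theta+\phi)$. For $M_{\mathrm{lev}}$ probes sharing the same angle $\phi$, the empirical mean deviates from this expectation by at least $t=1/\sqrt{12}\approx 0.289$ with probability at most $2e^{-M_{\mathrm{lev}}t^2/2}=2e^{-M_{\mathrm{lev}}/24}$. This is where the constant $24$ comes from. Choosing $\phi$ shifted by $-\pi/2$ turns the cosine quadrature into a sine quadrature.

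\emph{First level ($F_2=1$).} We spend $M_{\mathrm{lev}}$ probes at $\phi=0$ and $M_{\mathrm{lev}}$ probes at $\phi=-\pi/2$. This gives estimates $\hat c$ of $\cos 2\pi\theta$ and $\hat s$ of $\sin 2\pi\theta$, and we set $\widehat\theta_2=\operatorname{atan2}(\hat s,\hat c)/2\pi$.

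Outside a failure event of probability at most $4e^{-M_{\mathrm{lev}}/24}$, both deviations are below $1/\sqrt{12}$. The vector $(\hat c,\hat s)$ is then within distance $1/\sqrt6<0.41$ of a unit vector. Hence its angle differs from $2\pi\theta$ by at most $\arcsin(0.41)<\pi/4$ radians, so $\operatorname{dist}(\widehat\theta_2,\theta)<1/8=1/(8F_2)$.

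\emph{Level $k\ge3$.} Given $\widehat\theta_{k-1}$, we use angle $\phi=-2\pi F_k\widehat\theta_{k-1}-\pi/2$ for all $M_{\mathrm{lev}}$ probes. Let $e$ be the representative of $\theta-\widehat\theta_{k-1}$ of smallest absolute value. Since $F_k\le 2F_{k-1}$, the invariant gives that the residual $x\coloneqq F_k e$ satisfies $|x|\le 1/4$. The probes then estimate $\sin 2\pi x$, and on $[-1/4,1/4]$ this map is monotone, so no quadrature ambiguity arises.

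We clip the empirical mean $\hat s$ to $[-1,1]$ and set
\[
\widehat\theta_k=\widehat\theta_{k-1}+\frac{\arcsin(\hat s)}{2\pi F_k}.
\]
The key deterministic step is the following. If $|\hat s-\sin 2\pi x|<1/\sqrt{12}$, then $|\arcsin(\hat s)/2\pi - x|<1/8$. The worst case is at the endpoints $x=\pm1/4$, where $1-\sin(\pi/4)\approx0.293>0.289$; in the interior, $\arcsin$ has slope near $1$, so the margin is larger. Dividing by $F_k$ gives $\operatorname{dist}(\widehat\theta_k,\theta)<1/(8F_k)$, so the invariant is restored. The failure probability of this level is at most $2e^{-M_{\mathrm{lev}}/24}$.

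\emph{Assembling.} The total probe count is $2M_{\mathrm{lev}}+(K-2)M_{\mathrm{lev}}=KM_{\mathrm{lev}}$. The total failure probability is at most $4e^{-M_{\mathrm{lev}}/24}+2(K-2)e^{-M_{\mathrm{lev}}/24}\le 2Ke^{-M_{\mathrm{lev}}/24}$.

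The step I expect to be the main obstacle is the monotonicity and slack check at level $k\ge3$. One must verify that the bound $F_k/F_{k-1}\le2$ keeps the residual inside the window $[-1/4,1/4]$ where the single sine quadrature is invertible. One must also check that the worst-case $\arcsin$ slack exceeds the Hoeffding radius $1/\sqrt{12}$; the margin there is tight ($0.293$ versus $0.289$). Wraparound on $\RR/\ZZ$ is handled by always taking minimal representatives.

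Classical efficiency is immediate: each level performs $O(M_{\mathrm{lev}})$ arithmetic operations on $O(K)$-bit numbers and computes one $\arcsin$ to $O(1)$ precision, since only an $O(1)$ error budget per level is needed.
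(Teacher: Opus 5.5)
Your proposal is correct and follows the paper's proof essentially step for step. It uses the same two-quadrature bootstrap at level $2$, the same fixed quadrature rotation (up to a sign convention) with an $\arcsin$ correction at each later level, and the same invariant $\operatorname{dist}_{\RR/\ZZ}(\widehat\theta_k,\theta)\le 1/(8F_k)$ kept in the principal branch via $F_k\le 2F_{k-1}$, closed out with Hoeffding and a union bound. The differences are only that you use a single Hoeffding radius $1/\sqrt{12}$, which yields the constant $24$ exactly (the paper uses radii $1/2$ and $1-1/\sqrt2$), and that your endpoint/slope heuristic for the key deterministic step is what the paper's uniform sine-separation lemma makes rigorous via $|\sin X-\sin Y|=2\sin\frac{|X-Y|}{2}\cos\frac{X+Y}{2}\ge 1-\cos|X-Y|$ on $[-\pi/2,\pi/2]$.
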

\begin{proof}
    See \refapp{app:asymptotic_factoring_proof}.
\end{proof}

\vspace{3mm}

We will give a few sentences about the proof idea. 
\refthm{thm:slow-refinement-fibonacci-phase-estimation} proceeds by induction over the Fibonacci levels. 
The
analysis maintains the below invariant for the estimator $\widehat\theta_k$ at level $k$
\begin{equation}
    \operatorname{dist}_{\RR/\ZZ}
    (\widehat\theta_k,\theta)
    \leq
    \frac{1}{8F_k}.
\end{equation}
The probability that a single level violates the invariant decreases
exponentially with \(M_{\mathrm{lev}}\). We use Hoeffding's inequality to
bound the probability that a single update violates the invariant by
\(2e^{-M_{\mathrm{lev}}/24}\), and then the theorem follows from a union bound over the levels.

\paragraph{Achieving Regev's required precision.}

We choose the final level $K$ and the per-level probe budget $M_{\mathrm{lev}}$ so
that \refthm{thm:slow-refinement-fibonacci-phase-estimation} meets
the scalar accuracy $\varepsilon_R$ and failure-probability $\delta_R$ targets of Regev's postprocessing. Let
\begin{equation}
    K
    \coloneqq
    \min\!\left\{
        k\geq2:
        F_k>\frac{R}{4\sqrt2}
    \right\}
    \label{eq:regev-final-fibonacci-level}
\end{equation}
and
\begin{equation}
    M_{\mathrm{lev}}
    \coloneqq
    \left\lceil
        24\log\!\bigl(2K d(d+4)p(d)\bigr)
    \right\rceil,
    \label{eq:regev-per-level-fibonacci-probes}
\end{equation}
where $p$ is any polynomial.
The choice of $K$ makes the final accuracy
$1/(8F_K)$ smaller than Regev's required scalar accuracy
$\varepsilon_R = 1/(\sqrt2R)$, while the choice of $M_{\mathrm{lev}}$ makes the failure probability of
each scalar estimate at most $\delta_R = 1/\textsf{poly}(d)$. 
The following corollary
combines these guarantees across all $d$ coordinates;  for a run of Regev's algorithm, the total number of
probes is $O(\sqrt n\log n)$

\begin{corollary}[Fibonacci phase estimation for Regev's samples]
\label{cor:fibonacci-phase-estimation-for-regev}
Let \(p\) be any polynomial, and choose \(K\) and \(M_{\mathrm{lev}}\) according to
\refeqs{eq:regev-final-fibonacci-level}
and~\eqref{eq:regev-per-level-fibonacci-probes}, respectively.
Then the \(d+4\) estimated vectors satisfy Regev's precision requirement
simultaneously with probability at least
$1-\frac{1}{p(d)}$.
Each Regev run uses
    $KM_{\mathrm{lev}}
    =
    O(\sqrt n\log n)$
simultaneous phase probes.
\end{corollary}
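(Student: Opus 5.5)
The plan is to combine \refthm{thm:slow-refinement-fibonacci-phase-estimation} with \refprop{thm:regev-phase-estimation-target}. First we check the scalar accuracy. By the definition of $K$ in \refeq{eq:regev-final-fibonacci-level} we have $F_K>R/(4\sqrt2)$, so
\[
\frac{1}{8F_K}<\frac{4\sqrt2}{8R}=\frac{1}{\sqrt2 R}=\varepsilon_R .
\]
Hence the event $\operatorname{dist}_{\RR/\ZZ}(w,\theta)\ge \varepsilon_R$ is contained in the event $\operatorname{dist}_{\RR/\ZZ}(w,\theta)>1/(8F_K)$.

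Next we check the failure probability. \refthm{thm:slow-refinement-fibonacci-phase-estimation} bounds the probability of the latter event by $2Ke^{-M_{\mathrm{lev}}/24}$. With the choice of $M_{\mathrm{lev}}$ in \refeq{eq:regev-per-level-fibonacci-probes}, this is at most $1/(d(d+4)p(d))=\delta_R$. One small point needs care: the paper states that logarithms are base two, while the exponent is natural. Since $e^{-\log_2 x}\le 1/x$ for $x\ge1$ (because $\log_2 x\ge \ln x$), the bound still holds.

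This bound holds for every fixed $\theta$, so it also holds for each scalar coordinate $\theta_i^{(j)}$ of each independent uniform sample. The reason is that the $i$th outcome of every simultaneous probe depends only on $\theta_i$, and the $d$ scalar estimators run in parallel on these outcomes. The hypothesis \refeq{eq:regev-required-scalar-failure-probability} of \refprop{thm:regev-phase-estimation-target} is therefore met. That proposition's union bound then gives that all $d+4$ vector estimates satisfy \refeq{eq:regev-vector-precision-from-scalar-estimates} with probability at least $1-1/p(d)$.

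For the probe count, each simultaneous probe serves all $d$ coordinates at once, so a run uses $KM_{\mathrm{lev}}$ simultaneous probes. To bound $K$, we use $F_k\ge\phi^{k-2}$ and minimality of $K$. Since $F_{K-1}\le R/(4\sqrt2)$, we get $K=O(\log R)=O(\sqrt n)$, because $\log R=(A+o(1))\sqrt n$. For $M_{\mathrm{lev}}$, the argument $2Kd(d+4)p(d)$ is $\poly(n)$ because $d=\floor{\sqrt n}$ and $K=O(\sqrt n)$, so $M_{\mathrm{lev}}=O(\log n)$. Multiplying the two gives $O(\sqrt n\log n)$.

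The main obstacle I expect is bookkeeping rather than substance. The $2M_{\mathrm{lev}}$ probes at the first level must be absorbed into the stated count of exactly $KM_{\mathrm{lev}}$ probes, which works if the first level is charged as two of the $K$ levels. The base of the logarithm also has to be handled as described above.
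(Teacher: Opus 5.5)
Your proposal is correct and follows essentially the same route as the paper. You get $1/(8F_K)<\varepsilon_R$ from the choice of $K$, use the choice of $M_{\mathrm{lev}}$ to bound $2Ke^{-M_{\mathrm{lev}}/24}$ by $1/(d(d+4)p(d))$, invoke the union-bound proposition, and count $K=O(\sqrt n)$ and $M_{\mathrm{lev}}=O(\log n)$. Your handling of the logarithm base (via $\log_2 x\ge \ln x$ for $x\ge 1$) and of the $2M_{\mathrm{lev}}$ level-$2$ probes fills in details the paper's proof leaves implicit.
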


\begin{proof}
Following
\refthm{thm:slow-refinement-fibonacci-phase-estimation},
the choice of $K$, where $\frac{1}{8F_K}
    <
    \frac{1}{\sqrt2R}$, gives an accuracy guarantee stronger
than the scalar accuracy required by
\refprop{thm:regev-phase-estimation-target}.
Moreover, the choice of $M_{\mathrm{lev}}$ gives
\begin{equation}
2Ke^{-M_{\mathrm{lev}}/24}
\leq
\frac{1}{d(d+4)p(d)}.
\end{equation}
Therefore, as required, each scalar estimate fails to meet Regev's required accuracy
with probability at most $\frac{1}{d(d+4)p(d)}$.
\refprop{thm:regev-phase-estimation-target} then gives the joint
precision guarantee for all $d(d+4)$ scalar estimates. 
All \(d+4\) vector estimates, satisfy
Regev's precision requirement simultaneously with probability at least
\(1-1/p(d)\).

It remains to count the probes. Since
$F_K=\Theta(\varphi^K)$ and
$R=2^{(A+o(1))\sqrt n}$, $K
    =
    \Theta(\log R)
    =
    \Theta(\sqrt n).$
Since $d=\floor{\sqrt n}$ and $p$ is any polynomial,
\refeq{eq:regev-per-level-fibonacci-probes} gives
    $M_{\mathrm{lev}}
    =
    O(\log n).$
Every phase probe estimates all $d$ coordinates simultaneously, so one
Regev run uses $KM_{\mathrm{lev}}
    =
    O(\sqrt n\log n)$
probes.
\end{proof}

Thus, our streaming setup with the parameters defined here, coupled with Regev's postprocessing approach, recovers a lattice relation with the same guarantees as Regev \cite{Regev23}.

\paragraph{Aside: The Gaussian input state of Regev's algorithm is not necessary.}
Our work shows a discretized Gaussian exponent-register state is not
necessary, to obtain dual-lattice estimates with Regev's required precision
and success probability.
Combining
\refprop{prop:streamed-regev-correctness} with
\refcor{cor:fibonacci-phase-estimation-for-regev},
our streamed circuit obtains the same algorithmically relevant guarantee
without preparing this Gaussian state.
\refprop{prop:streamed-regev-correctness} shows that, for the
uniformly sampled phase vector
\(\theta\in\mathcal L^*/\ZZ^d\), the streamed circuit produces the required
Fibonacci phase probes while preserving the corresponding traveling
eigenstate. Each probe uses a fresh \(d\)-qubit control register initialized
in \(\ket{+}^{\otimes d}\), followed by phase kickback, adaptive rotations,
measurement, and reset.
\refcor{cor:fibonacci-phase-estimation-for-regev} then shows that the
resulting measurement transcript determines \(\theta\) to the precision
required by Regev's postprocessing.

\subsection{Adaptive Bayesian Fibonacci IQPE}
\label{sec:FIQPE-protocol}

The preceding subsection proves our asymptotic probe bound using a
deliberately simple analytic estimator. That estimator is useful for
establishing that Fibonacci phase estimation is efficient, but it is not
intended to give a concretely low probe count. We therefore introduce a
separate adaptive Bayesian estimator and study its performance numerically.
Throughout this subsection, we consider one scalar inverse problem. We fix a
coordinate \(i\in[d]\) and an independent Regev run \(j\in[d+4]\), and write
\begin{equation}
    \theta\coloneqq\theta_i^{(j)}
    \qquad\text{and}\qquad
    w\coloneqq w_i^{(j)}.
\end{equation}
We likewise suppress these fixed indices from the scalar probe notation,
writing \(\phi_{k,s}\) and \(b_{k,s}\) for the rotation angle and measurement
outcome of the \(s\)th probe at level \(k\).

Finding the right rate, for climbing the Fibonacci levels, is an art. A level-\(k\) probe observes the wrapped phase
\(F_k\theta\bmod 1\). Increasing \(k\) makes the probe more sensitive to small
changes in \(\theta\), but it also introduces more possible inverse branches
around the phase circle. If we advance too quickly, several widely separated
values of \(\theta\) may remain consistent with the observed wrapped phase. If we advance too slowly, we require many low-\(k\) probes to resolve fine distinctions in \(\theta\), which would be easily distinguishable with only a few high-\(k\) probes.
Our task therefore boils down to deciding when the current posterior has
localized \(\theta\) sufficiently for us to begin using the next Fibonacci
exponent.

Our estimator resolves this tension through slow refinement. It
continues probing at the current Fibonacci level until the likely phase
region is sufficiently localized, advances by one level, and then increases
the resolution of its posterior representation only where that additional
resolution is useful. The estimator has three main ingredients:
\begin{enumerate}
    \item \textit{A compact adaptive posterior.}
    We represent the current posterior by a weighted collection of cells on
    the phase circle. Initially, these cells have equal width and equal
    weight. Each probe updates their weights by Bayes' rule. Upon entering a
    new Fibonacci level, the estimator bisects cells that are both wide at
    the new Fibonacci scale and likely under the current posterior. Thus, we
    refine the regions in which \(\theta\) is likely to lie, without
    maintaining a uniformly fine grid over the entire phase circle. We cap
    the number of live cells at \(B_{\max}=\poly(n)\), keeping the classical
    representation polynomial in size.

    \item \textit{Adaptive probes and level transitions.}
    The first two level-\(2\) probes use rotation angles \(0\) and
    \(-\pi/2\), respectively, thereby sampling the cosine and sine
    quadratures of the unknown phase. Every subsequent probe chooses the rotation angle
    that the current posterior predicts will maximize the expected information gain.

    At each nonfinal level \(k\), the estimator continues taking probes until
    a transition guard permits it to advance. For a transition parameter
    \(\delta_{\mathrm{move}}\), let \(A^\star\) be a shortest circular arc
    containing at least \(1-\delta_{\mathrm{move}}\) of the posterior mass,
    when the candidate phases are dilated and wrapped around the unit circle via \(F_k\theta\bmod 1\). The
    estimator advances to level \(k+1\) only when
    \begin{equation}
        \frac{F_{k+1}}{F_k}\lvert A^\star\rvert
        \leq \frac{1}{2}.
    \end{equation}
    In other words, this guard checks that, when rescaling to the next Fibonacci level, the
    high-posterior-mass region occupies at most half of the next, $F_{k+1}$-dilated unit circle.
    Posterior mass outside \(A^\star\) is retained, so passing the guard does
    not guarantee that every modular alias has been eliminated. If the guard
    does not pass, the estimator takes another adaptively chosen probe,
    updates the posterior, and checks again. As a result, the number of
    probes taken at each level can vary and is determined by the observed outcomes.

    \item \textit{Final certification.}
    At the final level \(K\), the cell grid remains fixed and the estimator
    continues probing until certifying the desired accuracy $\varepsilon_R$. For a
    certification parameter \(\delta_{\mathrm{final}}\), it certifies when an arc, of 
    length at most \(2\varepsilon_R\), carries posterior mass at least
    \(1-\delta_{\mathrm{final}}\). It then returns the midpoint of that
    arc as \(w\). If a measurement cap or the live-cell cap is
    exhausted before the relevant condition is met, the estimator returns
    \textsc{failure}.
\end{enumerate}

The complete estimator, including its refinement, transition, and
certification rules, appears in
\refapp{app:numerics-iqpe-fibonacci}. Next, our experiments below study this
scalar estimator and report scalar probe counts.

\subsubsection{Numerical experiments}
\label{sec:fibonacci-estimator-numerics}

We now simulate our adaptive Bayesian Fibonacci IQPE estimator.  Regev's
postprocessing requires sufficiently accurate estimates of every coordinate
of the sampled dual-lattice vectors. Therefore, we must determine the
number of scalar phase probes needed to achieve the accuracy required for this postprocessing. We must also assess the protocol's success probability: how often the protocol returns an estimate that is
accurate \textit{and} that the protocol certifies as being correct.  
Our simulations cover up to cryptographically relevant problem instances of $1024$-, $2048$-, and $4096$-bit numbers.
Our main results are as follows:
\begin{itemize}
    \item Probes needed per Fibonacci level:  We take $1000$ trials per value of $n$, ranging from $n=4$ to $n=4096$. Across these experiments, our protocol uses  approximately six scalar probes per Fibonacci level.
    \item Success Probability: Across all our experiments, $99.676\%$ of our trials output an estimator that is both accurate, to the required target precision, and certified. 
\end{itemize}

\paragraph{Experimental setup.}
We test $21$ values of $n$ ranging from $4$ through $4096$, with $1{,}000$
independent scalar trials at each value. The true
phase in each trial is sampled uniformly from  $\RR/\ZZ$.   To put the $1{,}000$ trials in context, this is the same order
as the scalar estimation tasks appearing in a complete $2048$-bit Regev batch.  
For the simulations, we instantiate Regev's asymptotic parameter $R$ as $R(n)\coloneqq 2^{3\sqrt n}$, and we set the largest Fibonacci level to $K=K_{\mathrm{required}}+3,$
where $K_{\mathrm{required}}$ is the required final Fibonacci level for achieving
Regev's required precision $\varepsilon_R=1/(\sqrt{2}R)$ from \refcor{cor:fibonacci-phase-estimation-for-regev}. 
\begin{equation}
    K_{\mathrm{required}}
    \coloneqq
    \min\!\left\{
        k\geq2:
        F_k>\frac{R}{4\sqrt2}
    \right\},
    \label{eq:numerical-required-fibonacci-level}
\end{equation}
We take three extra Fibonacci steps to hone in on the desired precision $\varepsilon_R$. As a result, the settings corresponding
to $n=1024,2048,4096$ have $K=140, 197, 278$ 
respectively.

For every trial, we record the number of scalar probes used at each level
$2,\ldots,K$.  We call the total number of scalar probes, divided by the
$K-1$ sampled Fibonacci levels, the observed number of probes per level.   We
report both the fraction of trials that certify according to
\refalg{alg:adaptive-bayesian-fibonacci-estimator} and the fraction that are both \emph{accurate and certified}. In other words, the latter fraction is all trials that
certify with an estimate $w$ satisfying the scalar precision
required by \refprop{thm:regev-phase-estimation-target}: $\operatorname{dist}_{\RR/\ZZ}(w,\theta)
    \leq
    \varepsilon_R$.

The randomly-sampled true phases, the complete set of tested values of $n$, their corresponding $K$, and all other numerical parameters are given in
\refapp{app:numerics-iqpe-fibonacci}. Note that the numerical parameters were
not obtained through an exhaustive optimization, and further tuning may
improve the empirical constants reported below.

\begin{figure*}[t]
    \centering
    \includegraphics[width=\textwidth]{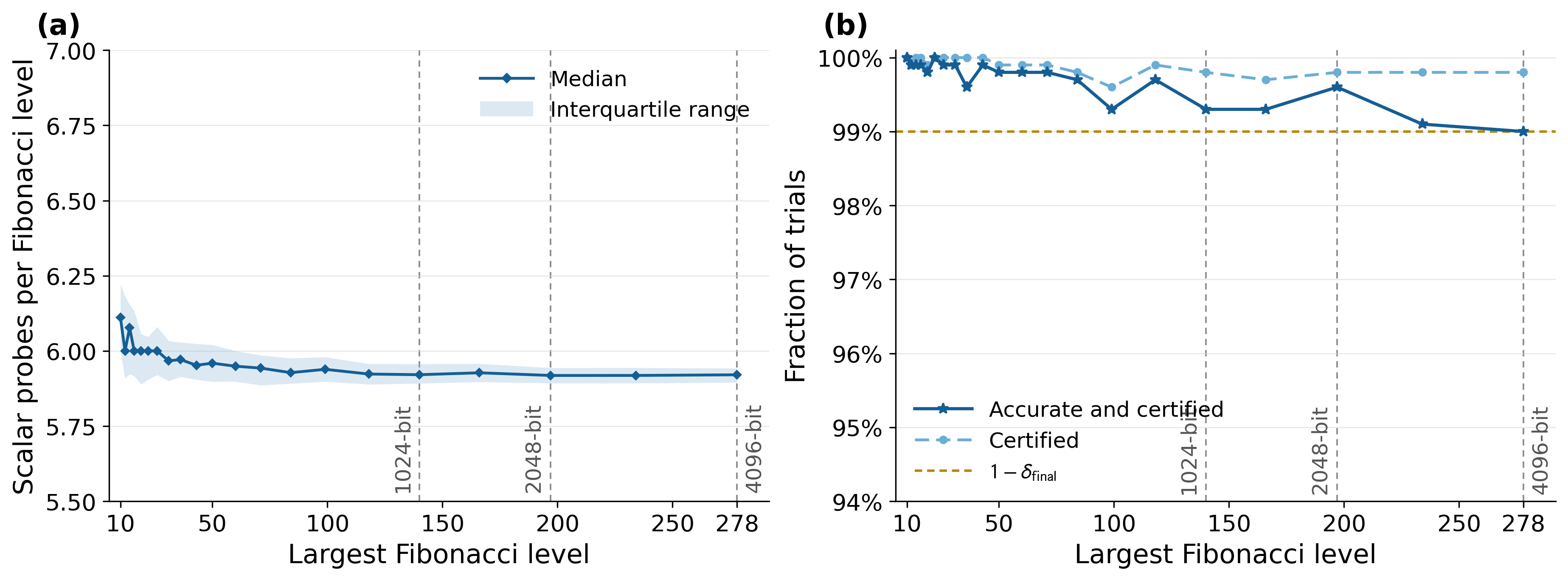}
    \caption{
    Simulating our Adaptive Bayesian Fibonacci IQPE. Vertical dashed lines mark the
    settings corresponding to factoring $1024$-, $2048$-, and $4096$-bit
    composites.
    (a) Median number of scalar probes per Fibonacci level is consistently six. The per-shot fluctuations become increasingly stable with increasing $n$; the interquartile range becomes progressively narrower.
    (b) Fraction of trials that certify and fraction that are both
    accurate and certified.  The horizontal line at
    $1-\delta_{\mathrm{final}}=0.99$ indicates the posterior-mass threshold
    used by the certification rule in our simulations. 
    }
    \label{fig:fibonacci-iqpe-numerics}
\end{figure*}

\paragraph{Results.}
We provide our results on probe complexity per Fibonacci level and the associated success probabilities in \hyperref[fig:fibonacci-iqpe-numerics]
{Figure~\ref*{fig:fibonacci-iqpe-numerics}}.
\hyperref[fig:fibonacci-iqpe-numerics]
{Figure~\ref*{fig:fibonacci-iqpe-numerics}}(a) shows that the scalar probe cost
per Fibonacci level remains close to six throughout the tested range.  
The early
finite-size variation has nearly disappeared by the cryptographically-relevant settings: 
$n=1024,2048,4096$. 
\reftab{tab:fibonacci-iqpe-rsa-results} gives the corresponding detailed
values.  
\hyperref[fig:fibonacci-iqpe-numerics]
{Figure~\ref*{fig:fibonacci-iqpe-numerics}}(b) shows the empirical certification
and accurate-and-certified rates.  Across all $21{,}000$ trials,
$99.890\%$ certify, while $99.676\%$
are both accurate and certified. 
The difference between these two curves is
the false-certification rate.  

To maintain that our protocol remains numerically tractable, we next consider the cost of our Fibonacci IQPE's classical postprocessing.
The number of live cells is the principal state-size measure for the
classical posterior and directly controls its storage and the cost of the
expected-information-gain evaluations.  
The number of live cells is a principal contributor to the storage cost and cost of evaluating the
expected-information-gain of the classical posterior.  At
$n=4096$, the median final
live-cell count is $1521$, and the
largest live-cell count anywhere in our sweep over $n$ is $2018$. Recall that in \refalg{alg:adaptive-bayesian-fibonacci-estimator}, we guarantee efficient classical processing by bounding $B_\text{max}$ to be at most $\textsf{poly}(n)$. Here, we set $B_\text{max}=30,000$, and this largest live-cell count of $2018$ does not get close to this $B_\text{max}.$  The cell budget is therefore comfortably
nonbinding throughout the tested range, and the classical IQPE
postprocessing remains easily simulatable at all parameter settings
considered here.

\paragraph{Dominant error mechanism is due to aliasing.}
In \refapp{app:numerics-iqpe-fibonacci} we do a small investigation into the dominant failure modes of our algorithm's false certifications.  The majority of our false certifications are consistent with aliasing.  A level-$k$
probe depends on the candidate phase $\theta$ only through
$F_k\theta\bmod 1$. A single level therefore admits $F_k$ inverse branches,
separated by $1/F_k$ around the phase circle. The earlier Fibonacci levels
and the transition guard are intended to identify the branch containing the
true phase before the estimator advances. The errors we observe are consistent with our protocol assuming the wrong branch when advancing.

\begin{table*}[t]
    \centering
    \resizebox{\textwidth}{!}{%
    \begin{tabular}{c c c c c c c}
        \toprule
        $n$
        & $\varepsilon_R$
        & $K$
        & Median probes/level [IQR]
        & Certified
        & Accurate and certified
        & Median live cells [IQR]
        \\
        \midrule
        $1024$
        & $2^{-96.5}$
        & $140$
        & $5.921\ [5.892,\,5.957]$
        & $99.8\%$
        & $99.3\%$
        & $779\ [768,\,789]$
        \\
        $2048$
        & $2^{-136.3}$
        & $197$
        & $5.918\ [5.893,\,5.944]$
        & $99.8\%$
        & $99.6\%$
        & $1084.5\ [1072,\,1097]$
        \\
        $4096$
        & $2^{-192.5}$
        & $278$
        & $5.921\ [5.895,\,5.942]$
        & $99.8\%$
        & $99.0\%$
        & $1521\ [1505,\,1536]$
        \\
        \bottomrule
    \end{tabular}%
    }
    \caption{
    Numerical Results for Factoring $1024$-, $2048$-, and $4096$-bit numbers.  The
    median probe counts are the total number of scalar probes divided by the
    $K-1$ sampled Fibonacci levels.  The final column reports the median
    number of live posterior cells and its interquartile range. \refapp{app:numerics-iqpe-fibonacci} gives the complete
    parameter specification and a finite-range linear fit to the mean total probe count.
    }\label{tab:fibonacci-iqpe-rsa-results}
\end{table*}

\vspace{5mm}
\noindent
For more information,
\refapp{app:numerics-iqpe-fibonacci}
records the complete simulator parameters, the detailed RSA-scale results,
the finite-range fit, the stopping events, and the error diagnostic.

\section{Regev Factoring as Iterative Quantum Phase Estimation}\label{sec:regev-factoring-as-qpe}

We pull together the key technical advances from the previous sections to repose Regev's algorithm as IQPE. 
\refsec{sec:space-efficient-exponentiation} and \refsec{sec:iterative-fibonacci-phase-estimation} developed, respectively, the masked Fibonacci arithmetic and a Fibonacci phase-estimation procedure that meets Regev’s postprocessing requirements. We now combine these ingredients into a complete streamed implementation and analyze its efficiency.
The cost of our algorithm will depend on two ingredients: the number of probes required to
recover the sampled phase and the resources
required to implement each phase probe. 
For the former, we simply import the results of \refsec{sec:iterative-fibonacci-phase-estimation}.
We will address the latter here, by constructing the required phase probes
from modular multiplication, Fibonacci exponentiation, and superposition
masking.

\subsection{Multiplication unitaries and their eigenstates}

We first define the Regev phase estimation unitaries and corresponding eigenstates. For each $i\in[d]$, we have the modular multiplication unitary $U_i\ket{x}=\ket{a_i x\bmod N},$ for $x\in\ZZ_N^*.$
Recall from the previous sections that we will use $U_i$ in phase estimation to learn the $i$th component, $\theta_i$ of the unknown phase vector $\theta$.  
We can define the simultaneous eigenbasis of these mutually commuting unitaries $U_1,\ldots,U_d$. For each $\theta\in\mathcal{L}^*/\ZZ^d$, let
\begin{equation}
    \ket{u_\theta}
    \coloneqq
    \frac{1}{\sqrt{\left|\ZZ^d/\mathcal{L}\right|}}
    \sum_{[x]\in\ZZ^d/\mathcal{L}}
    e^{-2\pi i\langle\theta,x\rangle}\ket{f(x)},
    \label{eq:regev-joint-eigenstate}
\end{equation}
where $[x]$ denotes the coset that contains every $x+\ell$ with $\ell\in\mathcal L$.
Just as in Shor's algorithm, the distinct values of $f$ are indexed by the cosets in $\ZZ^d/\mathcal{L}$. 
Thus $\left|\ZZ^d/\mathcal{L}\right|$ is the number of resulting cosets -- or, equivalently, the number of distinct values of his periodic function $f$.
Notice that although the sum in \refeqplain{eq:regev-joint-eigenstate} is indexed by cosets $[x]$, each summand is written using a particular representative $x$. The particular choice of $x$ does not matter. Replacing $x$ by $x+\ell$, where $\ell\in\mathcal{L}$, changes neither the basis state nor its phase:
\begin{equation}
f(x+\ell)=f(x)
\qquad\text{and}\qquad
e^{-2\pi i\langle\theta,x+\ell\rangle}
=e^{-2\pi i\langle\theta,x\rangle}.
\end{equation}

\begin{lemma}[\textbf{Regev QPE eigenrelation}]\label{lem:regev-joint-eigenstates}
For every $\theta\in\mathcal{L}^*/\ZZ^d$, the state $\ket{u_\theta}$ is a common eigenstate of $U_1,\ldots,U_d$, with $U_i\ket{u_\theta}
    =e^{2\pi i\theta_i}\ket{u_\theta}$ for every $i\in[d]$.
More generally, for every $z=(z_1,\ldots,z_d)\in\ZZ^d$,
\begin{equation}
    \left(\prod_{i=1}^d U_i^{z_i}\right)\ket{u_\theta}
    =e^{2\pi i\langle\theta,z\rangle}\ket{u_\theta}.
    \label{eq:regev-joint-eigenvalue}
\end{equation}
\end{lemma}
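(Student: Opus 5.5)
We prove the general identity \refeq{eq:regev-joint-eigenvalue} directly; the single-generator statement is then the special case $z=e_i$, since $\langle\theta,e_i\rangle=\theta_i$. The approach is the standard Shor-style reindexing argument, carried out over the cosets $\ZZ^d/\mathcal{L}$.

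\emph{Step 1: action on basis states.} For every $x\in\ZZ^d$,
\[
\Bigl(\prod_{i=1}^d U_i^{z_i}\Bigr)\ket{f(x)}=\ket{f(x+z)}.
\]
This holds because $f(x)=\prod_i a_i^{x_i}\bmod N$ lies in $\ZZ_N^*$; each $a_i=b_i^2$ is coprime to $N$ by the standing assumption. Applying $U_i^{z_i}$ multiplies by $a_i^{z_i}$. For negative $z_i$, $U_i^{-1}$ is multiplication by $a_i^{-1}\bmod N$, which is well defined. The $U_i$ commute, so the order of the product does not matter.

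\emph{Step 2: reindex the sum.} Applying Step 1 to \refeq{eq:regev-joint-eigenstate} gives
\[
\frac{1}{\sqrt{|\ZZ^d/\mathcal{L}|}}\sum_{[x]}e^{-2\pi i\langle\theta,x\rangle}\ket{f(x+z)}.
\]
Set $x'=x+z$. The map $[x]\mapsto[x+z]$ is a bijection of $\ZZ^d/\mathcal{L}$. The phase becomes $e^{-2\pi i\langle\theta,x'\rangle}e^{2\pi i\langle\theta,z\rangle}$. Pulling out the constant factor $e^{2\pi i\langle\theta,z\rangle}$ leaves exactly $\ket{u_\theta}$.

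\emph{Step 3: well-definedness.} This is the only point needing care. It was already noted above the lemma that the summand is independent of the coset representative: $f(x+\ell)=f(x)$, and $\langle\theta,\ell\rangle\in\ZZ$ because $\theta\in\mathcal{L}^*$. The same argument applies after the shift, so the reindexed sum is legitimate. Taking $\theta$ modulo $\ZZ^d$ is also harmless, since $\langle\theta,z\rangle$ changes by an integer when $\theta$ is shifted by an element of $\ZZ^d$.

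It remains to check that $|\ZZ^d/\mathcal{L}|$ is finite and that distinct cosets give distinct values of $f$, so the states $\ket{f(x)}$ in the sum are orthonormal and the normalization is correct. Finiteness holds because $\mathcal{L}$ contains $\mathrm{ord}(a_i)\,e_i$ for every $i$. Distinctness holds because $f(x)=f(y)$ implies $f(x-y)=1$, i.e. $x-y\in\mathcal{L}$. No step should present a real obstacle.
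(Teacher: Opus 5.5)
Your proof is correct and follows the paper's argument. Both compute $\bigl(\prod_i U_i^{z_i}\bigr)\ket{f(x)}=\ket{f(x+z)}$, reindex the coset sum by $x'=x+z$ to pull out $e^{2\pi i\langle\theta,z\rangle}$, and specialize to $z=e_i$. Your extra checks are all sound and only make explicit what the paper leaves implicit: invertibility of the $a_i$ for negative exponents, well-definedness of $[x]\mapsto[x+z]$, finiteness of $\ZZ^d/\mathcal{L}$, and distinctness of $f$ on cosets (which makes $\ket{u_\theta}$ a unit vector).
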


\begin{proof}
First, notice that for any $x,z\in\ZZ^d$,
\begin{equation}
    \left(\prod_{i=1}^d U_i^{z_i}\right)\ket{f(x)}
    =\ket{f(x+z)}.
\end{equation}
Therefore, reindexing the sum by $x'=x+z$ gives
\begin{align}
    \left(\prod_{i=1}^d U_i^{z_i}\right)\ket{u_\theta}
    &=
    \frac{1}{\sqrt{\left|\ZZ^d/\mathcal{L}\right|}}
    \sum_{[x]\in\ZZ^d/\mathcal{L}}
    e^{-2\pi i\langle\theta,x\rangle}\ket{f(x+z)}\\
    &=
    \frac{1}{\sqrt{\left|\ZZ^d/\mathcal{L}\right|}}
    \sum_{[x']\in\ZZ^d/\mathcal{L}}
    e^{-2\pi i\langle\theta,x'-z\rangle}\ket{f(x')}\\
    &=
    e^{2\pi i\langle\theta,z\rangle}\ket{u_\theta}.
\end{align}
Moreover, taking $z$ to be the $i$th standard basis vector gives the first claim specifying the eigenvalue $e^{2\pi i \theta_i}$ of $U_i$.
\end{proof}

These eigenstates $\{\ket{u_\theta}:\theta\in\mathcal{L}^*/\ZZ^d\}$ form an
orthonormal basis for the span of
$\{\ket{f(x)}:[x]\in\ZZ^d/\mathcal{L}\}$. For instance, we can decompose $\ket{1\bmod N}=\ket{f(0)}$ in this eigenbasis.

\begin{lemma}[\textbf{Uniform decomposition into Regev eigenstates}] We can decompose $\ket{1\bmod N}$ in the eigenbasis $\{\ket{u_\theta}\}_\theta$ defined by \refeqplain{eq:regev-joint-eigenstate}
    \begin{equation}
\ket{1\bmod N}
=
\frac{1}{\sqrt{\left|\mathcal{L}^*/\ZZ^d\right|}}
\sum_{\theta\in\mathcal{L}^*/\ZZ^d}\ket{u_\theta}.
\label{eq:one-as-regev-eigenstates}
\end{equation}
\label{lem:uniform-decomposition-of-one}
\end{lemma}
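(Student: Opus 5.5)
The plan is to substitute the definition of $\ket{u_\theta}$ from \refeqplain{eq:regev-joint-eigenstate} into the right-hand side of \refeqplain{eq:one-as-regev-eigenstates}, exchange the two finite sums, and evaluate the inner character sum over $\theta$. This gives
\begin{equation*}
\frac{1}{\sqrt{|\mathcal{L}^*/\ZZ^d|}}\sum_{\theta}\ket{u_\theta}
=
\frac{1}{\sqrt{|\mathcal{L}^*/\ZZ^d|\,|\ZZ^d/\mathcal{L}|}}
\sum_{[x]\in\ZZ^d/\mathcal{L}}
\Bigl(\sum_{\theta\in\mathcal{L}^*/\ZZ^d} e^{-2\pi i\langle\theta,x\rangle}\Bigr)\ket{f(x)}.
\end{equation*}

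First we check that the inner sum is well defined on both quotients. Replacing $\theta$ by $\theta+k$ with $k\in\ZZ^d$ changes $\langle\theta,x\rangle$ by an integer, since $x\in\ZZ^d$. Replacing $x$ by $x+\ell$ with $\ell\in\mathcal{L}$ changes it by $\langle\theta,\ell\rangle\in\ZZ$, since $\theta\in\mathcal{L}^*$. Hence the pairing $(\theta,[x])\mapsto e^{2\pi i\langle\theta,x\rangle}$ is a well-defined bimultiplicative pairing between the finite abelian groups $\mathcal{L}^*/\ZZ^d$ and $\ZZ^d/\mathcal{L}$. Finiteness holds because $\mathcal{L}$ has full rank, as it contains $\mathrm{ord}(a_i)\,e_i$ for each $i$.

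The key step is the standard duality fact: this pairing is perfect. Equivalently, $\mathcal{L}^*/\ZZ^d$ is the Pontryagin dual of $\ZZ^d/\mathcal{L}$, so in particular $|\mathcal{L}^*/\ZZ^d|=|\ZZ^d/\mathcal{L}|$. We prove this with a basis $B$ of $\mathcal{L}$. Then $\mathcal{L}^*$ has basis $B^{-T}$, and both indices equal $|\det B|$, since $[\mathcal{L}^*:\ZZ^d]=\det(\ZZ^d)/\det(\mathcal{L}^*)=|\det B|$. Nondegeneracy follows because $\langle\theta,x\rangle\in\ZZ$ for all $\theta\in\mathcal{L}^*$ forces $x\in(\mathcal{L}^*)^*=\mathcal{L}$. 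Orthogonality of characters then gives
\begin{equation*}
\sum_{\theta}e^{-2\pi i\langle\theta,x\rangle}=|\mathcal{L}^*/\ZZ^d|\cdot\mathbbm{1}[x\in\mathcal{L}].
\end{equation*}
Here the nontrivial character $\theta\mapsto e^{-2\pi i\langle\theta,x\rangle}$ for $x\notin\mathcal{L}$ sums to zero: pick $\theta_0$ with $e^{-2\pi i\langle\theta_0,x\rangle}\neq1$ and shift the summation variable by $\theta_0$.

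Finally, only the coset $[0]$ survives, and $f(0)=1\bmod N$. The prefactor becomes $|\mathcal{L}^*/\ZZ^d|/\sqrt{|\mathcal{L}^*/\ZZ^d|\,|\ZZ^d/\mathcal{L}|}=1$ by the equality of orders. This yields $\ket{1\bmod N}$.

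We also need that distinct cosets give distinct, hence orthonormal, basis states $\ket{f(x)}$. This holds because $f(x)=f(y)$ iff $x-y\in\mathcal{L}$ by definition of $\mathcal{L}$. The main obstacle is the duality/counting step $|\mathcal{L}^*/\ZZ^d|=|\ZZ^d/\mathcal{L}|$ together with nondegeneracy; the rest is bookkeeping.
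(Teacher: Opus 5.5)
Your proposal is correct and follows the same route as the paper: you substitute the definition of $\ket{u_\theta}$, exchange the finite sums, kill every coset $[x]\neq[0]$ by the shift-by-$\theta_0$ character argument, and use $|\mathcal{L}^*/\ZZ^d|=|\ZZ^d/\mathcal{L}|$ to fix the prefactor. The only difference is that you also justify two facts the paper uses without proof, namely the order equality (via $[\mathcal{L}^*:\ZZ^d]=[\ZZ^d:\mathcal{L}]=|\det B|$) and the existence of $\theta_0$ for $x\notin\mathcal{L}$ (via $(\mathcal{L}^*)^*=\mathcal{L}$), so your version is more complete.
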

\begin{proof}
    Consider summing the phase factors over all $\theta$:
\begin{equation}
    \sum_{\theta\in\mathcal{L}^*/\ZZ^d}
      e^{-2\pi i\langle\theta,x\rangle}
    =
    \begin{cases}
        \left|\mathcal{L}^*/\ZZ^d\right|, & x\in\mathcal{L},\\
        0, & x\notin\mathcal{L}.
    \end{cases}
    \label{eq:dual-phase-cancellation}
\end{equation}
The first line follows because, if $x \in \mathcal L$, then by definition $\langle\theta,x\rangle = 0 \pmod 1$. 
For the second line, if \(x\notin\mathcal L\), we can choose some
\(\theta_0\in\mathcal L^*/\ZZ^d\) such that
\(e^{2\pi i\langle\theta_0,x\rangle}\neq1\). Since adding
\(\theta_0\) merely permutes the elements of
\(\mathcal L^*/\ZZ^d\), we can reindex the sum:
\begin{equation}
x \notin\mathcal L: 
\hspace{5mm}
\sum_{\theta\in\mathcal{L}^*/\ZZ^d}
      e^{-2\pi i\langle\theta,x\rangle}
    =
    \sum_{\theta + \theta_0\in\mathcal{L}^*/\ZZ^d}
      e^{-2\pi i\langle\theta,x\rangle}
    =
    e^{2\pi i \langle\theta_0,x\rangle}
    \sum_{\theta\in\mathcal{L}^*/\ZZ^d}
      e^{-2\pi i\langle\theta,x\rangle}
\end{equation}
In doing so, we get an extra phase term $e^{2\pi i \langle\theta_0,x\rangle}$.
Now we have an expression that equals itself,
multiplied by this nontrivial phase. The expression must therefore vanish.
We will use \refeqplain{eq:dual-phase-cancellation} in the expansion below. 

Substituting \refeq{eq:regev-joint-eigenstate} for $\ket{u_\theta}$ gives the first line, and in the second line we use that $\left|\mathcal{L}^*/\ZZ^d\right| =\left|\ZZ^d/\mathcal{L}\right|$. Finally, all terms cancel except the one indexed by the zero coset, for which
$f(0)=1 \mod N$.
\begin{align}
\frac{1}{\sqrt{\left|\mathcal{L}^*/\ZZ^d\right|}}
\sum_{\theta\in\mathcal{L}^*/\ZZ^d}\ket{u_\theta}
&\quad=
\frac{1}{
\sqrt{\left|\mathcal{L}^*/\ZZ^d\right|}
\sqrt{\left|\ZZ^d/\mathcal{L}\right|}}
\sum_{[x]\in\ZZ^d/\mathcal{L}}
\left(
\sum_{\theta\in\mathcal{L}^*/\ZZ^d}
e^{-2\pi i\langle\theta,x\rangle}
\right)\ket{f(x)}\\
&\quad=
\frac{1}{
\left|\mathcal{L}^*/\ZZ^d\right|}
\sum_{[x]\in\ZZ^d/\mathcal{L}}
\left(
\sum_{\theta\in\mathcal{L}^*/\ZZ^d}
e^{-2\pi i\langle\theta,x\rangle}
\right)\ket{f(x)}\\
&\quad=\ket{f(0)}\\
&\quad=\ket{1\bmod N}.
\end{align}
\end{proof}

Thus, simultaneous phase estimation of
\(U_1,\ldots,U_d\) starting from \(\ket{1\bmod N}\) samples a
\(\theta\in\mathcal L^*/\ZZ^d\) uniformly. 
Moreover, phase-estimation with this starting target state samples the same uniformly random dual-lattice point
$\theta\in\mathcal L^*/\ZZ^d$
that underlies the output distribution of Regev's original quantum
circuit~\cite[Section~3, Equations~(2)--(3), and
Proposition~A.6]{Regev23}.
After the projection into $\ket{u_\theta}$, a probe $U_i^{z_i}$ continues to learn the phase component $z_i\theta_i\bmod 1$, while leaving $\ket{u_\theta}$ unchanged.
We therefore analyze our circuit's measurement outcomes by first choosing $\theta\in\mathcal{L}^*/\ZZ^d$
uniformly and then retaining $\ket{u_\theta}$ throughout the computation. This is analytically equivalent to projecting onto some $\ket{u_\theta}$ as we slowly learn $\theta$ via our phase probe measurements.

\subsection{Constructing the ladder of traveling eigenstates}

We now formalize the traveling-eigenstate picture using the multiplication eigenstates $\ket{u_\theta}$. In our IQPE
formulation, a level-\(k\) phase probe initializes \(d\) controls in
\(\ket{+}^{\otimes d}\) and must efficiently imprint the phases
\(F_k\theta_1,\ldots,F_k\theta_d\) across these controls. Each computational-basis
branch $c=(c_1,\ldots,c_d)\in\{0,1\}^d$
of the control register must acquire the relative phase
\(e^{2\pi iF_k\langle\theta,c\rangle}\).
As discussed in \refsec{sec:space-efficient-exponentiation}, we achieve this by utilizing Regev's observation \cite{Regev23} that the multipliers $a_i$ are small. We first use our controls to combine the $a_i$s into the product 
\begin{equation}
    p(c)
    \coloneqq
    \prod_{i=1}^d a_i^{c_i},
\end{equation}
and then multiply
\(p(c)^{F_k}\) into \(\ket{u_\theta}\) to produce
the desired phase. 
However, we do not actually explicitly
construct the Fibonacci power \(p(c)^{F_k}\).

Instead, we use superposition masking to build a  ``ladder of traveling eigenstates.'' When multiplied by $p(c)$, the level-$k$ traveling eigenstate gives the desired phase kickback \(e^{2\pi iF_k\langle\theta,c\rangle}\). 
We begin with two persistent accumulator registers in 
\begin{equation}
\ket{\mathsf M_N}\ket{1\bmod N}
=
\frac{1}{\sqrt{\left|\mathcal{L}^*/\ZZ^d\right|}}
\sum_{\theta\in\mathcal{L}^*/\ZZ^d}\ket{\mathsf M_N}\ket{u_\theta}.
\end{equation}
By \reflem{lem:uniform-decomposition-of-one}, the second accumulator is the uniform
superposition of the states \(\ket{u_\theta}\). It
suffices to analyze just one fixed eigenstate component
$\ket{\mathsf M_N}\ket{u_\theta}$ because our measurements will select out one $\theta$.
We move this state $\ket{\mathsf M_N}\ket{u_\theta}$  backward through the Fibonacci recurrence. Recall that
the forward and backward accumulator updates are
\begin{equation}
    \mathsf{Fib}\ket{x}\ket{y}
    =
    \ket{y}\ket{xy\bmod N},
    \qquad
    \mathsf{Fib}^{-1}\ket{x}\ket{y}
    =
    \ket{yx^{-1}\bmod N}\ket{x}.
    \label{eq:fibonacci-accumulator-update}
\end{equation}
After \(k-1\) backward updates, we multiply the right accumulator
by \(p(c)\). In our analysis below, we show that  multiplication by \(p(c)\) in the right accumulator is equivalent to 
multiplication by $p(c)^{F_{k-1}}$ and $p(c)^{F_k}$ in the original left and right accumulators, $\ket{\mathsf M_N}\ket{u_\theta}$, respectively.
This leaves \(\ket{\mathsf M_N}\) unchanged because
\(p(c)^{F_{k-1}}\in\ZZ_N^*\) and therefore merely permutes the terms in the
mask. The second multiplication acts on \(\ket{u_\theta}\) with eigenvalue $e^{2\pi iF_k\langle\theta,c\rangle}.$
Thus, after \(k-1\) backward updates, the accumulator pair is an eigenstate
of multiplication by \(p(c)\) in its right register, with precisely the
eigenvalue required at level \(k\). 
We call this state, after \(k-1\) backward Fibonacci updates, the
traveling eigenstate at level \(k\):
\begin{equation}
    \ket{\widetilde{u_\theta}^{(k)}}
    \coloneqq
    \mathsf{Fib}^{-(k-1)}
    \bigl(
        \ket{\mathsf M_N}\ket{u_\theta}
    \bigr).
    \label{eq:moving-fibonacci-eigenstate}
\end{equation}
We formally show in
\reflem{lem:streamed-fibonacci-phase-kickback} that multiplying the right accumulator by $p(c)$ gives the desired phase back onto the
controls while leaving the persistent accumulator state unchanged: $e^{2\pi iF_k\langle\theta,c\rangle}
    \ket{\widetilde{u_\theta}^{(k)}}.$

This gives a natural way to move through the levels of the phase estimator.
Only the two accumulators persist
between probes and levels. See \hyperref[fig:streaming-regev-circuit]
{Figure~\ref*{fig:streaming-regev-circuit}}(a).
We take one backward Fibonacci update and perform all \(M_2\) probes at
level \(2\). Because every probe leaves the accumulators in the same
traveling eigenstate, that state can be reused for every probe at the level.
We then apply one additional backward update and perform all \(M_3\) probes. We
continue in this way through the final level \(K\):
\[
    \text{initialize }\ket{\mathsf M_N}\ket{1}
    \xrightarrow{\textsf{Fib}^{-1}}
    \text{level-\(2\) probes}
    \xrightarrow{\textsf{Fib}^{-1}}
    \text{level-\(3\) probes}
    \xrightarrow{\textsf{Fib}^{-1}}\cdots\xrightarrow{\textsf{Fib}^{-1}}
    \text{level-\(K\) probes}.
\]
Consequently, reaching level \(k\) requires exactly \(k-1\) backward
updates, and the complete algorithm uses \(K-1\) such updates.
After the final level $K$, one might be tempted to apply \(K-1\) forward
updates to restore the accumulators to their starting state. 
This is unnecessary. Every control register has already acquired
its desired phase, been measured, and reset. The forward traversal
would therefore act only on the two accumulator registers that are about to
be discarded, so we omit it entirely.

\subsection{Formal streaming algorithm}

\begin{algorithm}[t]
    \KwData{The modulus $N$, bases $a_1,\ldots,a_d$, largest level $K$,
    probe counts $M_2,\ldots,M_K$, the adaptive rotation rule, and a mask
    register prepared in $\ket{\mathsf M_N}$.}
    \KwResult{The measurement transcript $\tau$ from one streamed run.}
    \begin{enumerate}
        \item Initialize the persistent accumulators $X,Y$ in
        $\ket{\mathsf M_N}_X\ket{1}_Y$, and set $\tau$ to be empty.

        \item For $k=2,\ldots,K$, in increasing order:
        \begin{enumerate}
            \item Apply $\mathsf{Fib}^{-1}$ to $X,Y$.

            \item For $s=1,\ldots,M_k$:
            \begin{enumerate}
                \item Prepare the reusable $d$-qubit control register $C$ in
                $\ket{+}^{\otimes d}$.

                \item Coherently compute
                \[
                    p(c)=\prod_{i=1}^d a_i^{c_i}
                \]
                into the root register $P$.

                \item Multiply $P$ into the right accumulator:
                \[
                    \ket{p(c)}_P\ket{y}_Y
                    \longmapsto
                    \ket{p(c)}_P\ket{p(c)y\bmod N}_Y.
                \]

                \item Reverse the computation of $p(c)$, returning $P$ and
                its workspace to $\ket{0}$.

                \item Apply the rotation \(R_{k,s}\) prescribed by the current transcript, measure $C$ in the $X$-basis, append the outcome to $\tau$, and
                reset $C$.
            \end{enumerate}
        \end{enumerate}

        \item Discard $X,Y$ and return $\tau$.
    \end{enumerate}
    \caption{One streamed simultaneous-phase-estimation run.}
    \label{alg:streamed-regev-run}
\end{algorithm}

We now formalize the construction above. The phase-estimation analysis of
\refsec{sec:iterative-fibonacci-phase-estimation} specifies the number
\(M_k\) of probes performed at each level and the adaptive measurement
rotations. For the \(s\)th probe at level \(k\), we apply the adaptive rotation
\begin{equation}
    R_{k,s}
    \coloneqq
    \bigotimes_{i=1}^d
    R_z\!\left(\phi_{k,i,s}\right),
    \label{eq:adaptive-control-rotation}
\end{equation}
where the angles \(\phi_{k,i,s}\) may depend on the outcomes of all
preceding measurements. 
We treat these quantities as inputs to the quantum circuit.

Here we describe the quantum circuit that realizes the required probes.
At level $k$, each fresh set of $d$ controls must acquire the phases $F_k\theta_1,\ldots,F_k\theta_d$..
Conditioned on a fixed
\(\theta\in\mathcal L^*/\ZZ^d\), the correctness target for each IQPE probe at
level \(k\) will be
\begin{equation}
    \ket{+}^{\otimes d}
    \ket{\widetilde{u_\theta}^{(k)}}
    \longmapsto
    \left(
        \bigotimes_{i=1}^d
        \frac{
            \ket{0}
            +e^{2\pi iF_k\theta_i}\ket{1}
        }{\sqrt{2}}
    \right)
    \ket{\widetilde{u_\theta}^{(k)}},
    \label{eq:desired-streamed-control-state}
\end{equation}
and all nonpersistent
arithmetic-workspace registers must return to \(\ket{0}\) before the
controls are measured. 
We state the complete streamed procedure in
\refalg{alg:streamed-regev-run}.

\vspace{5mm}

We now state the correctness and resource guarantees of the complete
streamed algorithm. Correctness will follow by proving that the persistent
accumulators satisfy \refeq{eq:desired-streamed-control-state} at
every level, while the resource bounds follow by counting the \(K-1\)
big-by-big Fibonacci updates and the $M\coloneqq
    \sum_{k=2}^K M_k$  clean same-order accumulator
updates.
For the resource accounting, 
let \(G_{\mathrm{bb}}\), \(S_{\mathrm{bb}}\), and
\(D_{\mathrm{bb}}\) denote the gate count, total qubit count, and depth of
one big-by-big backward Fibonacci update
\(\mathsf{Fib}^{-1}\) on the two \(n\)-qubit accumulators. Recall that
\(G_{\mathrm{tree}}\), \(S_{\mathrm{tree}}\), and
\(D_{\mathrm{tree}}\) denote the corresponding costs of the selected-product
tree, while \(G_{\mathrm{sb}}\), \(S_{\mathrm{sb}}\), and
\(D_{\mathrm{sb}}\) denote the costs of the small-by-big in-place modular
multiplication. The complete clean same-order accumulator update, including
the forward and reverse product-tree computations, is given by
\reflem{lem:clean-same-order-accumulator-update}.
Finally, \(G_{\mathrm{mask}}\), \(S_{\mathrm{mask}}\), and
\(D_{\mathrm{mask}}\) denote the cost of preparing the multiplicative mask. 

\vspace{3mm}
\begin{theorem}[Streamed phase-probe algorithm]
\label{thm:streamed-phase-probe-algorithm}
Suppose that $\ket{\mathsf M_N}$ can be prepared using
$G_{\mathrm{mask}}$ gates, $S_{\mathrm{mask}}$ qubits, and depth
$D_{\mathrm{mask}}$. 
\refalg{alg:streamed-regev-run} produces the same distribution on
measurement transcripts as the reference procedure that directly applies
the controlled unitaries $U_i^{F_k}$, followed by the same adaptive
rotations and measurements, using $M_k$ probes at every level
$k\in\{2,\ldots,K\}$.
The algorithm uses
\begin{align}
    G_{\mathrm{stream}}
    &=
    G_{\mathrm{mask}}
    +(K-1)G_{\mathrm{bb}}
    +M\bigl(2G_{\mathrm{tree}}+G_{\mathrm{sb}}\bigr)
    +O(dM),
    \label{eq:streamed-algorithm-gate-count}\\
    S_{\mathrm{stream}}
    &=
    \max\!\left\{
        S_{\mathrm{mask}},
        S_{\mathrm{bb}},
        2n+S_{\mathrm{tree}},
        n+d+S_{\mathrm{sb}}
    \right\},
    \label{eq:streamed-algorithm-qubit-count}\\
    D_{\mathrm{stream}}
    &=
    D_{\mathrm{mask}}
    +(K-1)D_{\mathrm{bb}}
    +M\bigl(2D_{\mathrm{tree}}+D_{\mathrm{sb}}\bigr)
    +O(M).
    \label{eq:streamed-algorithm-depth}
\end{align}
gates, qubits, and depth, respectively.
\end{theorem}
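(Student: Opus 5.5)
The proof splits into a correctness part and a resource count. For correctness, the plan is to show that on each fixed eigencomponent $\ket{\mathsf M_N}\ket{u_\theta}$, \refalg{alg:streamed-regev-run} implements exactly the kickback map \refeq{eq:desired-streamed-control-state} at every probe. The reference procedure acts on $\ket{1\bmod N}$ with the controlled $\prod_i U_i^{F_k c_i}$. Since \reflem{lem:uniform-decomposition-of-one} decomposes both starting states with the same uniform amplitudes over $\theta$, the two transcript distributions coincide.

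First, compute $\ket{\widetilde{u_\theta}^{(k)}}$ explicitly. Write $\ket{u_\theta}=\sum_x \alpha_x\ket{y_x}$ and $\ket{\mathsf M_N}=\sum_m\ket m$. By induction on $k$, using $\mathsf{Fib}^{-1}\ket{x}\ket{y}=\ket{yx^{-1}}\ket{x}$, I would show
\begin{equation}
\mathsf{Fib}^{-(k-1)}\ket{m}\ket{y}=\ket{m^{F_{-k+1}}y^{F_{-k+2}}}\ket{m^{F_{-k+2}}y^{F_{-k+3}}}\quad\text{up to sign-convention bookkeeping},
\end{equation}
that is, both registers become Fibonacci-exponent monomials in $(m,y)$. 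Equivalently, $\mathsf{Fib}^{k-1}$ maps $\ket{x'}\ket{y'p}$ to $\mathsf{Fib}^{k-1}(\ket{x'}\ket{y'})$ with the left register multiplied by $p^{F_{k-1}}$ and the right by $p^{F_k}$. This is the exponent recurrence from the $k=5$ example in \refsec{sec:masked-fib-exponentiation}. The key identity is
\begin{equation}
(\mathbb I\otimes M_p)\,\mathsf{Fib}^{-(k-1)}=\mathsf{Fib}^{-(k-1)}\,(M_{p^{F_{k-1}}}\otimes M_{p^{F_k}}),
\end{equation}
where $M_p$ denotes multiplication by $p$. It follows by induction from the one-step relation $(\mathbb I\otimes M_p)\mathsf{Fib}^{-1}=\mathsf{Fib}^{-1}(M_p\otimes M_p)$, together with the Fibonacci recurrence on exponents.

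Applying this identity to $\ket{\mathsf M_N}\ket{u_\theta}$ for $p=p(c)\in\ZZ_N^*$ gives the desired action. \reflem{lem:multiplicative-masking} leaves $\ket{\mathsf M_N}$ invariant under $M_{p^{F_{k-1}}}$. \reflem{lem:regev-joint-eigenstates} with $z=F_kc$ gives the phase $e^{2\pi iF_k\langle\theta,c\rangle}$. Hence the right-register multiplication by $p(c)$ acts on branch $c$ as that phase times the identity on the traveling eigenstate, which is exactly \refeq{eq:desired-streamed-control-state}. \reflem{lem:clean-same-order-accumulator-update} guarantees that all tree workspace returns to $\ket0$ before measurement, so the controls are unentangled. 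The accumulators remain in $\ket{\widetilde{u_\theta}^{(k)}}$ across probes and are mapped to level $k+1$ by one more $\mathsf{Fib}^{-1}$.

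Summing coherently over $\theta$ keeps distinct $\theta$-branches orthogonal and each is unchanged by every operation, so conditional on any adaptive choices each probe's outcome distribution equals the reference's. The full transcript distributions match by induction over probes. Discarding $X,Y$ at the end does not affect outcomes already recorded.

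The main obstacle is getting the exponent bookkeeping of the commutation identity right, in particular which of $F_{k-1},F_k$ lands on which register after $k-1$ inverse steps. I would verify it directly on the base case $k=2$ (where $\mathsf{Fib}^{-1}$ gives multipliers $p^{F_1},p^{F_2}=p,p$) and then induct.

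For resources, the gate count and depth are a sequential sum: one mask preparation, $K-1$ executions of $\mathsf{Fib}^{-1}$, and $M$ clean same-order updates each costing $2G_{\mathrm{tree}}+G_{\mathrm{sb}}$ and $2D_{\mathrm{tree}}+D_{\mathrm{sb}}$, plus $O(d)$ gates and $O(1)$ depth per probe for the $\ket{+}$ preparation, $R_{k,s}$, Hadamards and reset. For space, take the maximum over phases. Mask preparation uses $S_{\mathrm{mask}}$. A Fibonacci update uses $S_{\mathrm{bb}}$, with the controls idle in $\ket0$ and absorbable. Each probe uses the idle $n$-qubit left accumulator plus the cost from \refeq{eq:clean-same-order-qubit-count}, giving $\max\{2n+S_{\mathrm{tree}},n+d+S_{\mathrm{sb}}\}$.
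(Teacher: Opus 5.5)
Your proposal is correct and follows essentially the same route as the paper. Your identity $(\mathbb I\otimes M_p)\,\mathsf{Fib}^{-(k-1)}=\mathsf{Fib}^{-(k-1)}(M_{p^{F_{k-1}}}\otimes M_{p^{F_k}})$ is the paper's Fibonacci conjugation identity left-multiplied by $\mathsf{Fib}^{-(k-1)}$, and the remaining steps match the paper's: the mask-invariance plus eigenrelation kickback, the averaging over orthogonal $\theta$-branches, the sequential gate and depth sums, and the maximum over phases for space.

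Two small points do not affect the argument. First, your explicit monomial formula, which you never use, is off by one level. It should read $\mathsf{Fib}^{-(k-1)}\ket{m}\ket{y}=\ket{m^{F_{-k}}y^{F_{1-k}}}\ket{m^{F_{1-k}}y^{F_{2-k}}}$, whereas yours gives the identity at $k=2$. Second, the induction for your key identity runs most cleanly through the general one-step rule $(M_a\otimes M_b)\,\mathsf{Fib}^{-1}=\mathsf{Fib}^{-1}(M_b\otimes M_{ab})$, not just its $a=1$ case.
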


\vspace{5mm}
\noindent
\paragraph{Correctness.}
We first show that \refalg{alg:streamed-regev-run} implements the
simultaneous phase probes required for iterative quantum phase estimation. 
In particular, we show
that each probe produces the correct phase kickback on its control register,
while leaving the two persistent accumulators in the same state.  We use the following register names throughout our correctness and resource
accounting:
\begin{itemize}
    \item \textbf{Accumulators $X$ and $Y$:}
    The two persistent $n$-qubit accumulator registers. They are
    initialized in $\ket{\mathsf M_N}_X\ket{1}_Y$. In the eigenstate
    decomposition of $\ket{1}$, $Y$ initially contains $\ket{u_\theta}$; after the
    backward Fibonacci updates, the pair $XY$ jointly contains the traveling
    eigenstate $\ket{\widetilde{u_\theta}^{(k)}}$. 

    \item \textbf{Controls $C$:}
Each probe uses a $d$-qubit control register, with one qubit for every
dimension $i\in[d]$, initialized in $\ket{+}^{\otimes d}$. On a
computational-basis branch $c\in\{0,1\}^d$, its bits determine the
selected multiplier
\begin{equation}
    p(c)=\prod_{i=1}^d a_i^{c_{i}}
\end{equation}
and, therefore, the resulting phase $e^{2\pi iF_k\langle\theta,c\rangle}.$
The register is then rotated, measured,
reset, and reused for the next probe.

    \item \textbf{Product register $P$:}
    This $O(\sqrt n\log n)$-qubit register is initialized in $\ket{0}$ and
    will hold the product $p(c)$
    to the accumulator $Y$. It is uncomputed, reset to $\ket{0}$ before $C$ is rotated and measured.
\end{itemize}

\vspace{3mm}
For $p\in\ZZ_N^*$, let $\operatorname{Mul}^{(X)}_p$ and
$\operatorname{Mul}^{(Y)}_p$ denote multiplication by $p$ in registers $X$
and $Y$, respectively. 
We first notice how multiplication of the second accumulator behaves when sandwiched by Fibonacci updates:

\begin{lemma}[Fibonacci conjugation identity]
\label{lem:fibonacci-conjugation-identity}
For every $p\in\ZZ_N^*$ and integer $r\geq 0$,
\begin{equation}
    \mathsf{Fib}^{r}
    \operatorname{Mul}^{(Y)}_p
    \mathsf{Fib}^{-r}
    =
    \operatorname{Mul}^{(X)}_{p^{F_r}}
    \operatorname{Mul}^{(Y)}_{p^{F_{r+1}}}.
    \label{eq:fibonacci-conjugation}
\end{equation}
\end{lemma}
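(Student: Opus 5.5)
We prove the identity by induction on $r$, verifying each step on computational basis states $\ket{x}_X\ket{y}_Y$ with $x,y\in\ZZ_N^*$. Since all operators involved are permutations of these basis states, equality on the basis suffices. The base case $r=0$ is immediate. Here $\mathsf{Fib}^0$ is the identity, $F_0=0$ and $F_1=1$, so the right-hand side is $\operatorname{Mul}^{(X)}_{1}\operatorname{Mul}^{(Y)}_{p}=\operatorname{Mul}^{(Y)}_p$.

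The core of the induction step is a single-step conjugation rule. For exponents $\alpha,\beta\in\ZZ$, we claim
\begin{equation*}
    \mathsf{Fib}\,
    \operatorname{Mul}^{(X)}_{p^{\alpha}}
    \operatorname{Mul}^{(Y)}_{p^{\beta}}\,
    \mathsf{Fib}^{-1}
    =
    \operatorname{Mul}^{(X)}_{p^{\beta}}
    \operatorname{Mul}^{(Y)}_{p^{\alpha+\beta}}.
\end{equation*}
To verify it, we apply both sides to $\ket{x}\ket{y}$ using \refeq{eq:fibonacci-accumulator-update}:
\begin{itemize}
    \item $\mathsf{Fib}^{-1}$ sends $\ket{x}\ket{y}$ to $\ket{yx^{-1}}\ket{x}$;
    \item the multiplications give $\ket{p^{\alpha}yx^{-1}}\ket{p^{\beta}x}$;
    \item $\mathsf{Fib}$ then gives $\ket{p^{\beta}x}\ket{p^{\alpha}yx^{-1}\cdot p^{\beta}x}=\ket{p^{\beta}x}\ket{p^{\alpha+\beta}y}$, all modulo $N$.
\end{itemize}
This is exactly the right-hand side applied to $\ket{x}\ket{y}$. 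Invertibility of $p$ and $x$ modulo $N$ ensures every expression is a well-defined element of $\ZZ_N^*$.

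For the induction step, we write $\mathsf{Fib}^{r+1}\operatorname{Mul}^{(Y)}_p\mathsf{Fib}^{-(r+1)}=\mathsf{Fib}\,\bigl(\mathsf{Fib}^{r}\operatorname{Mul}^{(Y)}_p\mathsf{Fib}^{-r}\bigr)\,\mathsf{Fib}^{-1}$. We substitute the inductive hypothesis with $(\alpha,\beta)=(F_r,F_{r+1})$ and apply the one-step rule. This yields $\operatorname{Mul}^{(X)}_{p^{F_{r+1}}}\operatorname{Mul}^{(Y)}_{p^{F_r+F_{r+1}}}$. The recurrence $F_{r+2}=F_{r+1}+F_r$ then closes the induction.

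The only point needing care, rather than a real obstacle, is the domain. $\mathsf{Fib}^{-1}$ requires $x$ to be invertible, so the identity should be read as an operator identity on the span of $\ket{x}\ket{y}$ with $x,y\in\ZZ_N^*$, which contains every state the algorithm uses. I would add one sentence noting this, consistent with the paper's definition of $U_i$ on $\ZZ_N^*$.
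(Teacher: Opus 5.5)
Your proof is correct and follows the same route as the paper: induction on $r$, the trivial base case, and the one-step rule that conjugating by $\mathsf{Fib}$ sends the exponent pair $(\alpha,\beta)$ to $(\beta,\alpha+\beta)$. Your explicit basis-state check of that rule, and your remark restricting to the span of $\ket{x}\ket{y}$ with $x,y\in\ZZ_N^*$, make precise what the paper only asserts as following from the definition of $\mathsf{Fib}$.
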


\begin{proof}
The claim follows by induction on $r$.
For $r=0$, 
$F_0=0$ and $F_1=1$, so \refeq{eq:fibonacci-conjugation} becomes $\operatorname{Mul}^{(Y)}_p = \operatorname{Mul}^{(X)}_{1}\operatorname{Mul}^{(Y)}_{p}.$
More generally, if the exponent pair after $r$ conjugations is $(F_r,F_{r+1})$, 
then the $r+1$ statement takes the form
\begin{equation}
    \mathsf{Fib}^{r+1}
    \operatorname{Mul}^{(Y)}_p
    \mathsf{Fib}^{-(r+1)}
    =
    \mathsf{Fib} \operatorname{Mul}^{(X)}_{p^{F_r}}
    \operatorname{Mul}^{(Y)}_{p^{F_{r+1}}} 
    \mathsf{Fib}^{-1}
    =
    \operatorname{Mul}^{(X)}_{p^{F_{r+1}}}
    \operatorname{Mul}^{(Y)}_{p^{F_{r+2}}},
\end{equation}
where the final line follows from the definition of the $\mathsf{Fib}$ accumulator update. If the exponent pair after $r$ conjugations is
$(\alpha,\beta)$, then one further conjugation replaces it by $(\beta, \alpha+\beta).$
\end{proof}

\begin{lemma}[Streamed Fibonacci phase kickback]
\label{lem:streamed-fibonacci-phase-kickback}
For every $k\geq2$ and $c\in\{0,1\}^d$,
\begin{equation}
    \operatorname{Mul}^{(Y)}_{p(c)}
    \ket{\widetilde{u_\theta}^{(k)}}
    =
    e^{2\pi iF_k\langle\theta,c\rangle}
    \ket{\widetilde{u_\theta}^{(k)}}.
    \label{eq:streamed-fibonacci-phase-kickback}
\end{equation}
\end{lemma}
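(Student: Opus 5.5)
We prove \refeq{eq:streamed-fibonacci-phase-kickback} by moving the multiplication $\operatorname{Mul}^{(Y)}_{p(c)}$ back through the $k-1$ backward Fibonacci updates. There it acts on the untransformed pair $\ket{\mathsf M_N}\ket{u_\theta}$, where its effect is known from \reflem{lem:multiplicative-masking} and \reflem{lem:regev-joint-eigenstates}. Throughout write $p\coloneqq p(c)$ and $r\coloneqq k-1\geq1$.

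First, we check that $p\in\ZZ_N^*$, since \reflem{lem:fibonacci-conjugation-identity} and the masking lemma are stated only for units. Each $a_i=b_i^2\bmod N$, and we have assumed $N$ shares no common factor with any $b_i$. Hence every $a_i$ is a unit, and so is any product of them. Second, from the definition \refeq{eq:moving-fibonacci-eigenstate} we have $\ket{\widetilde{u_\theta}^{(k)}}=\mathsf{Fib}^{-r}(\ket{\mathsf M_N}\ket{u_\theta})$. Therefore
\[
\operatorname{Mul}^{(Y)}_{p}\ket{\widetilde{u_\theta}^{(k)}}
=\mathsf{Fib}^{-r}\bigl(\mathsf{Fib}^{r}\operatorname{Mul}^{(Y)}_{p}\mathsf{Fib}^{-r}\bigr)\ket{\mathsf M_N}\ket{u_\theta}
=\mathsf{Fib}^{-r}\operatorname{Mul}^{(X)}_{p^{F_{r}}}\operatorname{Mul}^{(Y)}_{p^{F_{r+1}}}\ket{\mathsf M_N}\ket{u_\theta},
\]
where the second equality is \reflem{lem:fibonacci-conjugation-identity} with $r=k-1$. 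Note that $F_{r+1}=F_k$.

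Third, we evaluate the two commuting factors on the product state. On $X$, multiplying by the unit $p^{F_{k-1}}$ leaves $\ket{\mathsf M_N}$ unchanged by \reflem{lem:multiplicative-masking}. On $Y$, multiplying by $p^{F_k}=\prod_i a_i^{F_k c_i}$ is the operator $\prod_i U_i^{F_k c_i}$. Applying \reflem{lem:regev-joint-eigenstates} with $z=F_k c$ gives the eigenvalue $e^{2\pi i F_k\langle\theta,c\rangle}$. One caveat is that $U_i$ is defined on $\ZZ_N^*$, while $Y$ is a full $n$-qubit register. This causes no problem, because $\ket{u_\theta}$ is supported on values $f(x)\in\ZZ_N^*$, where $\operatorname{Mul}^{(Y)}_{p^{F_k}}$ agrees with $\prod_i U_i^{F_kc_i}$. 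Pulling the scalar through $\mathsf{Fib}^{-r}$ leaves $e^{2\pi iF_k\langle\theta,c\rangle}\mathsf{Fib}^{-r}(\ket{\mathsf M_N}\ket{u_\theta})$, which is the claim.

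The main point needing care is the second step: the lemma is applied in the direction $\operatorname{Mul}^{(Y)}_p\mathsf{Fib}^{-r}=\mathsf{Fib}^{-r}(\cdots)$. We must confirm that the exponent pair after $r=k-1$ conjugations is $(F_{k-1},F_k)$, placing $F_k$ on $Y$ rather than $X$. We must also confirm that $\mathsf{Fib}$ is a well-defined unitary on the relevant subspace. All values stay in $\ZZ_N^*$ under $\mathsf{Fib}^{\pm1}$, since the mask and the $\ket{u_\theta}$ supports consist of units, so the inverse $x^{-1}$ in \refeq{eq:fibonacci-accumulator-update} is always defined. The remaining steps are direct citations.
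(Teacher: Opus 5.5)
Your proposal is correct and follows essentially the same route as the paper: conjugate $\operatorname{Mul}^{(Y)}_{p(c)}$ through $\mathsf{Fib}^{-(k-1)}$ using \reflem{lem:fibonacci-conjugation-identity}, absorb $p(c)^{F_{k-1}}$ into the mask by \reflem{lem:multiplicative-masking}, and read off the eigenvalue of $\prod_i U_i^{F_k c_i}$ on $\ket{u_\theta}$ from \reflem{lem:regev-joint-eigenstates}. Your extra checks are sound and are left implicit in the paper: that $p(c)\in\ZZ_N^*$, and that all supports stay in the unit subspace, so that $\mathsf{Fib}^{\pm1}$ and $U_i$ are well defined there.
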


\begin{proof}
Applying \reflem{lem:fibonacci-conjugation-identity} with $r=k-1$
gives
$\mathsf{Fib}^{k-1}
    \operatorname{Mul}^{(Y)}_{p(c)}
    \mathsf{Fib}^{-(k-1)}
    =
    \operatorname{Mul}^{(X)}_{p(c)^{F_{k-1}}}
    \operatorname{Mul}^{(Y)}_{p(c)^{F_k}}.$
We can, therefore, expand the multiplication on the traveling eigenstate as
follows:
\begin{align}
    \operatorname{Mul}^{(Y)}_{p(c)}
    \ket{\widetilde{u_\theta}^{(k)}}
    &=
    \operatorname{Mul}^{(Y)}_{p(c)}
    \mathsf{Fib}^{-(k-1)}
    \bigl(\ket{\mathsf M_N}_X\ket{u_\theta}_Y\bigr)
    \notag\\
    &=
    \mathsf{Fib}^{-(k-1)}
    \left(
        \mathsf{Fib}^{k-1}
        \operatorname{Mul}^{(Y)}_{p(c)}
        \mathsf{Fib}^{-(k-1)}
    \right)
    \bigl(\ket{\mathsf M_N}_X\ket{u_\theta}_Y\bigr)
    \notag\\
    &=
    \mathsf{Fib}^{-(k-1)}
    \left(
        \operatorname{Mul}^{(X)}_{p(c)^{F_{k-1}}}
        \ket{\mathsf M_N}_X
        \otimes
        \operatorname{Mul}^{(Y)}_{p(c)^{F_k}}
        \ket{u_\theta}_Y
    \right).
    \notag
\intertext{The first multiplication leaves the mask $\ket{\mathsf M_N}$ unchanged by definition 
because $p(c)^{F_{k-1}}\in\ZZ_N^*$. The second multiplication is on register $Y$, and can be decomposed as $\operatorname{Mul}^{(Y)}_{p(c)^{F_k}}
    =
    \prod_{i=1}^d U_i^{F_kc_i}$. 
By \refeq{eq:regev-joint-eigenvalue}, this unitary will 
act on $\ket{u_\theta}_Y$ with eigenvalue
$e^{2\pi iF_k\langle\theta,c\rangle}$. }
    \operatorname{Mul}^{(Y)}_{p(c)}
    \ket{\widetilde{u_\theta}^{(k)}}
    &=
    e^{2\pi iF_k\langle\theta,c\rangle}
    \mathsf{Fib}^{-(k-1)}
    \bigl(\ket{\mathsf M_N}_X\ket{u_\theta}_Y\bigr)
    \notag\\
    &=
    e^{2\pi iF_k\langle\theta,c\rangle}
    \ket{\widetilde{u_\theta}^{(k)}}.
    \notag
\end{align}
As a result, the claim follows after applying $\mathsf{Fib}^{-(k-1)}$ and using
\refeq{eq:moving-fibonacci-eigenstate}.
\end{proof}

\reflem{lem:streamed-fibonacci-phase-kickback} shows that the clean
multiplication by $p(c)$ produces exactly the desired level-$k$ phase,
while leaving the traveling eigenstate unchanged. Since the computation of
$p(c)$ is subsequently reversed, the product register $P$ and its
workspace return to $\ket{0}$. Hence each controlled level $k$ probe performs the correctness target operation of \refeqplain{eq:desired-streamed-control-state}:
\begin{equation}
    \ket{+}^{\otimes d}\ket{\widetilde{u_\theta}^{(k)}}
    \longmapsto
    \left(
        \bigotimes_{i=1}^d
        \frac{
            \ket{0}
            +e^{2\pi iF_k\theta_i}\ket{1}
        }{\sqrt{2}}
    \right)
    \ket{\widetilde{u_\theta}^{(k)}}.
    \label{eq:masked-simultaneous-phase-probe}
\end{equation}
The controls acquire the same phases as under
$U_1^{F_k},\ldots,U_d^{F_k}$. It remains to show that this invariant persists through every
probe and level, and that averaging over the initial eigenstate
superposition gives the same measurement distribution as ideal simultaneous
phase estimation.

\begin{proposition}[Correctness of the streamed circuit]
\label{prop:streamed-regev-correctness}
Conditioned on a fixed $\theta\in\mathcal{L}^*/\ZZ^d$, the following hold:
\begin{enumerate}
    \item Every probe at level $k$ in
    \refalg{alg:streamed-regev-run} produces the control state in
    \refeq{eq:masked-simultaneous-phase-probe}.

    \item After every probe at level $k$, the persistent accumulators remain
    in $\ket{\widetilde{u_\theta}^{(k)}}$.

    \item After every probe, the product register $P$ and all other
    nonpersistent arithmetic-workspace registers have returned to $\ket{0}$.
\end{enumerate}
Without conditioning on $\theta$, every possible measurement transcript
$\tau$ has the same
distribution as if we chose $\ket{u_\theta}$ for
$\theta\in\mathcal{L}^*/\ZZ^d$ uniformly, directly applied the controlled
unitaries $U_i^{F_k}$, and performed the same adaptive rotations and
measurements.
\end{proposition}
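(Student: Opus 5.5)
We argue by induction over the sequence of operations in \refalg{alg:streamed-regev-run}, working with a fixed eigenstate component $\ket{\mathsf M_N}_X\ket{u_\theta}_Y$. The invariant is items 2 and 3: the accumulators hold $\ket{\widetilde{u_\theta}^{(k)}}$ and every nonpersistent workspace register is $\ket{0}$.

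\emph{Base case.} At the start, $XY$ holds $\ket{\mathsf M_N}\ket{u_\theta}=\ket{\widetilde{u_\theta}^{(1)}}$ by \refeq{eq:moving-fibonacci-eigenstate}. Step 2(a) applies $\mathsf{Fib}^{-1}$, and $\mathsf{Fib}^{-1}\ket{\widetilde{u_\theta}^{(k-1)}}=\ket{\widetilde{u_\theta}^{(k)}}$ holds directly from the definition. So on entering level $k$ the accumulators hold $\ket{\widetilde{u_\theta}^{(k)}}$.

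\emph{Single probe.} Expand the fresh controls $\ket{+}^{\otimes d}$ in the computational basis. On each branch $c$, \reflem{lem:clean-same-order-accumulator-update} (steps (ii)--(iv)) acts as $\ket{c}\ket{\cdot}_{XY}\ket{0}\mapsto\ket{c}\operatorname{Mul}^{(Y)}_{p(c)}\ket{\cdot}_{XY}\ket{0}$, which returns $P$ and the tree workspace to $\ket{0}$ and gives item 3. By \reflem{lem:streamed-fibonacci-phase-kickback}, this branch becomes $e^{2\pi iF_k\langle\theta,c\rangle}\ket{c}\ket{\widetilde{u_\theta}^{(k)}}$. Since $e^{2\pi iF_k\langle\theta,c\rangle}=\prod_i e^{2\pi iF_k\theta_i c_i}$, summing over $c$ factors into the product state \refeq{eq:masked-simultaneous-phase-probe}, which is item 1. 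The accumulators factor out unchanged, which is item 2. The rotation, measurement and reset act only on $C$, so the invariant persists to the next probe and, after the next $\mathsf{Fib}^{-1}$, to the next level.

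\emph{Unconditional claim.} Write the initial state as a uniform superposition over $\theta$ using \reflem{lem:uniform-decomposition-of-one}. The whole procedure is linear, and each branch stays of the form (control state)$\otimes\ket{\widetilde{u_\theta}^{(k)}}$. The states $\ket{\widetilde{u_\theta}^{(k)}}$ are orthonormal in $\theta$, because they are images of the orthonormal $\ket{\mathsf M_N}\ket{u_\theta}$ under the unitary $\mathsf{Fib}^{-(k-1)}$. Hence the branches never interfere.

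By induction over the adaptive measurement sequence, the probability of a transcript $\tau$ equals $|\mathcal L^*/\ZZ^d|^{-1}\sum_\theta \Pr[\tau\mid\theta]$, where $\Pr[\tau\mid\theta]$ is the product of the single-probe likelihoods \refeq{eq:fibonacci-probe-likelihood}. The adaptive angles depend only on earlier outcomes, so they agree across branches.

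For the reference procedure, \reflem{lem:regev-joint-eigenstates} gives that controlled $U_i^{F_k}$ on $\ket{u_\theta}$ produces the same control state and leaves $\ket{u_\theta}$ fixed. The same orthogonality argument therefore yields the identical mixture, and the two transcript distributions coincide.

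\emph{Main obstacle.} The delicate step is this last one: we must justify that the adaptive measurement of controls entangled with a superposition over $\theta$ gives exactly the classical mixture. The key point is that the post-measurement accumulator states remain mutually orthogonal and are unchanged by measurement, so the cross terms vanish at every step. I would formalize this by writing the unnormalized post-measurement state after a partial transcript as $\sum_\theta \sqrt{\Pr[\tau_{\le t}\mid\theta]/|\mathcal L^*/\ZZ^d|}\,(\text{phase})\ket{\widetilde{u_\theta}^{(k)}}$, and checking that it is preserved by each probe and each level step.
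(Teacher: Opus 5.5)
Your proposal is correct and follows essentially the same route as the paper. It inducts over levels and probes using the clean same-order update lemma and the streamed Fibonacci kickback lemma, then uses orthonormality of the traveling eigenstates (images of orthonormal states under the common unitary $\mathsf{Fib}^{-(k-1)}$) to reduce the unconditioned transcript distribution to the uniform mixture over $\theta$; your explicit tracking of the unnormalized post-measurement state is simply a more careful rendering of the paper's ``tracing out the accumulators removes all cross terms'' step.
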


\begin{proof}
By \refeq{eq:one-as-regev-eigenstates}, the initial accumulator
state decomposes as
\begin{equation}
    \ket{\mathsf M_N}_X\ket{1}_Y
    =
    \frac{1}{\sqrt{|\mathcal{L}^*/\ZZ^d|}}
    \sum_{\theta\in\mathcal{L}^*/\ZZ^d}
    \ket{\mathsf M_N}_X\ket{u_\theta}_Y.
\end{equation}
Fix one value of $\theta$. Before the first backward update, the
accumulators are in
    $\ket{\mathsf M_N}_X\ket{u_\theta}_Y
    =
    \ket{\widetilde{u_\theta}^{(1)}}$
and all workspace ancillas are in $\ket{0}$.
The backward update at the beginning of level $2$ places them in
$\ket{\widetilde{u_\theta}^{(2)}}$, and \reflem{lem:streamed-fibonacci-phase-kickback}
shows that every level-$2$ probe produces the required phase while
leaving this accumulator state unchanged. The clean interface in
\refeq{eq:clean-same-order-accumulator-update} also returns $P$ and
its workspace to $\ket{0}$.

Now assume that, after all probes at level $k$, the accumulators remain in
$\ket{\widetilde{u_\theta}^{(k)}}$. The next backward update maps
    $\ket{\widetilde{u_\theta}^{(k)}}
    \longmapsto
    \ket{\widetilde{u_\theta}^{(k+1)}}$.
For each of the subsequent $M_{k+1}$ probes, the clean controlled
multiplication and
\reflem{lem:streamed-fibonacci-phase-kickback}, now applied at level
$k+1$, give
\begin{equation}
    \frac{1}{\sqrt{2^d}}
    \sum_{c\in\{0,1\}^d}
    \ket{c}_C
    \ket{\widetilde{u_\theta}^{(k+1)}}_{XY}
    \longmapsto
    \frac{1}{\sqrt{2^d}}
    \sum_{c\in\{0,1\}^d}
    e^{2\pi iF_{k+1}\langle\theta,c\rangle}
    \ket{c}_C
    \ket{\widetilde{u_\theta}^{(k+1)}}_{XY}.
\end{equation}
This is the factorized control state in
\refeq{eq:masked-simultaneous-phase-probe} at level $k+1$.
Because the controls are unentangled from the accumulators, their adaptive
rotation and measurement do not disturb the traveling eigenstate. The same
argument therefore applies to all $M_{k+1}$ probes, completing the induction.

Finally, traveling eigenstates associated with distinct values of $\theta$
remain orthogonal because the same unitary
$\mathsf{Fib}^{-(k-1)}$ is applied to each of them. Consequently, tracing
out the accumulators removes all cross terms between distinct $\theta$.
The probability of any transcript $\tau$ is therefore the uniform average
of its probabilities conditioned on $\theta$, completing the claim.
\end{proof}

\vspace{3mm}
\noindent
\textbf{Efficiency. } 
We now turn our attention to establishing the desired efficiency guarantees:

\begin{lemma}[Efficiency of the streamed circuit]
\label{lem:streamed-regev-efficiency}
\refalg{alg:streamed-regev-run} achieves the efficiency guarantees
claimed in \refthm{thm:streamed-phase-probe-algorithm}.
\end{lemma}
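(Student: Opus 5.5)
The efficiency claims follow from a direct accounting over the steps of \refalg{alg:streamed-regev-run}, taken in execution order. The algorithm consists of three phases run sequentially: one mask preparation, $K-1$ backward Fibonacci updates $\mathsf{Fib}^{-1}$ (one at the start of each level $k=2,\ldots,K$), and $M=\sum_{k=2}^K M_k$ probes. I will bound each phase separately and then sum gates and depth, or take a maximum over the live registers for space.

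\textbf{Gates and depth.} Mask preparation contributes $G_{\mathrm{mask}}$ and $D_{\mathrm{mask}}$. Each backward update contributes $G_{\mathrm{bb}}$ and $D_{\mathrm{bb}}$, which gives the $(K-1)$ terms. Each probe has two parts.
\begin{enumerate}
\item Steps (ii)--(iv) are exactly the clean same-order accumulator update of \reflem{lem:clean-same-order-accumulator-update}, which costs $2G_{\mathrm{tree}}+G_{\mathrm{sb}}$ gates and $2D_{\mathrm{tree}}+D_{\mathrm{sb}}$ depth.
\item Step (i) and step (v) together apply $d$ Hadamards, $d$ single-qubit rotations $R_z(\phi_{k,i,s})$, $d$ Hadamards before measurement, $d$ measurements, and $d$ resets. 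These act on distinct qubits, so they cost $O(d)$ gates and $O(1)$ depth.
\end{enumerate}
Summing over the $M$ probes gives the $M(\cdot)+O(dM)$ and $M(\cdot)+O(M)$ terms. All steps are executed serially, so the depths add.

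\textbf{Space.} This is the step that needs care. I take the maximum over phases of the number of simultaneously live qubits.
\begin{enumerate}
\item During mask preparation, only the mask circuit is active. Register $Y$ can be initialized to $\ket{1}$ afterwards, or is counted within that circuit's footprint, giving $S_{\mathrm{mask}}$.
\item During a backward update, the control and product registers are idle in $\ket{0}$ and can be allocated only when needed. The live qubits are then $X$, $Y$, and the multiplier's workspace, all of which are counted in $S_{\mathrm{bb}}$ (since $S_{\mathrm{bb}}$ includes both $n$-qubit accumulators).
\item During a probe, register $X$ ($n$ qubits) idles while the clean update of \reflem{lem:clean-same-order-accumulator-update} runs on $C$, $Y$, and its workspace. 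That update uses $\max\{n+S_{\mathrm{tree}},\,d+S_{\mathrm{sb}}\}$ qubits, so adding $n$ gives $\max\{2n+S_{\mathrm{tree}},\,n+d+S_{\mathrm{sb}}\}$.
\end{enumerate}
The controls are reset after each probe, and ancillas return to $\ket{0}$ by \refprop{prop:streamed-regev-correctness}. Every register can therefore be reused across phases, and the overall count is the maximum of these four quantities.

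\textbf{Main obstacle.} The subtle point is justifying that the mask and backward-update workspaces need not coexist with $C$ and $P$. A fixed allocation would sum these contributions rather than take their maximum. I will argue this from the serial ordering together with the cleanliness guarantees:
\begin{itemize}
\item The mask-preparation ancillas beyond $X$ are returned to $\ket{0}$; by \reflem{lem:uniform-residue-mask-preparation}, only the flag qubit is measured.
\item The big-by-big multiplier used for $\mathsf{Fib}^{-1}$ is assumed clean.
\end{itemize}
These zeroed qubits can then serve as the $C$, $P$, and tree workspace in later phases, which yields \refeq{eq:streamed-algorithm-qubit-count}.
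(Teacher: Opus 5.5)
Your proposal is correct and follows the paper's proof essentially step for step. It uses the same serial accounting of one mask preparation, $K-1$ backward updates, and $M$ clean same-order updates, plus $O(d)$ parallel control operations per probe, for gates and depth; for space it takes the same per-stage maximum, obtained by adding the idle $X$ register to the count of \reflem{lem:clean-same-order-accumulator-update}, and the paper likewise simply asserts that sequential stages reuse workspace. One small nit: \reflem{lem:uniform-residue-mask-preparation} says nothing about a measured flag qubit, so drop that citation; the paper just assumes the mask is prepared before $Y$ is initialized and that its workspace is then reused.
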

\begin{proof}
The mask is prepared once, contributing $G_{\mathrm{mask}}$ gates and depth
$D_{\mathrm{mask}}$. The algorithm then performs a backward Fibonacci
update at each level $k\in\{2,\ldots,K\}$, for a total of $K-1$
big-by-big updates.  Across the levels, there are
$M=\sum_{k=2}^K M_k$ phase probes. During each probe the selected-product tree is computed, the
resulting small value is multiplied into $Y$, and the tree is reversed.
Thus, each probe contributes
\begin{equation}
    2G_{\mathrm{tree}}+G_{\mathrm{sb}}
    \qquad\text{and}\qquad
    2D_{\mathrm{tree}}+D_{\mathrm{sb}}
\end{equation}
gates and depth, respectively. Preparing, rotating, measuring, and resetting
the $d$ controls requires $O(d)$ operations per probe. These operations act
in parallel across the controls and therefore contribute $O(dM)$ gates but
only $O(M)$ depth. Consequently,
\begin{align}
    G_{\mathrm{stream}}
    &=
    G_{\mathrm{mask}}
    +(K-1)G_{\mathrm{bb}}
    +M\bigl(2G_{\mathrm{tree}}+G_{\mathrm{sb}}\bigr)
    +O(dM),
    \label{eq:streamed-efficiency-gate-count}\\
    D_{\mathrm{stream}}
    &=
    D_{\mathrm{mask}}
    +(K-1)D_{\mathrm{bb}}
    +M\bigl(2D_{\mathrm{tree}}+D_{\mathrm{sb}}\bigr)
    +O(M).
    \label{eq:streamed-efficiency-depth}
\end{align}

It remains to bound the peak space. Mask preparation uses
$S_{\mathrm{mask}}$ qubits (assume prep before initializing $\ket{1}$ state). A backward Fibonacci update uses
$S_{\mathrm{bb}}$ qubits. Recall that
\reflem{lem:clean-same-order-accumulator-update} does resource accounting for the complete, clean same-order update of a probe. During the product-tree portion,
$S_{\mathrm{tree}}$ includes $C$, $P$, and the tree workspace; the two
$n$-qubit accumulators contribute an additional $2n$ qubits. During the
small-by-big multiplication, $S_{\mathrm{sb}}$ includes $P$, $Y$, and the
multiplication workspace, while $X$ and $C$ contribute another $n+d$
qubits. 
Since these stages occur sequentially, their workspace can be
reused. Therefore,
\begin{equation}
    S_{\mathrm{stream}}
    =
    \max\!\left\{
        S_{\mathrm{mask}},
        S_{\mathrm{bb}},
        2n+S_{\mathrm{tree}},
        n+d+S_{\mathrm{sb}}
    \right\},
    \label{eq:streamed-efficiency-qubit-count}
\end{equation}
and we proved the bounds claimed in
\refthm{thm:streamed-phase-probe-algorithm}.
\end{proof}

\subsection{Implications}

\subsubsection{Importing efficient primitives for in-place modular multiplication}
\label{sec:implementing-multiplication-unitaries}

The multiplication unitaries underlying Regev's algorithm determine the
gate, space, and depth costs of our streamed construction. For a multiplier
$x\in\ZZ_N^*$ and any $y\in\ZZ_N$, the operation we require is in-place multiplication modulo $N$:
\begin{equation}
    \ket{x}\ket{y}
    \longmapsto
    \ket{x}\ket{xy\bmod N}.
    \label{eq:dialog-in-place-multiplication-interface}
\end{equation}
We implement it using recent in-place modular
multiplication techniques developed in the elliptic-curve literature
\cite{BabbushEtAl26,Schrottenloher26,LuoEtAl26}. The techniques of Babbush \textit{et al.} ~\cite{babbush2018encoding} and Schrottenloher~\cite{ Schrottenloher26} temporarily replace
$x$ by its \textit{dialog}: the transcript of the branch decisions made while
running the Euclidean algorithm on $(N,x)$. This
transcript is then replayed to multiply $x$ into $y$, in-place. 
The technique of Luo \textit{et al.} \cite{LuoEtAl26} is similar to the custom-built in-place modular multiplication primitives built by Ragavan and Vaikuntanathan \cite{rv24, cryptoeprint:2024/636}; it is built from multiple out-of-place multiplication steps. The difference is that Ref~\cite{LuoEtAl26} utilizes measurement-based uncomputation -- uncomputing by measuring in the $X$ basis -- in building up this low-space primitive.

Here, we discuss what is required to plug both schemes into our algorithm. 
For the measurement-based uncomputation scheme of Ref~\cite{LuoEtAl26}, we review their construction at a high level and why it can be imported to our setting ``out-of-the-box.'' 
In contrast, we will need some additional machinery to import the dialog multiplication of Refs~\cite{BabbushEtAl26,Schrottenloher26}. We discuss this below, and then \hyperref[app:dialog-bounds-worst-case]
{Appendix~\ref*{app:dialog-bounds-worst-case}} provides the guarantees necessary to utilize the primitive in our streamed Regev setting.

\paragraph{Dialog schemes. \cite{BabbushEtAl26, Schrottenloher26}}

Dialog multiplication turns the information produced by the Euclidean
algorithm into a program for multiplying in-place.  Standard
constructions for in-place multiplication pose in-place
multiplication as several out-of-place arithmetic operations, which can utilize a large
workspace.  
In contrast, dialog multiplication realizes  in-place
 multiplication directly, by using the Euclidean trace to temporarily change how the multiplier $x$ is
represented. 
Although this approach underlies the multiplication used in the
elliptic-curve results of Babbush \textit{et
al}.~\cite{BabbushEtAl26} and Schrottenloher~\cite{Schrottenloher26}, the former team does not provide its underlying
arithmetic circuits. So, we only discuss the construction of
Schrottenloher~\cite{Schrottenloher26}.

The resource-optimized multiplier of
Ref.~\cite{Schrottenloher26} uses
$4.355n+O(\sqrt n)$
qubits. 
This is an empirically supported estimate for random inputs: the
circuit allocates enough iterations and operand space to accommodate \textit{most}
Euclidean trajectories observed experimentally. However, while the allocated space is plenty for almost all inputs, some inputs will overflow.
In our streamed Regev construction, consider the multiplication required for our backward Fibonacci update. The multipliers are successive components of a traveling eigenstate, obtained
recursively from earlier accumulator values. In other words, they are correlated
across levels and, therefore, are not fresh random inputs. Moreover, each multiplier may
be in superposition due to our use of masking, and so the circuit must act correctly on basically every superposition branch. As a result, we cannot immediately import this $4.355n+O(\sqrt n)$ space result into
our streamed Regev construction.

In \hyperref[app:dialog-bounds-worst-case]
{Appendix~\ref*{app:dialog-bounds-worst-case}} we give background on the dialog method of Ref.~\cite{Schrottenloher26} and provide a
worst-case space allocation that accommodates every promised multiplier
$x\in\ZZ_N^*$. It
implements the multiplication correctly, over the full support of our
accumulator state, and uses $S_{\mathrm{bb}}
=\frac{16}{3}n+O(1)=5.33n+O(1)$
qubits. Compared with Schrottenloher's empirical estimate, this worst-case space requires almost one additional (n)-qubit register. 
Finally, note that the backward Fibonacci update does not require a separate
multiplication primitive. We can simply run the circuit in reverse to implement the desired map
\begin{equation}
\ket{x}\ket{y}
\longmapsto
\ket{x}\ket{x^{-1}y\bmod N},
\end{equation}
with the same gate, space, and depth costs. Swapping the two output registers
gives exactly the Fibonacci accumulator update used by our streamed construction.

\paragraph{Measurement-based uncomputation scheme of Luo \textit{et al.} \cite{LuoEtAl26}}

Luo et al.~\cite{LuoEtAl26} give a measurement-assisted implementation of
in-place modular multiplication using $3n + O(\log n)$ space, at the expense of $O(n^2)$ toffolis. Their
technique also implements the inverse logical operation, which is exactly the backward Fibonacci update $\ket{x}_X\ket{y}_Y
    \longmapsto
    \ket{x^{-1}y\bmod N}_X\ket{x}_Y$,
with the same cost parameters. Their construction follows the steps below; we write out the inverse logical operation because this is the one we actually use.
\begin{align}
    \ket{x}_X\ket{y}_Y\ket{0}_A
    &\longrightarrow
    \ket{x^{-1}}_X\ket{y}_Y\ket{0}_A \\
    &\longrightarrow
    \ket{x^{-1}}_X\ket{y}_Y\ket{x^{-1}y\bmod p}_A\\
    &\longrightarrow
    e^{i\phi(y)}\ket{x^{-1}}_X\ket{0}_Y\ket{x^{-1}y\bmod p}_A \hspace{5mm} &\text{[unknown phase $\phi(y)$ acquired]}\\
    &\longrightarrow
    e^{i\phi(y)} \ket{x}_X\ket{0}_Y\ket{x^{-1}y\bmod p}_A \\
    &\longrightarrow
    \ket{x}_X\ket{y}_Y\ket{x^{-1}y\bmod p}_A \hspace{5mm} &\text{[phase cleaned up using $y$]}
    \\
    &
    \longrightarrow
    \ket{x}_X\ket{0}_Y\ket{x^{-1}y\bmod p}_A.
\end{align}
They compute temporary products in the clean register $A$, utilize measurement-based uncomputation to reset the $Y$
register to $\ket{0}$, and eventually reconstruct its value to correct the
measurement-dependent phase before returning ancillary space to $\ket{0}$.
Unlike the probabilistic arithmetic construction of
Schrottenloher~\cite{Schrottenloher26}, the circuit of
Luo et al.~\cite{LuoEtAl26} is exact on every promised input
$x\in\ZZ_N^*$.

We can immediately extend this construction because it works exactly on all inputs. The only detail that needs to be checked, when applying this construction to our Regev accumulator updates, is that this circuit holds for a non-prime modulus $N$. 
Luo \textit{et al.} state their construction for a prime modulus $p$.
Luckily, their
correctness argument uses primality only to ensure that the multiplier $x$ is
invertible: 
the EEA applies whenever $\gcd(x,N)=1$. Their remaining steps
use only reversible modular arithmetic
\cite[Sections~3 and~5, Equations~(4)--(7)]{LuoEtAl26}. 
Therefore,
the same construction
and resource bounds apply for an odd composite modulus \(N\) on
the promised subspace \(x\in\ZZ_N^*\). In both (1) our backwards Fibonacci accumulator update and (2) the selected-product update $p(c)=\prod_i a_i^{c_i}$ into one accumulator state, the multiplier lies in \(\ZZ_N^*\).

\subsubsection{Regev factoring in lowest-yet space}

We now combine the correctness of our streamed phase probes with the
phase-estimation guarantees  of
\refsec{sec:iterative-fibonacci-phase-estimation} and Regev's lattice-postprocessing guarantees \cite{Regev23}.
We first show that we can factor, using our streaming circuit, with constant probability. 
Moreover, in this statement we use the approximate mask $\ket{R_N}$ in place of $\ket{\mathsf M_N}$.
Then, we specify the concrete costs of our streaming circuit. We give a couple different costs by plugging in different primitives for in-place modular 
multiplication.

\begin{corollary}[Correctness of streamed Regev factoring]
\label{cor:streamed-regev-factoring-correctness}
Assume \refconj{conj:short-nontrivial-relation}. Let
$p_{\mathrm{fail}}$ be any polynomial, and choose $K$ and
$M_{\mathrm{lev}}$ according to
\refeqs{eq:regev-final-fibonacci-level}
and~\eqref{eq:regev-per-level-fibonacci-probes} respectively, with
$p=p_{\mathrm{fail}}$. For $k \in \{2,3,...,K\}$, let $M_k = M_{\mathrm{lev}}$. 
Execute \refalg{alg:streamed-regev-run} independently $d+4$ times,
using $\ket{R_N}$ in place of $\ket{\mathsf M_N}$ in each execution.
Regev's classical postprocessing then recovers a nontrivial factor of $N$
with probability at least
\begin{equation}
    \frac{1}{4}
    -\frac{1}{p_{\mathrm{fail}}(d)}
    -\delta_{\mathrm{mask}},
    \qquad \text{where } 
    \delta_{\mathrm{mask}}
    \coloneqq
    (d+4)
    \sqrt{1-\frac{\varphi(N)}{N}}.
    \label{eq:streamed-regev-factoring-success}
\end{equation}
For balanced semiprimes,
$\delta_{\mathrm{mask}}=2^{-\Omega(n)}$, and therefore the complete procedure
recovers a nontrivial factor of $N$ with constant probability.
\end{corollary}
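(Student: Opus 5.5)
I plan to argue in three layers: an idealized circuit with the exact mask, a reduction to ideal phase estimation, and then a total-variation correction for replacing the exact mask by the approximate one. First I will work with the ideal mask $\ket{\mathsf M_N}$. By \refprop{prop:streamed-regev-correctness}, each execution of \refalg{alg:streamed-regev-run} produces a transcript $\tau^{(j)}$ whose law is exactly that of the following reference experiment: draw $\theta^{(j)}$ uniformly from $\mathcal L^*/\ZZ^d$, then apply the ideal controlled powers $U_i^{F_k}$ to $\ket{u_{\theta^{(j)}}}$ with the same adaptive rotations. Because the executions are independent, the $d+4$ draws $\theta^{(1)},\ldots,\theta^{(d+4)}$ are independent and uniform. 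This is exactly the hypothesis of \refthm{thm:regev-noisy-sample-recovery}.

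Conditioned on $\theta^{(j)}$, the $i$th control outcome at each probe follows the scalar likelihood \refeq{eq:fibonacci-probe-likelihood} with phase $\theta^{(j)}_i$. Hence the estimator of \refthm{thm:slow-refinement-fibonacci-phase-estimation}, run coordinatewise with $M_k=M_{\mathrm{lev}}$, applies directly. \refcor{cor:fibonacci-phase-estimation-for-regev}, with $p=p_{\mathrm{fail}}$, then shows that all $d+4$ vectors $w^{(j)}$ meet \refeq{eq:regev-required-vector-precision} except with probability $1/p_{\mathrm{fail}}(d)$.

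Next I will combine this with Regev's recovery guarantee. Let $E$ be the event that every estimate is accurate, and let $G$ be the event that postprocessing succeeds given accurate estimates. One subtlety: \refthm{thm:regev-noisy-sample-recovery} gives probability at least $1/4$ over the uniform $\theta$'s. However, $E$ depends on the $\theta$'s only through the estimator's failure, and the estimator's failure bound holds for every fixed $\theta$. So I will argue
\[
\Pr[\text{success}] \ge \Pr[G]-\Pr[\bar E] \ge \tfrac14-\tfrac{1}{p_{\mathrm{fail}}(d)},
\]
via the union bound $\Pr[G\cap E]\ge \Pr[G]-\Pr[\bar E]$. This avoids any need for independence between $E$ and $G$. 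I expect this to be the step needing the most care: I must make sure the $1/4$ is a statement about the $\theta$ distribution alone, with postprocessing applied to any $w$ satisfying the precision bound.

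Finally I will account for the approximate mask. \refprop{prop:approximate-mask-error} shows that each run's transcript distribution shifts by at most $\sqrt{1-\varphi(N)/N}$ in total variation. For the $d+4$ independent runs, the joint transcript distributions are product distributions, and total variation is subadditive over products. The joint shift is therefore at most $\delta_{\mathrm{mask}}=(d+4)\sqrt{1-\varphi(N)/N}$. Since postprocessing is a classical function of the transcripts, the success probability drops by at most $\delta_{\mathrm{mask}}$, which yields \refeq{eq:streamed-regev-factoring-success}. For balanced semiprimes, \refeq{eq:semiprime-mask-transcript-error} gives $\sqrt{1-\varphi(N)/N}=O(2^{-n/4})$. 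With $d=O(\sqrt n)$, this makes $\delta_{\mathrm{mask}}=2^{-\Omega(n)}$. Choosing, say, $p_{\mathrm{fail}}(d)=16$ then leaves a constant success probability for large $n$.
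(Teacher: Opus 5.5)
Your proposal is correct and follows the paper's proof essentially step for step. You get exact-mask correctness from \refprop{prop:streamed-regev-correctness} and \refcor{cor:fibonacci-phase-estimation-for-regev}, and the bound $\tfrac14-1/p_{\mathrm{fail}}(d)$ from \refthm{thm:regev-noisy-sample-recovery}; your explicit union bound $\Pr[G\cap E]\geq\Pr[G]-\Pr[\bar E]$ is what the paper leaves implicit. You then apply a $(d+4)$-fold total-variation bound followed by data processing; your product subadditivity is exactly the paper's one-run-at-a-time hybrid argument.

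One bookkeeping point you inherit from the paper: \refalg{alg:rigorous-slow-refinement-fibonacci-estimator} uses $2M_{\mathrm{lev}}$ level-$2$ probes, so with $M_2=M_{\mathrm{lev}}$ it does not apply verbatim. Splitting those $M_{\mathrm{lev}}$ probes between the two quadratures gives a base-case failure probability of at most $4e^{-\lfloor M_{\mathrm{lev}}/2\rfloor/8}\leq 4e^{-M_{\mathrm{lev}}/24}$ for $M_{\mathrm{lev}}\geq2$. Hence the $2Ke^{-M_{\mathrm{lev}}/24}$ bound, and the rest of your argument, are unaffected.
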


\begin{proof}
First suppose that every execution begins with the ideal mask
$\ket{\mathsf M_N}$. \refprop{prop:streamed-regev-correctness}
shows that each streamed run has the same measurement-transcript
distribution as the corresponding sequence of ideal Fibonacci phase
probes. \refcor{cor:fibonacci-phase-estimation-for-regev} then
shows that all $d+4$ estimated vectors satisfy Regev's precision
requirement simultaneously,
except with probability at most
$1/p_{\mathrm{fail}}(d)$. By
\refthm{thm:regev-noisy-sample-recovery}, Regev's classical
postprocessing recovers a nontrivial factor with probability at least
$1/4$. Thus the exact-mask procedure succeeds with probability at least
\begin{equation}
    \frac{1}{4}
    -\frac{1}{p_{\mathrm{fail}}(d)}.
\end{equation}

Now consider using the approximate mask $\ket{R_N}$ in every streamed run.
Let $\widetilde P_{\mathsf M}$ denote the joint distribution of all $d+4$
transcripts when every run uses $\ket{\mathsf M_N}$, and let
$\widetilde P_R$ denote the corresponding joint distribution when every run
uses $\ket{R_N}$.
Here we use tildes to denote joint distributions across the streamed runs. 
We will show that total variation
distance between these distributions is bounded by $\delta_{\mathrm{mask}}$. 
To compare the joint distributions, replace the exact mask by
$\ket{R_N}$ one run at a time. 
\refprop{prop:approximate-mask-error} shows that each replacement
$\ket{\mathsf M_N}$ by $\ket{R_N}$ changes the transcript distribution
of a streamed run by total variation distance at most $\sqrt{1-\frac{\varphi(N)}{N}}$, while leaving the other runs unchanged. Recall that $\varphi(N)$ counts the number of residues invertible modulo $N$.
The triangle inequality over the
$d+4$ replacements gives
\begin{equation}
    d_{\mathrm{TV}}
\!\left(
    \widetilde P_R,
    \widetilde P_{\mathsf M}
\right)
    \leq
    (d+4)
    \sqrt{1-\frac{\varphi(N)}{N}}
    =
    \delta_{\mathrm{mask}}.
\end{equation}
Finally, the classical phase estimator and Regev's lattice postprocessing
are further operations on these transcripts, so the data-processing
inequality implies that their output distributions differ by at most the
same amount. In particular, the probability of recovering a nontrivial factor
can decrease by at most $\delta_{\mathrm{mask}}$, proving
\refeq{eq:streamed-regev-factoring-success}.

For a balanced semiprime $N=pq$, both factors have
$n/2+O(1)$ bits, so $p,q=2^{n/2+O(1)}$. Using the fact that ${\varphi(N)}
    =
    {(p-1)(q-1)}$, we can plug this into our distance bound. Simple accounting of the residues modulo $N$ gives
\begin{align}
    \sqrt{1-\frac{\varphi(N)}{N}}
    &=
    \sqrt{1-\frac{(p-1)(q-1)}{pq}} \\
    &=
    \sqrt{
    \frac{1}{p}
    +\frac{1}{q}
    -\frac{1}{pq}} \\
    &=
    O(2^{-n/4}).
\end{align}
Since $d=\floor{\sqrt n}$, multiplication by $d+4$ changes this quantity
only by a polynomial factor, and therefore
$\delta_{\mathrm{mask}}=2^{-\Omega(n)}$.
\end{proof}

\begin{corollary}[Space and multiplication count per Regev run]
\label{cor:streamed-regev-resource-bounds}
For an execution of \refalg{alg:streamed-regev-run} using the analytic phase-estimation parameters of
\refcor{cor:fibonacci-phase-estimation-for-regev}, the algorithm invokes an $n$-bit in-place modular multiplication primitive $O(\sqrt n\log n)$ times. 

The peak space is determined by the multiplication primitive used. If the space for that primitive is $M(n)$, then the algorithm requires $n + M(n) + O(\sqrt n)$ qubits.
Using the different primitives below, we obtain constant factor estimates:
\begin{enumerate}
    \item Exact Dialog multiplier of
    \refapp{app:dialog-bounds-worst-case}: $\frac{19}{3}n+O(\sqrt n)
        =
        6.33... n+O(\sqrt n) \text{ qubits.}$
    \item Measurement-based uncomputation multiplier of
    Luo et al.~\cite{LuoEtAl26}: $4n+O(\sqrt n) \text{ qubits.}$
\end{enumerate}
\end{corollary}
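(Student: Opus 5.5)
The plan is to read both claims off \refthm{thm:streamed-phase-probe-algorithm}, specialized to the analytic parameters of \refcor{cor:fibonacci-phase-estimation-for-regev}.

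\textbf{Multiplication count.} With $M_k=M_{\mathrm{lev}}$ for all $k$, the run performs $K-1$ backward Fibonacci updates and $M=(K-1)M_{\mathrm{lev}}$ clean same-order updates. Each backward update calls the big-by-big in-place primitive once: run it in reverse to get $x^{-1}y$, then swap registers. Each probe calls the small-by-big primitive once, as in \reflem{lem:clean-same-order-accumulator-update}. The product tree operates only on $O(\sqrt n\log n)$-bit integers, so it invokes no $n$-bit multiplication. The total is therefore $(K-1)(1+M_{\mathrm{lev}})$. Since $K=\Theta(\sqrt n)$ and $M_{\mathrm{lev}}=O(\log n)$, this is $O(\sqrt n\log n)$.

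\textbf{Space bound.} I will evaluate the four terms of \refeq{eq:streamed-algorithm-qubit-count} in turn.
\begin{enumerate}
\item $S_{\mathrm{mask}}=2n+O(1)$ by \reflem{lem:uniform-residue-mask-preparation}. This is at most $n+M(n)$, since any in-place multiplier holds two $n$-bit operands, so $M(n)\geq 2n$.
\item $S_{\mathrm{bb}}=M(n)$ by definition, because the backward update is the primitive run in reverse.
\item $2n+S_{\mathrm{tree}}=2n+\widetilde O(\sqrt n)$ by \reflem{lem:selected-product-computation}.
\item $n+d+S_{\mathrm{sb}}$ is the dominant term. I will use the general in-place primitive for the small-by-big step. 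Its operand register holds $p(c)$ zero-padded to $n$ bits, which is valid because $p(c)<N$ and $p(c)\in\ZZ_N^*$. The root register $P$ already sits inside that operand slot, so $S_{\mathrm{sb}}\leq M(n)$. With $d=O(\sqrt n)$, this term is $n+M(n)+O(\sqrt n)$.
\end{enumerate}

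\textbf{Main obstacle: the lower-order terms.} The product-tree term contributes $\widetilde O(\sqrt n)=O(\sqrt n\log n)$, not $O(\sqrt n)$. It only sits below $n+M(n)+O(\sqrt n)$ because $M(n)\geq 2n$ gives $n+M(n)\geq 3n$, which exceeds $2n+\widetilde O(\sqrt n)$ for large $n$. So the maximum over the four terms is $n+M(n)+O(\sqrt n)$. I will state this dominance explicitly, together with the fact that $P$ is counted inside the primitive's operand slot, so the tree's polylog overhead does not leak into the headline bound.

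\textbf{Concrete instances.}
\begin{enumerate}
\item For the exact dialog multiplier of \refapp{app:dialog-bounds-worst-case}, $M(n)=\tfrac{16}{3}n+O(1)$, giving $\tfrac{19}{3}n+O(\sqrt n)$ qubits.
\item For Luo et al., $M(n)=3n+O(\log n)$, giving $4n+O(\sqrt n)$ qubits. This primitive applies to composite $N$ because every multiplier lies in $\ZZ_N^*$, as argued in \refsec{sec:implementing-multiplication-unitaries}.
\end{enumerate}
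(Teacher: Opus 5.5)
Your proposal is correct and follows the paper's proof essentially step for step. You count $\sum_k M_k+(K-1)$ invocations of the $n$-bit primitive, then read the space off the four-term maximum in \refthm{thm:streamed-phase-probe-algorithm} with $S_{\mathrm{sb}}$ bounded by $S_{\mathrm{bb}}$; your zero-padding of $P$ and the observation $M(n)\geq 2n$ just make explicit the dominance step the paper states without justification. (Strictly speaking, the analytic estimator of \refapp{app:asymptotic_factoring_proof} takes $2M_{\mathrm{lev}}$ probes at level $2$, so $M=KM_{\mathrm{lev}}$ rather than $(K-1)M_{\mathrm{lev}}$, which does not affect the $O(\sqrt n\log n)$ bound.)
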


\begin{proof} 
Each execution of \refalg{alg:streamed-regev-run} invokes an $n$-bit in-place modular-multiplication primitive $\sum_{k=2}^{K}M_k+(K-1)$
times: once for each of the phase probes and once for each of the
$K-1$ backward Fibonacci updates. With the analytic phase-estimation
parameters of \refcor{cor:fibonacci-phase-estimation-for-regev}, each $M_k = O(\log n)$ and $K = \Theta(\sqrt n)$. Putting this together gives $O(\sqrt n\log n)$ in-place modular-multiplications. 
Note that the
product-tree arithmetic was excluded from this count because all of its
operands have $o(n)$ bits.

For space, \refthm{thm:streamed-phase-probe-algorithm} gives total qubit count in terms of the space required to prepare the mask $S_{\mathrm{mask}}$, the tree $S_{\mathrm{tree}}$, the Fibonacci accumulator update $S_{\mathrm{bb}}$, and product of primes into the traveling eigenstate $S_{\mathrm{sb}}$:  
\begin{equation}
    S_{\mathrm{stream}}
    =
    \max\!\left\{
        S_{\mathrm{mask}},
        S_{\mathrm{bb}},
        2n+S_{\mathrm{tree}},
        n+d+S_{\mathrm{sb}}
    \right\}.
\end{equation}
Here $S_{\mathrm{mask}}=2n+O(1)$ by
\reflem{lem:uniform-residue-mask-preparation},
$S_{\mathrm{tree}}=\widetilde O(d)=o(n)$ by
\reflem{lem:selected-product-computation}, and
$d=\floor{\sqrt n}$. 
The
small-by-big update (``sb'') can be implemented using the same $n$-bit primitive
as the big-by-big update (``bb''). 
We may therefore take
$S_{\mathrm{sb}}=S_{\mathrm{bb}}$. Plugging in all these separate space costs, we have 
\begin{equation}
    S_{\mathrm{stream}}
    =
    \max\!\left\{
        2n + O(1),
        S_{\mathrm{bb}},
        2n+o(n),
        n+\floor{\sqrt n}+S_{\mathrm{bb}}
    \right\}.
\end{equation}
$S_\mathrm{bb}$ is the space required for an $n$-bit in-place modular multiplication.
Therefore, 
the final item in the list sets the dominant space cost of the algorithm. This is  $n + S_\mathrm{bb} + O(\sqrt n)$ qubits.

We can now consider both in-place modular multiplication primitives. 
For the exact Dialog primitive,
$S_{\mathrm{bb}}=16n/3+O(1)$, and therefore we need $\frac{19}{3}n+O(\sqrt n)$ qubits. 
For the construction of Luo et al.,
$S_{\mathrm{bb}}=3n+O(\log n)$, and we need $  4n+O(\sqrt n)$ qubits.
\end{proof}

\paragraph{Empirically improved multiplication count with Bayesian Fibonacci IQPE protocol.}
Our numerical experiments
find that the adaptive Bayesian estimator uses approximately six
probes per Fibonacci level, for the cryptographically relevant problem sizes we simulate.
We look at $n=4096, 2048, \text{ and }1024.$ 
This is substantially better than the analytic phase estimator of
\refsec{sec:fibonacci-estimator-analysis}. 
As a heuristic, if approximately six simultaneous probes
suffice at every Fibonacci level, then including the one backward Fibonacci
update per level gives $30.25\sqrt n$
in-place modular multiplications per Regev run. 
This comes from needing 
$\sum_{k=2}^{K}M_k+(K-1)$ in-place modular multiplications for our algorithm. Here, we set $M_k = 6$ and so we need $7(K-1)$ multiplications. For the $R=2^{3\sqrt n}$ choice of previous work \cite{rv24, cryptoeprint:2024/636}, the largest required Fibonacci
level satisfies
\begin{equation}
    K
    =
    \log_\phi R+O(1)
    \approx
    4.321\sqrt n+O(1).
\end{equation}
Therefore, all together we need $7(K-1)
    \approx
    30.25\sqrt n$
$n$-bit in-place modular multiplications per Regev run.

\vspace{3mm}
\paragraph{Acknowledgments.}
The authors would like to thank Oded Regev, Ryan Babbush, Martin Ekerå, Seyoon Ragavan, Greg Meyer, Thiago Bergamaschi, Tommy Schuster, and Noah Shutty for helpful discussions and feedback on this work. 
KVK gratefully acknowledges the hospitality and support of Google Quantum AI, where this work was completed.

\paragraph{AI Acknowledgment.}
The core ideas in this paper are human generated, and all content in this paper is human verified.
AI assistance was used when writing some of the numerical code and when writing some of the manuscript.

\ifllncs
\bibliographystyle{splncs04}
\bibliography{main}
\else
\bibliographystyle{alpha-doi}
\bibliography{main}
\fi

\appendix

\section{Proof of \refthm{thm:slow-refinement-fibonacci-phase-estimation}: \(O(\sqrt n\log n)\) phase probes per Regev run}\label{app:asymptotic_factoring_proof}

Our proof proceeds by construction. 
We first describe the estimator and then state it formally. 
The construction has a base and inductive step. We first locate $\theta$ on the full phase
circle using two measurement quadratures. We then move through the Fibonacci
levels, using the estimate from level $k-1$ to isolate out and then reduce the remaining
phase error at level $k$. At level $k$ we take enough measurements to guarantee (with high probability) that we have reduced the phase error enough that the wrap around produced by transitioning $F_k \rightarrow F_{k+1}$ does not produce ambiguity in the phase.

\paragraph{Simple phase estimator construction for proof.}
Let $M_{\mathrm{lev}}$ be number of phase probes at each level, for all levels except level $2$. At this initial level, we will take $2M_{\mathrm{lev}}$.
At the initial level $F_2=1$, we take $M_{\mathrm{lev}}$ probes with rotation angle $0$ and
$M_{\mathrm{lev}}$ probes with rotation angle $-\pi/2$. We label their respective outcomes as $b^{(C)}_{2,s}\in\{0,1\}$ and $b^{(S)}_{2,s}\in\{0,1\}$, 
for $s\in[M_{\mathrm{lev}}]$. Following \refeq{eq:fibonacci-probe-likelihood}, we can use these outcomes to estimate
$\cos(2\pi\theta)$ and $\sin(2\pi\theta)$, respectively. 
\begin{equation}
    \widehat C_2
    \coloneqq
    \frac{1}{M_{\mathrm{lev}}}\sum_{s=1}^{M_{\mathrm{lev}}}(-1)^{b^{(C)}_{2,s}},
    \qquad
    \widehat S_2
    \coloneqq
    \frac{1}{M_{\mathrm{lev}}}\sum_{s=1}^{M_{\mathrm{lev}}}(-1)^{b^{(S)}_{2,s}}.
    \label{eq:fibonacci-bootstrap-statistics}
\end{equation}
Having both
quadratures removes the reflection ambiguity that would remain from either
one alone. We therefore can complete the base case by initializing
\begin{equation}
    \widehat\theta_2
    \coloneqq
    \frac{1}{2\pi}
    \operatorname{atan2}(\widehat S_2,\widehat C_2)
    \pmod 1.
    \label{eq:fibonacci-bootstrap-estimate}
\end{equation}

Now we can consider a generic Fibonacci level. Assume that we are at level $k\geq 3$, and assume that we have obtained
$\widehat\theta_{k-1}$. At Fibonacci level $k\geq 3$, we choose the same rotation $\phi_k = \phi_{k,s}$ for every probe and
\begin{equation}
    \phi_k
    \coloneqq
    2\pi\left(
        \frac14-F_k\widehat\theta_{k-1}
    \right)
    \pmod{2\pi}.
    \label{eq:slow-refinement-rotation}
\end{equation}
This rotation places the phase predicted by $\widehat\theta_{k-1}$ at the
quadrature point $\pi/2$, where the cosine in
\refeq{eq:fibonacci-probe-likelihood} has maximum slope. 
As a result, the
displacement from $\pi/2$ is exactly
$2\pi F_k(\theta-\widehat\theta_{k-1})$.
We take $M_{\mathrm{lev}}$ probes at level $k$ with this fixed $\phi_k$ and obtain outcomes $\{b_{k,s}\}_{s=1}^{M_{\mathrm{lev}}}$.
For the resulting outcomes, define the signed sample mean
\begin{equation}
    \overline Z_k
    \coloneqq
    \frac{1}{M_{\mathrm{lev}}}\sum_{s=1}^{M_{\mathrm{lev}}}(-1)^{b_{k,s}+1}.
    \label{eq:fibonacci-signed-sample-mean}
\end{equation}
The estimate $\widehat\theta_{k-1}$, and hence $\phi_k$, depends on the
outcomes from the earlier levels. Therefore, once $\widehat\theta_{k-1}$ and hence $\phi_k$ are fixed, 
these $M_{\mathrm{lev}}$ new probe outcomes are independent. They have a common expectation, which follows from \refeq{eq:fibonacci-probe-likelihood}:
\begin{equation}
    \mathbb E[\overline Z_k]
    =
    \sin\!\left(
        2\pi F_k(\theta-\widehat\theta_{k-1})
    \right).
    \label{eq:fibonacci-residual-expectation}
\end{equation}
Thus $\arcsin(\overline Z_k)$ estimates the residual phase left after
subtracting $\widehat\theta_{k-1}$. Dividing this residual by $F_k$ gives the
level-$k$ correction:
\begin{equation}
    \widehat\theta_k
    \coloneqq
    \widehat\theta_{k-1}
    +
    \frac{1}{2\pi F_k}\arcsin(\overline Z_k)
    \pmod 1.
    \label{eq:slow-refinement-phase-update}
\end{equation}
\refalg{alg:rigorous-slow-refinement-fibonacci-estimator} collects
these steps.

\vspace{5mm}
\begin{algorithm}[H]
    \KwData{A final Fibonacci level $K\geq2$, a common probe budget
    $M_{\mathrm{lev}}\geq1$, and access to the scalar probes in
    \refeq{eq:fibonacci-probe-likelihood}.}
    \KwResult{An estimate $w\in\RR/\ZZ$ of $\theta$.}
    \begin{enumerate}
        \item \textit{Base case ($k=2$)}: \hspace{3mm} 
        Take $M_{\mathrm{lev}}$ level-$2$ probes with rotation angle $0$, obtaining
        $b^{(C)}_{2,1},\ldots,b^{(C)}_{2,M_{\mathrm{lev}}}$, and
        take $M_{\mathrm{lev}}$ level-$2$ probes with rotation angle $-\pi/2$, obtaining
        $b^{(S)}_{2,1},\ldots,b^{(S)}_{2,M_{\mathrm{lev}}}$. 
        Using this data, compute $\widehat C_2$ and $\widehat S_2$, defined in \refeq{eq:fibonacci-bootstrap-statistics}. 
        Set
        \[
            \widehat\theta_2
            \gets
            \frac{1}{2\pi}
            \operatorname{atan2}(\widehat S_2,\widehat C_2)
            \pmod 1.
        \]

        \item \textit{Generic Fibonacci update}: \hspace{3mm}
        
        For
        $k=3,\ldots,K$:
        \begin{enumerate}
            \item Compute
            $\phi_k
            =
            \frac{\pi}{2}-2\pi F_k\widehat\theta_{k-1}
            \pmod{2\pi}.$
            \item Take $M_{\mathrm{lev}}$ level-$k$ probes with rotation angle $\phi_k$, obtaining
            $b_{k,1},\ldots,b_{k,M_{\mathrm{lev}}}$, and compute $\overline Z_k$ defined in \refeq{eq:fibonacci-signed-sample-mean}. Set 
            \[
                \widehat\theta_k
                \gets
                \widehat\theta_{k-1}
                +
                \frac{1}{2\pi F_k}\arcsin(\overline Z_k)
                \pmod 1.
            \]
        \end{enumerate}
        \item Return $w\coloneqq \widehat\theta_K$
    \end{enumerate}
    
    \caption{Rigorous slow-refinement Fibonacci phase estimation.}
    \label{alg:rigorous-slow-refinement-fibonacci-estimator}
\end{algorithm}

\vspace{3mm}

\paragraph{Formal Analysis.}
We now prove the accuracy and probe complexity of
\refalg{alg:rigorous-slow-refinement-fibonacci-estimator}.
Proving this performance guarantees boils down to bounding the intermediate estimates over the course of our algorithm.
This setup of \refalg{alg:rigorous-slow-refinement-fibonacci-estimator} already gives us a hint of how we want to bound the intermediate estimators $\widehat\theta_{k}$ in our performance guarantees below. Assume that the range of $\arcsin$ is $[-\pi/2,\pi/2]$, which we specify because sine is not one-to-one on the full
phase circle. As a result, we will want the residual angle $2\pi F_k(\theta-\widehat\theta_{k-1})$, which we estimate with $\arcsin$, to be in
$[-\pi/2,\pi/2]$. Crucially, on this interval the principal $\arcsin$ recovers the
residual without ambiguity. Taking the absolute value of the residual angle gives us an expression in terms of the torus distance, 
\begin{equation}
    2\pi F_k\operatorname{dist}_{\RR/\ZZ}
    (\widehat\theta_{k-1},\theta) \leq \pi/2,
\end{equation}
and so the precision guarantee we want to bound the probability of for all our estimators $\widehat\theta_k$ will be 
\begin{equation}
    \operatorname{dist}_{\RR/\ZZ}
    (\widehat\theta_k,\theta)
    \leq
    \frac{1}{8F_k}.
    \label{eq:slow-refinement-phase-invariant}
\end{equation}
Indeed, if this invariant holds at level $k$, then
\[
    2\pi F_{k+1}
    \operatorname{dist}_{\RR/\ZZ}
    (\widehat\theta_k,\theta)
    \leq
    2\pi\frac{F_{k+1}}{8F_k}
    \leq
    \frac{\pi}{2}.
\]
We obtain the last line because across all $k$, we have $F_k / F_{k-1} \leq 2$. 
Thus the residual at the next level remains within the principal branch of
$\arcsin$.
We can recover it without introducing
a new alias.

\vspace{3mm}
We will proceed by induction. 
The proof follows the levels of the algorithm. We first show that the
two-quadrature bootstrap establishes
\refeq{eq:slow-refinement-phase-invariant} at $k=2$. We then show
that, conditioned on the invariant holding at level $k-1$, one generic
Fibonacci update establishes it at level $k$ except with exponentially small
probability. A union bound over the levels gives the accuracy theorem for one
scalar phase.

\begin{lemma}[Base Case: Two-quadrature bootstrap]
\label{lem:two-quadrature-fibonacci-bootstrap}
The estimate $\widehat\theta_2$ produced by
\refalg{alg:rigorous-slow-refinement-fibonacci-estimator} satisfies
\begin{equation}
    \Pr\!\left[
        \operatorname{dist}_{\RR/\ZZ}
        (\widehat\theta_2,\theta)
        >
        \frac{1}{8}
    \right]
    \leq
    4e^{-M_{\mathrm{lev}}/8}.
    \label{eq:bootstrap-phase-failure-probability}
\end{equation}
\end{lemma}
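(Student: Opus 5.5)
The plan is to bound the deviation of each quadrature estimate separately with Hoeffding's inequality. Then we convert a small Euclidean error in the point $(\widehat C_2,\widehat S_2)$ into a small angular error via $\operatorname{atan2}$.

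First, compute the means. By \refeq{eq:fibonacci-probe-likelihood} with $F_2=1$ and $\phi=0$, each $(-1)^{b^{(C)}_{2,s}}$ is an independent $\pm1$ variable with mean $\cos(2\pi\theta)$. With $\phi=-\pi/2$, each $(-1)^{b^{(S)}_{2,s}}$ has mean $\cos(2\pi\theta-\pi/2)=\sin(2\pi\theta)$. Hoeffding's inequality for $M_{\mathrm{lev}}$ variables in $[-1,1]$ gives
\begin{equation}
\Pr\bigl[|\widehat C_2-\cos 2\pi\theta|\geq t\bigr]\leq 2e^{-M_{\mathrm{lev}}t^2/2},
\end{equation}
and the same bound holds for $\widehat S_2$. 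A union bound over the two quadratures gives failure probability at most $4e^{-M_{\mathrm{lev}}t^2/2}$. Choosing $t=1/2$ yields exactly $4e^{-M_{\mathrm{lev}}/8}$, matching \refeq{eq:bootstrap-phase-failure-probability}. So it suffices to show that $|\widehat C_2-\cos2\pi\theta|<1/2$ and $|\widehat S_2-\sin2\pi\theta|<1/2$ together force $\operatorname{dist}_{\RR/\ZZ}(\widehat\theta_2,\theta)\leq 1/8$.

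This geometric step is the main obstacle, and the constant must be checked with care. Let $u=(\cos2\pi\theta,\sin2\pi\theta)$ be the unit vector and $v=(\widehat C_2,\widehat S_2)$. The two coordinatewise bounds give $\|v-u\|_2<\sqrt2/2=1/\sqrt2$. For a unit vector $u$ and a perturbation of norm at most $r<1$, the angle between $v$ and $u$ is at most $\arcsin(r)$, since the worst case is when $v-u$ is tangent to the circle of radius $r$ about $u$. Here $r<1/\sqrt2$, so $v\neq 0$ and $\operatorname{atan2}$ is well defined. The angle is then strictly less than $\arcsin(1/\sqrt2)=\pi/4$. Dividing by $2\pi$ gives a phase error below $1/8$, which is the desired bound. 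I will also note that the angular bound is a torus statement, because $\operatorname{atan2}$ returns the angle of $v$ modulo $2\pi$ and the reduction $\pmod 1$ matches the torus distance.

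If the $\arcsin(r)$ worst-case claim needs justification, I will argue as follows. The angle $\alpha$ between $u$ and $v=u+e$ satisfies $\sin\alpha=\|e_\perp\|/\|v\|$, where $e_\perp$ is the component of $e$ orthogonal to $u$. Minimizing $\|v\|$ for fixed $\|e_\perp\|$ and $\|e\|\leq r$ shows that $\sin\alpha\leq r$ whenever $\langle v,u\rangle>0$. Positivity holds because $\langle v,u\rangle\geq 1-r>0$.
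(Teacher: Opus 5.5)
Your proposal is correct and takes essentially the same route as the paper. Both apply Hoeffding to each quadrature at deviation $1/2$, union-bound to get $4e^{-M_{\mathrm{lev}}/8}$, and then use that a perturbation of a unit vector with Euclidean norm at most $1/\sqrt2=\sin(\pi/4)$ rotates it by at most $\pi/4$. Your explicit justification of the $\arcsin(r)$ bound and of $(\widehat C_2,\widehat S_2)\neq(0,0)$ fills in details the paper treats only informally, the latter in a remark after its proof.
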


\begin{proof}
Consider the unit vector $v
    \coloneqq
    \bigl(\cos(2\pi\theta),\sin(2\pi\theta)\bigr),    $ and its estimator $\widehat v$, where $\widehat v
    \coloneqq
    (\widehat C_2,\widehat S_2)$.
\refeq{eq:fibonacci-probe-likelihood} gives that $\mathbb E[\widehat v]
    =
    v$. 
    By \refeq{eq:fibonacci-probe-likelihood}, the summands defining
the empirical means in
\refeq{eq:fibonacci-bootstrap-statistics} are independent,
$\{-1,1\}$-valued random variables with the stated expectations.
    Therefore, applying
Hoeffding's inequality gives
\begin{align}
    \Pr\!\left[
        \left|
            \widehat C_2-\cos(2\pi\theta)
        \right|>\frac12
    \right]
    \leq 2e^{-M_{\mathrm{lev}}/8}, \hspace{5mm}\text{and}\hspace{5mm}
    \notag
    \Pr\!\left[
        \left|
            \widehat S_2-\sin(2\pi\theta)
        \right|>\frac12
    \right]
    \leq 2e^{-M_{\mathrm{lev}}/8}.
    \label{eq:bootstrap-coordinate-hoeffding-bounds}
\end{align}
A union bound over both gives
\begin{equation}
    \Pr\!\left[
        \left|
            \widehat C_2-\cos(2\pi\theta)
        \right|>\frac12
        \ \hspace{5mm}\text{or}\hspace{5mm} \
        \left|
            \widehat S_2-\sin(2\pi\theta)
        \right|>\frac12
    \right]
    \leq
    4e^{-M_{\mathrm{lev}}/8}.
    \label{eq:bootstrap-coordinate-concentration}
\end{equation}
We now convert these coordinatewise concentration bounds into a bound on
the phase estimate. Suppose that neither deviation in
\refeq{eq:bootstrap-coordinate-concentration} occurs. Then
\begin{equation}
    \norm{\widehat v-v}_2
    \leq
    \frac{1}{\sqrt2}
    =
    \sin\!\left(\frac{\pi}{4}\right).
    \label{eq:bootstrap-angle_between_points}
\end{equation}
We want to turn this into a bound on torus distance between $\widehat\theta_2$ and $\theta$. We can do so by thinking about the angular separation between $\widehat v$ and $v$. If this angular separation is more than $\pi/4$, then the length of the vector connecting their endpoints $\norm{\widehat v-v}_2$ must be greater than $\sin (\pi/4)$. Therefore, following \refeqlowerplain{eq:bootstrap-angle_between_points}, the angle between $v$ and $\widehat v$ must be at
most $\pi/4$. The angles of $v$ and $\widehat v$ are $2\pi\theta$ and
$2\pi\widehat\theta_2$, respectively, and so 
\[
    2\pi
    \operatorname{dist}_{\RR/\ZZ}
    (\widehat\theta_2,\theta)
    \leq
    \frac{\pi}{4}.
\]
We can combine this with our union bound over the
two coordinate deviations in
\refeq{eq:bootstrap-coordinate-concentration}: failure of the desired phase bound implies that at least one of the
two coordinate deviations in
\refeq{eq:bootstrap-coordinate-concentration} occurred. Dividing by $2\pi$ then completes the claim.
\end{proof}

Notice that, while not stated in \reflem{lem:two-quadrature-fibonacci-bootstrap},
the proof also shows that $(\widehat C_2,\widehat S_2)\neq(0,0)$ whenever the bootstrap succeeds. This is important because $\operatorname{atan2}$ is undefined at $(0,0)$. It tells us that the claimed bound can fail only through the event in
\refeq{eq:bootstrap-coordinate-concentration}.

Next, we prove a small statement about sine separation, which we abstract out of our inductive step Lemma for reading ease.

\begin{lemma}[Uniform sine separation]
\label{lem:uniform-sine-separation}
For every $x,y\in[-\pi/2,\pi/2]$,
\begin{equation}
    |x-y|>\frac{\pi}{4}
    \quad\Longrightarrow\quad
    |\sin x-\sin y|>1-\frac{1}{\sqrt2}.
    \label{eq:uniform-sine-separation}
\end{equation}
\end{lemma}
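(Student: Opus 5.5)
The natural route is to reduce to a one-variable monotonicity statement. On $[-\pi/2,\pi/2]$ the sine function is strictly increasing, so without loss of generality take $x>y$. Then $|\sin x-\sin y|=\sin x-\sin y$, and the hypothesis becomes $x-y>\pi/4$. Since $\sin$ is increasing, for fixed gap it suffices to show that the minimum of $\sin(y+t)-\sin y$ over admissible $y$ (with $t=x-y$) is controlled. Concretely, I would first use monotonicity to replace $x$ by $y+\pi/4$: as $x>y+\pi/4$ and both lie in the interval, $\sin x>\sin(y+\pi/4)$ (strictly, which yields the strict inequality at the end). So it suffices to prove that
\[
g(y)\coloneqq \sin\!\left(y+\tfrac{\pi}{4}\right)-\sin y\;\geq\; 1-\tfrac{1}{\sqrt2}
\qquad\text{for all } y\in\left[-\tfrac{\pi}{2},\tfrac{\pi}{4}\right].
\]

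Next I would minimize $g$ on this compact interval. By the sum-to-product identity, $g(y)=2\cos\!\left(y+\tfrac{\pi}{8}\right)\sin\tfrac{\pi}{8}$. Here $y+\pi/8$ ranges over $[-3\pi/8,3\pi/8]$, where cosine is minimized at the endpoints, both giving $\cos(3\pi/8)=\sin(\pi/8)$. Hence
\[
\min g = 2\sin^2\tfrac{\pi}{8} = 1-\cos\tfrac{\pi}{4} = 1-\tfrac{1}{\sqrt2}.
\]
This minimum is attained at $y=-\pi/2$ (giving $\sin(-\pi/4)+1$) and at $y=\pi/4$ (giving $1-\sin(\pi/4)$), consistent with the claimed constant.

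Combining the two steps gives $\sin x-\sin y>\sin(y+\pi/4)-\sin y\geq 1-1/\sqrt2$, which is the claim. I expect no real obstacle. The only care points are to keep strictness, which comes from the strict monotonicity of sine applied to $x>y+\pi/4$, and to check that $y+\pi/4$ stays inside $[-\pi/2,\pi/2]$. The latter holds because $y<x-\pi/4\leq\pi/4$.
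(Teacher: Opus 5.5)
Your proof is correct and is essentially the paper's argument. Both rest on the sum-to-product identity together with the fact that the midpoint $(x+y)/2$ is confined because both endpoints lie in $[-\pi/2,\pi/2]$; you first shrink the gap to exactly $\pi/4$ via monotonicity of $\sin$ and then minimize over the midpoint, whereas the paper keeps the general gap $d=|x-y|$, derives $|\sin x-\sin y|\geq 1-\cos d$ from $\cos m\geq\sin(d/2)$, and only then uses monotonicity of $\cos$ on $[0,\pi]$.
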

\begin{proof}
Let $d
    \coloneqq
    |x-y|,$ and $m
    \coloneqq
    \frac{x+y}{2}.$
Because $x,y\in[-\pi/2,\pi/2]$, their midpoint $m$ satisfies $|m|
    \leq
    \frac{\pi}{2}-\frac{d}{2}.$
The trigonometric sum-to-product identity therefore gives
\begin{align}
    |\sin x-\sin y|
    &=
    2\sin\!\left(\frac d2\right)\cos(m)
    \notag\\
    &\geq
    2\sin^2\!\left(\frac d2\right)
    \notag\\
    &=
    1-\cos d.
    \label{eq:principal-sine-separation}
\end{align}
The first line uses the sum-to-product identity. The inequality follows
because
$\cos(m)=\cos|m|
\geq
\cos\!\left(\frac{\pi}{2}-\frac d2\right)
=
\sin\!\left(\frac d2\right),$
while the final line uses the half-angle identity
\(2\sin^2(d/2)=1-\cos d\).
If $d>\pi/4$, then the following proves the claim:
\[
    1-\cos d
    >
    1-\cos\!\left(\frac{\pi}{4}\right)
    =
    1-\frac{1}{\sqrt2}.
\]
\end{proof}

\begin{lemma}[Inductive step: Fibonacci refinement]
\label{lem:one-fibonacci-refinement-step}
Fix $k\geq3$ and any measurement transcript through level $k-1$ for which
\begin{equation}
    \operatorname{dist}_{\RR/\ZZ}
    (\widehat\theta_{k-1},\theta)
    \leq
    \frac{1}{8F_{k-1}}.
    \label{eq:previous-level-phase-invariant}
\end{equation}
Conditioned on this transcript, the next estimate satisfies
\begin{equation}
    \Pr\!\left[
        \operatorname{dist}_{\RR/\ZZ}
        (\widehat\theta_k,\theta)
        >
        \frac{1}{8F_k}
    \right]
    \leq
    2e^{-M_{\mathrm{lev}}/24}.
    \label{eq:one-level-phase-failure-probability}
\end{equation}
\end{lemma}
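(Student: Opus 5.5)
Conditioned on the transcript through level $k-1$, the estimate $\widehat\theta_{k-1}$ and the rotation $\phi_k$ are fixed. The $M_{\mathrm{lev}}$ level-$k$ outcomes are then independent $\{-1,1\}$-valued variables $(-1)^{b_{k,s}+1}$ with common mean $\sin(x)$, where $x\coloneqq 2\pi F_k(\theta-\widehat\theta_{k-1})$ is reduced to the representative of smallest absolute value (by \refeq{eq:fibonacci-residual-expectation}). The plan is to show three things in order: the residual $x$ lies in the principal branch of $\arcsin$; Hoeffding controls $|\overline Z_k-\sin x|$; and \reflem{lem:uniform-sine-separation} turns this into control of $|\arcsin(\overline Z_k)-x|$, which is exactly $2\pi F_k$ times the new error.

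\textbf{Step 1 (no aliasing).} By the invariant \refeq{eq:previous-level-phase-invariant} and $F_k\le 2F_{k-1}$,
\[
|x| = 2\pi F_k\operatorname{dist}_{\RR/\ZZ}(\widehat\theta_{k-1},\theta)\le \frac{2\pi F_k}{8F_{k-1}}\le \frac{\pi}{2},
\]
so $x\in[-\pi/2,\pi/2]$ and $\arcsin(\sin x)=x$. Set $y\coloneqq\arcsin(\overline Z_k)\in[-\pi/2,\pi/2]$, which is well defined since $\overline Z_k\in[-1,1]$. By the update \refeq{eq:slow-refinement-phase-update},
\[
\widehat\theta_k-\theta \equiv \frac{y-x}{2\pi F_k}\pmod 1,
\]
so $\operatorname{dist}_{\RR/\ZZ}(\widehat\theta_k,\theta)\le |y-x|/(2\pi F_k)$. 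Hence failure, meaning a distance greater than $1/(8F_k)$, forces $|y-x|>\pi/4$.

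\textbf{Step 2 (contrapositive of sine separation).} Both $x$ and $y$ lie in $[-\pi/2,\pi/2]$. If $|y-x|>\pi/4$, then \reflem{lem:uniform-sine-separation} gives
\[
|\overline Z_k-\sin x| = |\sin y-\sin x| > 1-\tfrac{1}{\sqrt2}.
\]
So the failure event is contained in the deviation event $\{|\overline Z_k-\mathbb E\overline Z_k|>1-1/\sqrt2\}$.

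\textbf{Step 3 (Hoeffding).} For $M_{\mathrm{lev}}$ independent variables in $[-1,1]$, Hoeffding gives a deviation probability of at most $2\exp(-M_{\mathrm{lev}}t^2/2)$ with $t=1-1/\sqrt2\approx0.293$. Then $t^2/2\approx0.0429>1/24\approx0.0417$, which yields the bound $2e^{-M_{\mathrm{lev}}/24}$.

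The main obstacle is bookkeeping rather than any single estimate. First, one must choose the representative of $x$ correctly so that the modular update genuinely corresponds to $y-x$. Second, the branch condition $F_k/F_{k-1}\le2$ must hold for all $k\ge3$; it does, since $F_3/F_2=2$ and the ratio is at most $2$ thereafter. Third, the numerical constant comparison $t^2/2\ge 1/24$ must be checked; it holds with little slack.
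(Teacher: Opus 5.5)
Your proposal is correct and follows the paper's proof essentially step for step. Both arguments bound the residual angle within the principal branch of $\arcsin$ via $F_k\le 2F_{k-1}$, reduce failure to $|x_k-\arcsin(\overline Z_k)|>\pi/4$, apply \reflem{lem:uniform-sine-separation}, and finish with Hoeffding; your explicit check that $(1-1/\sqrt2)^2/2\approx 0.0429\geq 1/24$ supplies the constant the paper leaves implicit.
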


\begin{proof}
Fix a measurement transcript through level $k-1$ satisfying \refeqlowerplain{eq:previous-level-phase-invariant}. This fixes both $\widehat\theta_{k-1}$ and, therefore, also the rotation
$\phi_k$. All probabilities below are conditioned on this transcript.
Let $e_{k-1}\in[-1/2,1/2)$ be the signed phase error defined by $
    e_{k-1}=
    \theta-\widehat\theta_{k-1}
    \pmod 1.$
    
First, we will set up the quantity the level-$k$ probe actually measures and show it lies in a range where the measurement is invertible.  Define $x_k
    \coloneqq
    2\pi F_ke_{k-1}$
to be the residual angle seen by the level-$k$ probe. Since
$F_k=F_{k-1}+F_{k-2}\leq2F_{k-1}$, \refeq{eq:previous-level-phase-invariant} gives
\begin{align}
    |x_k|
    &=
    2\pi F_k|e_{k-1}| \notag\\
    &= 2\pi F_k
    \operatorname{dist}_{\RR/\ZZ}
    (\widehat\theta_{k-1},\theta)
    \notag\\
    &\leq
    2\pi\frac{F_k}{8F_{k-1}}
    \leq
    \frac{\pi}{2}.
    \label{eq:fibonacci-residual-principal-branch-bound}
\end{align}
Thus $x_k$ lies in the interval on which the principal branch of
$\arcsin$ inverts the sine function.

Next, we will
convert an accuracy guarantee on what the probe estimates in expectation into an accuracy guarantee on
the phase.
By \refeq{eq:fibonacci-residual-expectation}, recall that
$\mathbb E[\overline Z_k]
    =
    \sin(x_k).$
Moreover, the update in
\refeq{eq:slow-refinement-phase-update} and our definition of $e_{k-1}$ give
\begin{align}
    \operatorname{dist}_{\RR/\ZZ}
    (\widehat\theta_k,\theta)
    &\leq
    \left|
        e_{k-1}
        -
        \frac{1}{2\pi F_k}\arcsin(\overline Z_k)
    \right|
    \notag\\
    &=
    \frac{
        \left|
            x_k-\arcsin(\overline Z_k)
        \right|
    }{2\pi F_k}.
    \label{eq:phase-error-from-residual-angle}
\end{align}
Consequently, the desired invariant holds if
$\left|
        x_k-\arcsin(\overline Z_k)
    \right|
    \leq
    \frac{\pi}{4}.$
We will use the contrapositive of this statement as well as \reflem{lem:uniform-sine-separation} to generate an accuracy guarantee on
the phase. The below flow maps what we have shown so far. 
\[
        \operatorname{dist}_{\RR/\ZZ}
        (\widehat\theta_k,\theta)
        >
        \frac{1}{8F_k} 
\quad\Longrightarrow\quad
\left|
        x_k-\arcsin(\overline Z_k)
    \right|
    >
    \frac{\pi}{4} 
    \quad\Longrightarrow\quad
    \left|
        \overline Z_k-\sin(x_k)
    \right|
    >
    1-\frac{1}{\sqrt2}.
\]
The first arrow above came from the contrapositive, and the second arrow came
from taking $x = x_k$ and
$y=\arcsin(\overline Z_k)$ in \reflem{lem:uniform-sine-separation}. 
These event inclusions imply
\begin{equation}
    \Pr\!\left[
        \operatorname{dist}_{\RR/\ZZ}
        (\widehat\theta_k,\theta)
        >
        \frac{1}{8F_k}
    \right]
    \leq
    \Pr\!\left[
        \left|
            \overline Z_k-\sin(x_k)
        \right|
        >
        1-\frac{1}{\sqrt2}
    \right],
    \label{eqn:event-inclusions-inductive-lemma}
\end{equation}
which sets us up to obtain our desired concentration inequality.
Conditioned on the preceding transcript, the $M_{\mathrm{lev}}$ variables whose average
is $\overline Z_k$ are independent, $\{-1,1\}$-valued, and have common
expectation $\sin(x_k)$. We can therefore use Hoeffding's inequality and get
\begin{align}
    \Pr\!\left[
        \left|
            \overline Z_k-\sin(x_k)
        \right|
        >
        1-\frac{1}{\sqrt2}
    \right]
    &\leq
    2e^{-M_{\mathrm{lev}}/24}.
    \label{eq:hoeffding-fibonacci-refinement-bound}
\end{align}
Combining this with \refeq{eqn:event-inclusions-inductive-lemma} completes the claim. 
\end{proof}

\vspace{5mm}

The preceding lemmas establish the invariant at the initial level and show
that it propagates through every subsequent Fibonacci level. Combining
their failure bounds with the probe count gives our main guarantee for $O(\log(n))$ shots per probe.

\begin{theorem}[Restatement of
\refthm{thm:slow-refinement-fibonacci-phase-estimation}]
\label{thm:appendix-slow-refinement-fibonacci-phase-estimation}
For every \(\theta\in\RR/\ZZ\), \(K\geq2\), and \(M_{\mathrm{lev}}\geq1\),
\refalg{alg:rigorous-slow-refinement-fibonacci-estimator} is
classically efficient, uses exactly \(KM_{\mathrm{lev}}\) scalar Fibonacci phase probes,
and outputs \(w=\widehat\theta_K\) satisfying
\[
    \Pr\!\left[
        \operatorname{dist}_{\RR/\ZZ}(w,\theta)
        >
        \frac{1}{8F_K}
    \right]
    \leq
    2K e^{-M_{\mathrm{lev}}/24}.
\]
\end{theorem}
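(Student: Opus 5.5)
The plan is to prove the restatement in three parts: the probe count, classical efficiency, and the accuracy bound. The accuracy bound comes from chaining \reflem{lem:two-quadrature-fibonacci-bootstrap} and \reflem{lem:one-fibonacci-refinement-step} through a union bound over ``first failure'' events.

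\emph{Probe count.} This is direct bookkeeping on \refalg{alg:rigorous-slow-refinement-fibonacci-estimator}. The base case uses $M_{\mathrm{lev}}$ probes at angle $0$ and $M_{\mathrm{lev}}$ probes at angle $-\pi/2$, so $2M_{\mathrm{lev}}$ in total. Each of the $K-2$ levels $k=3,\ldots,K$ uses $M_{\mathrm{lev}}$ probes. The total is therefore $2M_{\mathrm{lev}}+(K-2)M_{\mathrm{lev}}=KM_{\mathrm{lev}}$.

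\emph{Classical efficiency.} Each step maintains only a constant number of reals: sample means, one $\operatorname{atan2}$, one rotation angle $\phi_k$, one $\arcsin$, and one additive update. The total classical work is therefore $O(KM_{\mathrm{lev}})$ arithmetic operations. Since $F_K$ has $O(K)$ bits, carrying $O(K)$ bits of precision suffices, which keeps each operation polynomial cost.

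\emph{Accuracy.} Define $E_k$ as the event that the invariant $\operatorname{dist}_{\RR/\ZZ}(\widehat\theta_k,\theta)\le 1/(8F_k)$ holds, and $B_k \coloneqq E_2\cap\cdots\cap E_{k-1}\cap E_k^c$ as the event that level $k$ is the first failure. Since $\overline{E_K}\subseteq\bigcup_{k=2}^K B_k$, it suffices to bound each $\Pr[B_k]$.
\begin{itemize}
\item For $k=2$, \reflem{lem:two-quadrature-fibonacci-bootstrap} gives $\Pr[B_2]\le 4e^{-M_{\mathrm{lev}}/8}\le 4e^{-M_{\mathrm{lev}}/24}$. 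Here $F_2=1$, so the lemma's threshold $1/8$ matches $1/(8F_2)$.
\item For $k\ge3$, write $\Pr[B_k]\le \Pr[E_k^c\cap E_{k-1}]$. Then condition on the full transcript through level $k-1$. On every transcript in $E_{k-1}$, \reflem{lem:one-fibonacci-refinement-step} bounds the conditional failure probability by $2e^{-M_{\mathrm{lev}}/24}$. Averaging over those transcripts gives $\Pr[B_k]\le 2e^{-M_{\mathrm{lev}}/24}$.
\end{itemize}
Summing the bounds,
\[
\Pr\!\left[\operatorname{dist}_{\RR/\ZZ}(w,\theta)>\tfrac{1}{8F_K}\right]\le 4e^{-M_{\mathrm{lev}}/24}+2(K-2)e^{-M_{\mathrm{lev}}/24}=2Ke^{-M_{\mathrm{lev}}/24}.
\]

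The main point requiring care is the conditioning. The level-$k$ rotation $\phi_k$ depends adaptively on earlier outcomes, so one must argue that conditioning on a transcript fixes $\phi_k$. After that, the fresh level-$k$ outcomes are independent with the stated likelihood. This is exactly the hypothesis form of \reflem{lem:one-fibonacci-refinement-step}, so the tower property handles it.

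A second small check is that $\operatorname{atan2}$ is well defined. As noted after \reflem{lem:two-quadrature-fibonacci-bootstrap}, $(\widehat C_2,\widehat S_2)=(0,0)$ can only occur inside the failure event. Assigning any convention there does not affect the bound.
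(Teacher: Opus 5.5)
Your proposal is correct and follows the paper's proof essentially step for step: the same $2M_{\mathrm{lev}}+(K-2)M_{\mathrm{lev}}=KM_{\mathrm{lev}}$ probe count, the same online $O(K)$-bit efficiency argument, and the same union bound. That bound combines \reflem{lem:two-quadrature-fibonacci-bootstrap} (weakened from $e^{-M_{\mathrm{lev}}/8}$ to $e^{-M_{\mathrm{lev}}/24}$) with \reflem{lem:one-fibonacci-refinement-step} over first-failure levels. Your explicit first-failure events $B_k$ and tower-property averaging over transcripts in $E_{k-1}$ spell out the conditioning that the paper states only briefly ("either the bootstrap failed or the invariant first failed at one of the levels").
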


\begin{proof}
We proceed by induction. 
The algorithm's base case holds, because
\reflem{lem:two-quadrature-fibonacci-bootstrap} gives
\[
    \Pr[\operatorname{dist}_{\RR/\ZZ}
    (\widehat\theta_{2},\theta)
    >
    \frac{1}{8}]
    \leq
    4e^{-M_{\mathrm{lev}}/8}
    \leq
    4e^{-M_{\mathrm{lev}}/24}.
\]
For the inductive step, 
\reflem{lem:one-fibonacci-refinement-step} shows that, conditioned on
the invariant holding at every preceding level, its probability of failing
at level $k$ is at most $2e^{-M_{\mathrm{lev}}/24}$.
If the final estimate fails to satisfy the desired bound, then either the
bootstrap failed or the invariant first failed at one of the levels
$k\in\{3,\ldots,K\}$. A union bound gives
\begin{align}
    \Pr\!\left[
        \operatorname{dist}_{\RR/\ZZ}
        (\widehat\theta_K,\theta)
        >
        \frac{1}{8F_K}
    \right]
    &\leq
    4e^{-M_{\mathrm{lev}}/8}
    +
    \sum_{k=3}^K2e^{-M_{\mathrm{lev}}/24}
    \notag\\
    &\leq
    2K e^{-M_{\mathrm{lev}}/24}.
    \label{eq:slow-refinement-union-bound}
\end{align}
Since $w=\widehat\theta_K$, this is the claimed accuracy bound.

Finally, the bootstrap uses $2M_{\mathrm{lev}}$ probes, and each of the remaining $K-2$
levels uses $M_{\mathrm{lev}}$ probes. Thus the total number of scalar probes is
\[
    2M_{\mathrm{lev}}+(K-2)M_{\mathrm{lev}}
    =
    KM_{\mathrm{lev}}.
\]

Finally, the estimator processes its measurement outcomes online. Each
outcome is incorporated into a running signed sum, giving \(O(KM_{\mathrm{lev}})\)
counter updates over the complete execution. At the end of each level, the
estimator performs a constant number of arithmetic operations and one
evaluation of either \(\operatorname{atan2}\) or \(\arcsin\), for \(O(K)\)
elementary-function evaluations in total. It stores only the current phase
estimate, the running sums for the current level, and two consecutive
Fibonacci numbers; the complete measurement transcript need not be
retained. Since \(\log F_K=O(K)\), these quantities require
\(O(K+\log M_{\mathrm{lev}})\) bits. The required elementary functions can be evaluated
to \(O(K)\) bits of precision in time polynomial in \(K\). This establishes
the claimed classical efficiency and completes the proof.
\end{proof}

We can finally specialize to Regev's parameters.
For one Regev run, processing all \(d\) coordinates requires
\(O(dKM_{\mathrm{lev}})\) running-count updates and \(O(dK)\) evaluations of
\(\operatorname{atan2}\) or \(\arcsin\). Its classical working space is
\[
    O\!\left(d(\log F_K+\log M_{\mathrm{lev}})\right)
\]
bits. For \(d=\Theta(\sqrt n)\), \(K=\Theta(\sqrt n)\),
\(M_{\mathrm{lev}}=O(\log n)\), and \(\log F_K=\Theta(\log R)=\Theta(\sqrt n)\), this
becomes \(O(n\log n)\) arithmetic operations and \(O(n)\) bits of working
space per Regev run.

\section{Dialog schemes for in-place modular multiplication}\label{app:dialog-bounds-worst-case}

\subsection{Background: How it works}

Fix $x\in\ZZ_N^*$ and run the binary Euclidean algorithm on $(N,x)$. In our setting $N$ is always odd. So, if we initialize the Euclidean state to $(u,v)=(N,x)$ and run the algorithm on $(u,v)$, $u$ will remain odd throughout the computation. And each iteration will take one of three steps. It must decide whether to ($\mathtt d$) Divide an even entry by two, ($\mathtt a$) Subtract one entry from the other before dividing by two, or ($\mathtt i$) first interchange
the entries and then subtract and divide.  
We record these
three possibilities by the symbols $\mathtt d$, $\mathtt a$, and $\mathtt i$.  
On input $(u,v)$ these take the form:
\begin{align}
    \mathtt d:
    (u,v)&\longmapsto (u,v/2)
    &&\text{if $v$ is even},
    \label{eq:dialog-divide-step}\\
    \mathtt a:
    (u,v)&\longmapsto \bigl(u,(v-u)/2\bigr)
    &&\text{if $v$ is odd and $u\leq v$},
    \label{eq:dialog-add-step}\\
    \mathtt i:
    (u,v)&\longmapsto \bigl(v,(u-v)/2\bigr)
    &&\text{if $v$ is odd and $u>v$}.
    \label{eq:dialog-interchange-step}
\end{align}
A complete run produces a word such as $g(x)=\mathtt{iadiddiaa},$
which records the branch taken at every iteration.  This word is the
\emph{dialog} of $x$ relative to $N$, and if there are $T$ iterations, then
the dialog is
$g(x)\in\{\mathtt d,\mathtt a,\mathtt i\}^T$. 

The dialog contains exactly the information needed to reverse the Euclidean
computation.  In the backward direction, the three symbols instruct us to
double, to double and add, or to double, add, and interchange.  Acting on a
column vector, these three linear updates can be represented by
\begin{equation}
    M_{\mathtt d}
    =\begin{pmatrix}1&0\\0&2\end{pmatrix},
    \qquad
    M_{\mathtt a}
    =\begin{pmatrix}1&0\\1&2\end{pmatrix},
    \qquad
    M_{\mathtt i}
    =\begin{pmatrix}1&2\\1&0\end{pmatrix}.
    \label{eq:dialog-symbol-matrices}
\end{equation}
Thus, once the dialog is known, running the Euclidean computation backward
requires no further comparisons: we simply apply the update specified by each
symbol, in reverse order.
Let $\mathsf{Replay}_{g(x)}$ denote this backward computation.  If the Euclidean algorithm on $(N,x)$ takes the steps described by $g(x)= \mathtt{iadid}$ and ends in  $(1,0)$, we can use the dialog to reconstruct $(N,x)$. We replay the
dialog on $(1,0)$: 
\begin{equation}
    \mathsf{Replay}_{g(x)}
    \begin{pmatrix}1\\0\end{pmatrix}
    =
    M_{\mathtt d}M_{\mathtt i}M_{\mathtt d}M_{\mathtt a}M_{\mathtt i}
    \begin{pmatrix}1\\0\end{pmatrix}
    =
    \begin{pmatrix}N\\x\end{pmatrix},
    \label{eq:dialog-reconstructs-multiplier}
\end{equation}
which is equivalent to $(0,x)^T \pmod N$.
The key observation is that every update in
\refeq{eq:dialog-symbol-matrices} is linear.  Because
$\left(y,0\right)=y\left(1,0\right)$, the same replay can instead begin at
$(y,0)$:
\begin{align}
    \mathsf{Replay}_{g(x)}
    \begin{pmatrix}y\\0\end{pmatrix}
    &=
    y\,
    \mathsf{Replay}_{g(x)}
    \begin{pmatrix}1\\0\end{pmatrix}
    \notag
    \equiv
    \begin{pmatrix}0\\xy\end{pmatrix}
    \pmod N.
    \label{eq:dialog-replay-multiplies}
\end{align}
This final trick is what allows the Euclidean transcript to perform multiplication in-place. In other words, the dialog
that reconstructs $x$ from $1$ reconstructs $xy$ from $y$.  

The complete in-place multiplication has three stages.  First, reversibly
convert $x$ into its dialog $g(x)$.  Second, replay this dialog on $(y,0)$ to
obtain $(0,xy\bmod N)$.  Finally, reverse the first stage to recover $x$ and
erase the dialog. The logical data flow is
\begin{equation}
\begin{aligned}
    \ket{x}\ket{y}\ket{0}
    &\longmapsto
    \ket{g(x)}\ket{y}\ket{0} \\
    &\longmapsto
    \ket{g(x)}\ket{0}\ket{xy\bmod N} \\
    &\longmapsto
    \ket{x}\ket{0}\ket{xy\bmod N},
\end{aligned}
\label{eq:dialog-multiplication-three-stages}
\end{equation}
where here we do not represent any (sublinear) workspace registers. 
Dropping the clean zero register from the notation and taking the final
register as the output gives the interface in
\refeq{eq:dialog-in-place-multiplication-interface}.
Schrottenloher's main space saving is to share storage between the Euclidean
state and its dialog: as the active Euclidean operands shrink, the released
qubits store later pieces of the dialog.  Therefore, in the fixed-length encoding used
there, the replay stage 
\begin{equation}
    \underbrace{\ket{g(x)}}_{2.355n}
     \underbrace{\ket{y}}_{n\text{ }}
     \underbrace{\ket{0}}_{n\text{ }}
    \longmapsto
     \underbrace{\ket{g(x)}}_{2.355n\text{ }}
     \underbrace{\ket{0}}_{n\text{ }}
     \underbrace{\ket{xy\bmod N}}_{n\text{ }}.
\label{eq:dialog-replay-register-layout}
\end{equation}
exhibits the peak space usage of the published in-place multiplier. The multiplier requires $4.355n+O(\sqrt n)$ qubits,
and its Toffoli count is $O(n^2)$. 
Note that this space usage corresponds to an approximate implementation: Schrottenloher uses a fixed
Euclidean-iteration and operand-width allocations, which empirically works for most inputs but can overflow on some inputs.

\subsection{Worst-case resource bound for general inputs.}

\paragraph{Why we cannot immediately import the resource analysis of
Ref.~\cite{Schrottenloher26}.}
We cannot immediately import Schrottenloher's result because the size of the dialog register
follows from an empirical study on the
number of Euclidean iterations for random inputs. 
More specifically,
Schrottenloher fixes the number of Euclidean iterations to
$1.413n+o(n)$. As described above, every three iterations are encoded using five Dialog bits.
This
gives $(5/3) \times ( 1.413n+o(n) ) = 2.355n+o(n)$-qubit Dialog register. This cutoff on the number of allowed Euclidean iterations allows the circuit
to fail on inputs whose computation requires more iterations.

This resource estimate was developed for elliptic-curve arithmetic, where
Schrottenloher numerically studies the circuit on random field elements. However, this is crucially not the same as our setting. Our multiplication inputs
have a different structure: they are successive components of a single
traveling eigenstate, obtained by repeatedly applying backward Fibonacci
updates. They are, therefore, correlated across levels and need not follow the
random-input distribution studied in Ref.~\cite{Schrottenloher26}. 
Recall that  we perform in-place multiplication on our accumulators to generate the next level, $k \longrightarrow k+1$, of our traveling eigenstate. At level $k$, the two accumulators have the form
\begin{equation}
    \ket{\widetilde{u_\theta}^{(k)}}_{XY}
    =
    \sum_{x,y}
    \alpha_{x,y}^{(k)}
    \ket{x}_X\ket{y}_Y,
    \label{eq:traveling-eigenstate-basis-expansion}
\end{equation}
and the next Fibonacci update should act correctly on every pair $(x,y)$ in
this superposition. If even one supported value of $x$ requires more
Dialog space than we have allocated, the circuit does not implement the
desired Fibonacci update on that component of the state. The resulting
accumulator state may then fail to produce the correct phase kickback in
later probes.
Moreover, the multipliers at different levels arise recursively from the same
initial accumulator values. For example, on a lattice vector branch \(z\in\mathbb Z^d\), beginning with mask
value $m$ and $f(z)$, successive multipliers include
\[
    m,\qquad f(z)m^{-1},\qquad m^2f(z)^{-1},\qquad\ldots.
\]
They are therefore correlated and do not immediately align within Schrottenloher's random-input
distribution. This leads us to prove a worst-case bound over the full support
of the traveling eigenstate.

\paragraph{Worst-case resource bound.}
We now give a worst-case resource bound for Dialog multiplication when both
inputs may contain $n$ bits. 
In this bound, we use exact comparisons, where \cite{Schrottenloher26} employed approximate gadgets. This choice affects
the concrete gate cost but not the leading space bound.
For a nonnegative integer $t$, let $\textsf{len}(t)$ denote its binary
length, with $\textsf{len}(0)=0$. We first bound the number of iterations
needed to produce the Dialog.

\begin{lemma}[Worst-case Euclidean computation length]
\label{lem:worst-case-euclidean-computation-length}
For every odd $N<2^n$ and $x\in\ZZ_N^*$, the Euclidean computation in
\refeqs{eq:dialog-divide-step}--\eqref{eq:dialog-interchange-step},
initialized at $(N,x)$, reaches $(1,0)$ after at most
$L_{\max}=2n-1$ iterations. 
\end{lemma}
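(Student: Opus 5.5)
The plan is to use a potential-function argument with the potential $\Phi(u,v)\coloneqq\textsf{len}(u)+\textsf{len}(v)$. We will show three things in turn: that the Euclidean computation is well defined and preserves a gcd invariant, that it halts exactly at $(1,0)$, and that every iteration lowers $\Phi$ by at least one. Since $\Phi(N,x)\leq 2n$ and $\Phi(1,0)=1$, the iteration count is then at most $2n-1$.

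\emph{Invariants.} We prove by induction on the iterations that, while the computation runs (that is, while $v>0$), the following hold: $u$ is odd and positive, $v\geq 0$, and $\gcd(u,v)=\gcd(N,x)=1$.
\begin{itemize}
\item Initially $u=N$ is odd and $x\in\ZZ_N^*$, so the invariants hold.
\item In step $\mathtt d$, $v$ is even and $u$ is odd, so halving $v$ does not change the gcd.
\item In step $\mathtt a$, the condition $u\leq v$ makes $v-u\geq 0$. Since $u$ and $v$ are both odd, $v-u$ is even. We have $\gcd(u,v-u)=\gcd(u,v)$, and halving the even entry again preserves the gcd because $u$ is odd.
\item In step $\mathtt i$, the new first entry is $v$, which is odd. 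The new second entry is $(u-v)/2>0$, and the same gcd reasoning applies.
\end{itemize}
We take the computation to halt when $v=0$; the case split in \refeqs{eq:dialog-divide-step}--\eqref{eq:dialog-interchange-step} is only applied to $v>0$. At that point $u=\gcd(u,0)=1$, so the terminal state is $(1,0)$. Termination itself will follow from the strict decrease of $\Phi$, which is a nonnegative integer.

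\emph{Decrease of $\Phi$.} We check each step separately, using the fact that $\textsf{len}(t/2)\le\textsf{len}(t)-1$ for every positive integer $t$ (and more generally $\textsf{len}(s)\le\textsf{len}(t)-1$ whenever $s<t/2$... with the conventions $s\geq0$, $\textsf{len}(0)=0$).
\begin{itemize}
\item For $\mathtt d$: $v>0$ is even, so $\textsf{len}(v/2)=\textsf{len}(v)-1$.
\item For $\mathtt a$: $(v-u)/2<v/2$, which gives $\textsf{len}((v-u)/2)\leq\textsf{len}(v)-1$. This includes the case $v=u$, where the new value is $0$.
\item For $\mathtt i$: the new pair is $(v,(u-v)/2)$, and $(u-v)/2<u/2$. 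Hence
\[
\textsf{len}(v)+\textsf{len}\bigl((u-v)/2\bigr)\leq \textsf{len}(v)+\textsf{len}(u)-1.
\]
\end{itemize}
So $\Phi$ drops by at least one in every iteration.

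\emph{Counting.} Since $N<2^n$ and $x<N$, we have $\Phi(N,x)\leq 2n$. The final value is $\Phi(1,0)=1$. The number of iterations is therefore at most $2n-1=L_{\max}$.

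The main obstacle is the $\mathtt i$ case, because there the larger operand moves to the other coordinate. The fix is to track the sum of the two lengths rather than each length separately, which is exactly why the potential is $\textsf{len}(u)+\textsf{len}(v)$. The only other care needed is the edge case $v-u=0$ in step $\mathtt a$, which the convention $\textsf{len}(0)=0$ handles.
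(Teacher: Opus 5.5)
Your proof is correct and follows the paper's argument essentially verbatim. It uses the same potential $\Phi(u,v)=\operatorname{len}(u)+\operatorname{len}(v)$, the same case check that each $\mathtt d$, $\mathtt a$, $\mathtt i$ step lowers $\Phi$ by at least one, and the same count from $\Phi(N,x)\le 2n$ down to $\Phi(1,0)=1$; your explicit gcd and oddness invariant, which shows the run can only halt at $(1,0)$, usefully justifies a point the paper simply asserts from $x\in\ZZ_N^*$.
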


\begin{proof}
For a positive integer $t$, let $\operatorname{len}(t)$ denote its binary
length, and set $\operatorname{len}(0)=0$. Consider the potential
  $  \Phi(u,v)
    \coloneqq
    \operatorname{len}(u)+\operatorname{len}(v).$
Initially, $\Phi(N,x)\leq 2n$. Every iteration before reaching $(1,0)$
decreases this potential by at least one:
\begin{itemize}
    \item In a $\mathtt d$ step, the positive even value $v$ is divided by
    two, so its binary length decreases by one.

    \item In an $\mathtt a$ step, if $u<v$, then
    $(v-u)/2<v/2$, so the binary length of the second entry decreases by at
    least one. If $u=v$, the second entry becomes zero.

    \item In an $\mathtt i$ step, the new pair is
    $(v,(u-v)/2)$. The first entry has the binary length of the old $v$,
    while $(u-v)/2<u/2$, so the binary length of the second entry is at
    least one less than that of the old $u$.
\end{itemize}
Because $x\in\ZZ_N^*$, the terminal state is $(1,0)$ and has potential $\Phi(1,0)=1$.
The computation therefore reaches $(1,0)$ after at most $2n-1$ iterations.
\end{proof}

Note that the above bound is tight. For any odd $n$-bit $N$, setting $x=2^{n-1}$ gives
$n-1$ division steps before reaching $(N,1)$, followed by one interchange
step that produces $(1,(N-1)/2)$. Each subsequent iteration replaces the
second entry by its quotient upon division by two, requiring another
$n-1$ iterations.

\begin{lemma}[Worst-case Dialog length]
\label{lem:worst-case-dialog-length}
For every odd $N<2^n$ and $x\in\ZZ_N^*$, the Dialog produced by the
Euclidean computation in
\refeqs{eq:dialog-divide-step}--\eqref{eq:dialog-interchange-step},
initialized at $(N,x)$, can be stored reversibly using
$\frac{10}{3}n+O(1)$ qubits.
\end{lemma}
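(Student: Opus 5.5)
The plan is to combine the iteration bound of \reflem{lem:worst-case-euclidean-computation-length} with a per-symbol encoding cost of $5/3$ bits, which is the same ``three iterations in five bits'' rate Schrottenloher uses. This gives $\tfrac{5}{3}(2n-1)+O(1)=\tfrac{10}{3}n+O(1)$ qubits.

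The first step is to show why a naive encoding costs too much, and where the saving comes from. A fixed-length encoding of each symbol of $\{\mathtt d,\mathtt a,\mathtt i\}$ would take $2$ bits per iteration, or $4n$ in total. The key constraint is that an $\mathtt a$ or $\mathtt i$ step always produces an even second entry. The step $\mathtt a$ yields $(v-u)/2$, which is not necessarily even, so I must check this carefully. What is certain is that after a $\mathtt d$ or an odd-case step, the next step's branch is determined by the parity of the current $v$ together with one comparison bit.

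So I will encode each iteration as one parity bit $e=v\bmod 2$, followed, only when $e=1$, by one comparison bit $[u>v]$. A general-length encoding of this kind is not of bounded length. Instead, I will use the grouping from \cite{Schrottenloher26}: every block of three consecutive iterations has at most $3^3=27\le 32=2^5$ possible symbol triples, so a block can be stored reversibly in $5$ qubits.

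Concretely, the Dialog register consists of $\lceil L_{\max}/3\rceil$ blocks of $5$ qubits each. Iterations beyond the actual run length are padded with a fixed idle symbol (for instance, an extra symbol value among the $5$ unused codes, or a $\mathtt d$ applied to $(1,0)$, which is a fixed point). Reversibility holds because each iteration writes its symbol into fresh qubits, conditioned on the current $(u,v)$. The symbol, together with the new state, determines the old state through the matrices in \refeq{eq:dialog-symbol-matrices}, so the state update can be undone given the recorded symbol.

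Counting qubits, \reflem{lem:worst-case-euclidean-computation-length} gives at most $2n-1$ iterations. This means at most $\lceil (2n-1)/3\rceil$ blocks, and therefore $5\lceil(2n-1)/3\rceil\le \tfrac{10}{3}n+O(1)$ qubits.

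The main obstacle is showing that the padding and the packing of symbols into $5$-qubit blocks can be done reversibly with only $O(1)$ scratch per block. The difficulty is that packing a ternary triple into $5$ bits is a nontrivial reversible map on partially written data. I will handle this by first writing each symbol into $2$ temporary qubits, and then applying a fixed $6$-to-$5$-qubit reversible permutation once a triple is complete. The sixth qubit is always returned to $\ket0$, since only $27$ of the $64$ inputs occur. This gives constant overhead. The map is a classical permutation on $64$ basis states that sends the $27$ valid codes into the subspace where the spare qubit is $0$, and such a permutation exists because $27\le 32$.
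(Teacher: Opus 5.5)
Your proposal is correct and follows essentially the same route as the paper. Both use two qubits per iteration (the paper's pairs $(b_0,b_0b_1)$), a $6$-to-$5$ compression of each triple that returns the sixth qubit to $\ket{0}$, padding with $\mathtt d$ steps on the fixed point $(1,0)$ up to $3\lceil(2n-1)/3\rceil$ iterations, and the count $5\lceil(2n-1)/3\rceil=\frac{10}{3}n+O(1)$. The paper cites Schrottenloher's explicit compression circuit, whereas you argue existence via a permutation of the $64$ basis states; that is equally valid since $27\le 32$.

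Two small fixes. Commit to the $\mathtt d$-padding option rather than a separate pad symbol: with a fourth symbol, blocks straddling the end give $27+9+3+1=40>32$ possible triples, so a single fixed compressor no longer fits. Also delete the abandoned remark that $\mathtt a$/$\mathtt i$ steps produce even second entries, which is false and plays no role in your argument.
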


\begin{proof}
Following Schrottenloher~\cite[Algorithm~2]{Schrottenloher26}, each
Euclidean iteration records the pair $(b_0,b_0b_1)$, where
$b_0=v\bmod 2$ and $b_1$ indicates whether $u>v$. This pair takes only the
three values $(0,0)$, $(1,0)$, and $(1,1)$, corresponding respectively to
the steps $\mathtt d$, $\mathtt a$, and $\mathtt i$.
Schrottenloher~\cite[Figure~1]{Schrottenloher26} gives an explicit
reversible circuit that compresses three such pairs from six qubits to five,
returning the sixth qubit to $\ket{0}$. We apply this circuit after every
three iterations and reuse the cleared qubit and the constant-size encoding
workspace for the next block. Running the circuit backward recovers the
three original Dialog symbols, so the resulting encoding is reversible.

By \reflem{lem:worst-case-euclidean-computation-length}, the Euclidean
computation reaches $(1,0)$ after at most $2n-1$ iterations. So we
must run the computation for exactly $T$ iterations, where $T = 3\ceil{\frac{2n-1}{3}}$.
Note that some inputs may reach $(1,0)$ well before all $T$ iterations have happened. When this is the case,
the remaining iterations are padded with $\mathtt d$ steps, which leave
$(1,0)$ unchanged.
We can complete the claim by noticing that, since $T$ is divisible by three, the complete Dialog requires the following space:
\begin{equation}
    \frac{5}{3}T
    =
    5\ceil{\frac{2n-1}{3}}
    =
    \frac{10}{3}n+O(1).
    \label{eq:worst-case-compressed-dialog-size}
\end{equation}
\end{proof}

Schrottenloher observes that, as the Euclidean operands $(u,v)$ shrink, the space they no longer occupy can be reallocated to store the growing Dialog~\cite[Section~3.1]{Schrottenloher26}. We use the same register-sharing idea and analyze it in the worst case. At every iteration, we upper bound the total space required by both active operands and the Dialog generated so far. This gives a bound that holds even for computations attaining the worst-case Dialog length established in \reflem{lem:worst-case-dialog-length}. While Ref~\cite
{Schrottenloher26} similarly reallocates space released by the operands, Schrottenloher estimates the amount available using an empirically observed average-case shrinkage rate.

\begin{lemma}[Worst-case-safe Dialog storage]
\label{lem:worst-case-safe-dialog-storage}
Let $(u_j,v_j)$ denote the operands after $j$ iterations.
After $j\in\{0,\ldots,T\}$ Euclidean iterations, both active operands
fit in
\begin{equation}
    w_j
    \coloneqq
    \max\!\left\{1,\min\{n,2n-j\}\right\}
    \label{eq:worst-case-euclidean-register-width}
\end{equation}
qubits. The two active operands and the first $j$ Dialog symbols can 
be stored using at most $\frac{11}{3}n+O(1)$
qubits.
\end{lemma}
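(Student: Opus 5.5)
Both claims follow from the potential argument of \reflem{lem:worst-case-euclidean-computation-length}, combined with the $5/3$ qubits-per-symbol compression of \reflem{lem:worst-case-dialog-length}.

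\emph{Width bound.} I would track the binary lengths $\operatorname{len}(u_j)$ and $\operatorname{len}(v_j)$ directly. Initially $u_0=N<2^n$ and $v_0=x<N$, so both operands have length at most $n$. Two facts then hold for every step. First, neither entry ever grows: a $\mathtt d$ or $\mathtt a$ step leaves $u$ fixed and shrinks $v$, while an $\mathtt i$ step makes the new first entry the old $v$, which is smaller than the old $u$, and makes the new second entry $(u-v)/2$, which is smaller than both old entries. Hence $\max\{\operatorname{len}(u_j),\operatorname{len}(v_j)\}\le n$ for all $j$. Second, as shown in the proof of \reflem{lem:worst-case-euclidean-computation-length}, the potential $\Phi(u_j,v_j)$ drops by at least one per iteration until $(1,0)$ is reached, and padded $\mathtt d$ steps keep it at $\Phi(1,0)=1$. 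So $\Phi(u_j,v_j)\le\max\{1,2n-j\}$, and each operand, having nonnegative length, fits in $\max\{1,\min\{n,2n-j\}\}=w_j$ qubits. I would state explicitly that the register for each operand can be shrunk to width $w_j$, since the dropped high qubits are provably $\ket{0}$ on every branch. This is exactly what the worst-case guarantee buys us.

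\emph{Total space.} The key point is to use the potential itself rather than $2w_j$. The two operands together need only $\operatorname{len}(u_j)+\operatorname{len}(v_j)\le\max\{1,2n-j\}$ meaningful qubits. If the storage can be packed that tightly, the total after $j$ steps is at most
\begin{equation*}
(2n-j)+\tfrac{5}{3}j+O(1)=2n+\tfrac{2}{3}j+O(1)\le 2n+\tfrac{2}{3}(2n)+O(1)=\tfrac{10}{3}n+O(1)
\end{equation*}
qubits, which is within the claimed bound. However, the operand boundaries are data dependent, so in a reversible layout with fixed positions I would instead give each operand its own register of width $w_j$. That costs $2\min\{n,2n-j\}+\tfrac{5}{3}j$. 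For $j\le n$ this is $2n+\tfrac{5}{3}j\le\tfrac{11}{3}n$. For $j\ge n$ it is $4n-2j+\tfrac{5}{3}j=4n-\tfrac{j}{3}\le\tfrac{11}{3}n$. The maximum, $\tfrac{11}{3}n+O(1)$, is attained at $j=n$. This is precisely the stated constant, so I would present the fixed-width layout as the proof. Rounding to blocks of three symbols and the constant-size compression workspace contribute only $O(1)$.

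\emph{Main obstacle.} The delicate step is justifying that the qubits freed as $w_j$ decreases can host later dialog symbols. For $j\ge n$, the width $w_j$ drops by one per iteration on each of two registers, while the dialog grows by $5/3$ qubits per iteration. Since $2\ge 5/3$, the freed space keeps up with the dialog. Before $j=n$, nothing is freed, so the dialog must occupy a separate $\tfrac{5}{3}n$ region. This gives a layout that is static across all branches and therefore valid reversibly. One caveat needs checking: an $\mathtt i$ step moves $v$ into the $u$ slot, so the two registers must be swapped, or their roles relabeled, while keeping both widths at most $w_{j+1}$. I would verify this against the bounds from the width argument.
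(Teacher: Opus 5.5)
Your proposal is correct and follows essentially the same route as the paper: the potential argument of \reflem{lem:worst-case-euclidean-computation-length} bounds each operand by $w_j$, and the fixed-width layout costs $2\min\{n,2n-j\}+\frac{5}{3}j+O(1)$, which is maximized at $j=n$; the branch-independent width schedule then justifies coherent reuse of the freed qubits. One small slip: in an $\mathtt{i}$ step the new second entry $(u-v)/2$ need not be smaller than the old $v$ (e.g.\ $(u,v)=(101,1)\mapsto(1,50)$), so ``neither entry ever grows'' fails slot-wise; what holds, and is all you need, is that both new entries are below the old $u$, so $\max\{u_j,v_j\}$ never increases and stays below $2^n$.
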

\begin{proof}
The proof of
\reflem{lem:worst-case-euclidean-computation-length} shows that the
total binary length of our operands $(u_j,v_j)$ decreases by at least one
per iteration until reaching $(1,0)$ and remains one thereafter. Hence $\operatorname{len}(u_j)+\operatorname{len}(v_j)
    \leq
    \max\{1,2n-j\}.$
Moreover, neither operand ever exceeds $n$ bits. So after
iteration $j$, we can store each operand using 
\begin{equation}
    \max\{1,\min\{n,2n-j\}\}
    \label{eq:register_operand_size}
\end{equation}
qubits, which matches the definition of $w_j$.
Next, the first $j$ Dialog symbols occupy $\ceil{5j/3}$ qubits, up to a
constant-size scratch buffer used to compress the current block.

Following \refeqlowerplain{eq:register_operand_size}, for $0\leq j\leq n$ both operand registers retain their original
$n$-qubit width. The total space is at most
\[
    2n+\ceil{5j/3}+O(1)
    \leq
    \frac{11}{3}n+O(1).
\]
For $n\leq j\leq2n-1$, each operand requires at most $2n-j$ qubits. The
total space is at most
\[
    2(2n-j)+\ceil{5j/3}+O(1)
    =
    4n-\frac{j}{3}+O(1)
    \leq
    \frac{11}{3}n+O(1).
\]
Finally, during the at most two padding iterations with $j>2n-1$, the
Euclidean state remains $(1,0)$, while the complete Dialog occupies
$\frac{10}{3}n+O(1)$ qubits. The same bound therefore continues to hold.

Finally, notice that operand-register widths depend only on $j$. So on every basis branch, qubits outside the active windows are in $\ket{0}$ and may be
reassigned to the Dialog register. During the reverse traversal, the
corresponding Dialog bits are erased before these zero qubits are returned
to the active windows. The same predetermined sharing schedule therefore
works coherently when $x$ is in superposition.
\end{proof}

Using the careful resource accounting we have built up, we can now state our in-place modular multiplication resource bounds. We will follow it up with a resource bound for the backward
Fibonacci updates in our streamed algorithm.

\begin{lemma}[Exact big-by-big Dialog multiplication]
\label{lem:exact-big-by-big-dialog-multiplication}
For odd $N<2^n$, there exists an exact quantum circuit implementing 
\begin{equation}
    \ket{x}\ket{y}
    \longmapsto
    \ket{x}\ket{xy\bmod N}
    \label{eq:exact-big-by-big-dialog-interface}
\end{equation}
for every $x\in\ZZ_N^*$ and $y\in\ZZ_N$. Its gate count, total qubit count,
and depth satisfy
\begin{equation}
    G_{\mathrm{bb}}=O(n^2),
    \qquad
    S_{\mathrm{bb}}=\frac{16}{3}n+O(1),
    \qquad
    D_{\mathrm{bb}}=O(n^2).
    \label{eq:exact-big-by-big-depth}
\end{equation}
\end{lemma}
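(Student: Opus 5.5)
The plan is to build the circuit from the three-stage data flow of \refeq{eq:dialog-multiplication-three-stages}: a forward Dialog generation stage $\ket{x}\ket{0}\mapsto\ket{g(x)}$, a replay stage acting on $(y,0)$, and the inverse of the generation stage. Exactness comes from fixing the iteration count at $T=3\ceil{(2n-1)/3}$ and padding with $\mathtt d$ steps, as in \reflem{lem:worst-case-dialog-length}. By \reflem{lem:worst-case-euclidean-computation-length}, every $x\in\ZZ_N^*$ then reaches $(1,0)$ within the allotted iterations. Hence $\mathsf{Replay}_{g(x)}$ applied to $(1,0)$ reconstructs $(N,x)$, and by linearity \refeq{eq:dialog-replay-multiplies} it sends $(y,0)$ to a pair congruent to $(0,xy)$ modulo $N$ on every branch. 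Working modulo $N$ during replay removes the need for any distributional assumption.

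For space, I would account stage by stage. During generation, \reflem{lem:worst-case-safe-dialog-storage} shows that the active operands together with the partial Dialog fit in $\frac{11}{3}n+O(1)$ qubits. The $y$ register sits idle alongside them, giving $\frac{14}{3}n+O(1)$; this is below the claimed bound. During replay, the complete Dialog occupies $\frac{10}{3}n+O(1)$ qubits by \reflem{lem:worst-case-dialog-length}. The replay works on two $n$-qubit registers: $y$ and a fresh zero register that ends holding $xy\bmod N$. This gives $\frac{10}{3}n+2n+O(1)=\frac{16}{3}n+O(1)$, which is the peak. Uncomputing $g(x)$ reverses the generation stage with $y$ replaced by the output register, so it again needs $\frac{14}{3}n+O(1)$. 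The maximum over the three stages is therefore $\frac{16}{3}n+O(1)$.

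The main obstacle is making replay truly in place on $n$-bit registers. Each symbol step applies $M_{\mathtt d},M_{\mathtt a},M_{\mathtt i}$ modulo $N$. Doubling modulo $N$ is reversible because $N$ is odd, and modular addition of one register into the other is reversible. This preserves the pair of residues without growth, so the two registers stay $n$ bits. At the end, one register holds $0$ and the other holds $xy\bmod N$, which gives \refeq{eq:dialog-replay-multiplies}; the freed zero register is then relabeled. I would check that each modular double, add, and swap uses only $O(1)$ ancillas, for example by Cuccaro-style adders with a comparison carry, so that the $O(1)$ term holds. I would also check that unpacking the compressed $5$-bits-per-$3$-symbols blocks on the fly needs only constant workspace, using Schrottenloher's Figure~1 gadget run in reverse and re-applied after each block.

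For gates and depth, generation has $T=O(n)$ iterations, each needing an exact $n$-bit comparison, a subtraction, and a halving. The register widths $w_j$ are fixed in advance, so each iteration costs $O(n)$ gates and depth. Replay likewise uses $T$ modular double and add steps at $O(n)$ each, and the compression gadgets add $O(1)$ per block. The total is $G_{\mathrm{bb}}=O(n^2)$, and since the iterations are sequential, $D_{\mathrm{bb}}=O(n^2)$ as well.
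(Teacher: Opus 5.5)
Your proposal is correct and follows essentially the same route as the paper. Both use the three-stage generate/replay/uncompute construction with $T=3\lceil(2n-1)/3\rceil$ padded iterations, take peak space $2n+\frac{10}{3}n$ during replay (against $n+\frac{11}{3}n$ during generation and its reversal), and use $O(n)$-cost iterations to get $O(n^2)$ gates and depth. Your remarks on reversible modular doubling for odd $N$ and on-the-fly decompression of the 5-bit blocks during replay fill in details the paper leaves implicit.
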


\begin{proof}
We first analyze space. Following Schrottenloher~\cite[Algorithm~2]{Schrottenloher26}, initialize the Euclidean pair to $(N,x)$ and run $T$
iterations to convert $x$ into its compressed Dialog. Register $y$ remains
untouched during this stage. By
\reflem{lem:worst-case-safe-dialog-storage}, its peak space is at most
\begin{equation}
    n+\frac{11}{3}n+O(1)
    =\frac{14}{3}n+O(1).
    \label{eq:exact-dialog-generation-total-space}
\end{equation}
This conversion consumes the register containing $x$: its information is
carried by the Dialog.
Next, replay the Dialog on the pair $(y,0)$, which maps $(y,0)$ to
$(0,xy\bmod N)$. At this stage both $n$-qubit replay
registers and the complete Dialog are present, so \reflem{lem:worst-case-dialog-length} gives
\begin{equation}
   2n+\frac{10}{3}n+O(1)
    =\frac{16}{3}n+O(1)
    \label{eq:exact-dialog-replay-total-space}
\end{equation}
qubits. Finally, reverse the Dialog-generation circuit to recover $x$ and
erase the Dialog. The product remains untouched, so this stage has the same
$\frac{14}{3}n+O(1)$ space bound as the forward generation stage. The replay
stage therefore determines the peak space.

The gate-count analysis follows Schrottenloher's Dialog-generation and
replay circuits~\cite[Algorithms~2 and~3]{Schrottenloher26}. Each iteration
uses a constant number of arithmetic operations, on at most $n$ qubits.
Using exact ripple-carry comparisons, additions, and subtractions, each
iteration requires $O(n)$ gates, $O(n)$ depth, and $O(1)$ ancillary qubits;
exact modular doubling and controlled modular addition satisfy the same
bounds~\cite{roetteler17}. 
And note that the Dialog-compression circuit of
Ref.~\cite[Figure~1]{Schrottenloher26} uses only $O(1)$ gates per block.
Since there are $T=O(n)$ iterations in each
traversal, the complete multiplier requires $O(n^2)$ gates and
depth.
\end{proof}

\begin{corollary}[Exact backward Fibonacci update]
\label{cor:exact-big-by-big-fibonacci-update}
For every odd $N<2^n$, $x\in\ZZ_N^*$, and $y\in\ZZ_N$, there exists an
exact quantum circuit implementing
\begin{equation}
    \mathsf{Fib}^{-1}\ket{x}\ket{y}
    =
    \ket{yx^{-1}\bmod N}\ket{x}.
    \label{eq:exact-big-by-big-fibonacci-update}
\end{equation}
Its gate count, total qubit count, and depth satisfy the bounds in the resource bounds of \reflem{lem:exact-big-by-big-dialog-multiplication}.
\end{corollary}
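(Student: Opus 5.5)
The plan is to build $\mathsf{Fib}^{-1}$ out of the exact multiplier of \reflem{lem:exact-big-by-big-dialog-multiplication}, run in reverse, followed by a register swap. Since the circuit of \reflem{lem:exact-big-by-big-dialog-multiplication} is exact and reversible on the promised domain, its inverse is also exact:
\[
\ket{x}\ket{z}\longmapsto\ket{x}\ket{x^{-1}z\bmod N}
\]
for every $x\in\ZZ_N^*$ and $z\in\ZZ_N$. The inverse uses the same gates in reverse order, each inverted, so $G_{\mathrm{bb}}$, $S_{\mathrm{bb}}$ and $D_{\mathrm{bb}}$ are unchanged.

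\textbf{Step 1 (domain check).} Because multiplication by $x$ permutes $\ZZ_N$, the forward map $(x,y)\mapsto(x,xy)$ is a bijection on $\ZZ_N^*\times\ZZ_N$. Hence the inverse circuit acts correctly on every basis state $\ket{x}\ket{z}$ with $x\in\ZZ_N^*$. It also returns all ancillas, namely the Dialog register and the zeroed replay register, to $\ket{0}$, since it exactly undoes the three stages of \refeq{eq:dialog-multiplication-three-stages}.

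\textbf{Step 2 (apply and swap).} Apply the inverse circuit to $\ket{x}_X\ket{y}_Y$ to obtain $\ket{x}_X\ket{x^{-1}y\bmod N}_Y$. Then swap $X$ and $Y$, which gives $\ket{yx^{-1}\bmod N}\ket{x}$, exactly \refeq{eq:exact-big-by-big-fibonacci-update}. The swap can be done by relabeling registers at no cost. Even if implemented with gates, it costs only $O(n)$ gates and $O(1)$ depth, which is absorbed into the $O(n^2)$ bounds and leaves the space unchanged.

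\textbf{Step 3 (main subtlety).} The key point is that exactness must hold on every basis branch that appears in the streamed algorithm, not merely on average over branches. The multiplier register $X$ always holds a unit on the support of the traveling eigenstate. Initially $X$ holds mask values $m\in\ZZ_N^*$ and $Y$ holds $f(z)\in\ZZ_N^*$. Each $\mathsf{Fib}^{-1}$ maps pairs of units to pairs of units, and the probe multiplications multiply $Y$ by $p(c)\in\ZZ_N^*$. So by induction both registers stay in $\ZZ_N^*$.

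This means the worst-case guarantees of Lemmas~\ref{lem:worst-case-euclidean-computation-length}--\ref{lem:worst-case-safe-dialog-storage}, which hold for all $x\in\ZZ_N^*$, cover every branch. The correlations between multipliers across levels therefore cause no difficulty, and the claimed resource bounds follow.
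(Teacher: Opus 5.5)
Your proposal is correct and follows the same route as the paper. You run the exact Dialog multiplier of \reflem{lem:exact-big-by-big-dialog-multiplication} in reverse to get $\ket{x}\ket{x^{-1}y\bmod N}$ at identical cost, then swap (or simply relabel) the two registers. Your Step~3, showing that both accumulators stay in $\ZZ_N^*$ on every branch under the ideal mask $\ket{\mathsf M_N}$, is not needed for the corollary as stated, but it correctly justifies that the promise $x\in\ZZ_N^*$ holds wherever the update is used in \refalg{alg:streamed-regev-run}; the non-unit branches of $\ket{R_N}$ are covered separately by \refprop{prop:approximate-mask-error}.
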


\begin{proof}
This follows trivially by running the circuit from
\reflem{lem:exact-big-by-big-dialog-multiplication} backwards.
\end{proof}

\section{Adaptive Bayesian Fibonacci IQPE: Algorithm and Simulation Details}\label{app:numerics-iqpe-fibonacci}

This appendix gives the complete specification of the adaptive Bayesian
estimator used in the numerical experiments of
\refsec{sec:fibonacci-estimator-numerics}. We now define these
operations in detail.

\subsection{Abstract construction}

It is useful to label each scalar measurement by its Fibonacci level $k$ and
shot number $s$. Let $M_k$ denote the number of shots actually taken at level
$k$; this number is determined by the measurement outcomes rather than fixed
in advance. The corresponding probe uses exponent $F_k$ and rotation
$\phi_{k,s}$ and produces an outcome $b_{k,s}\in\{0,1\}$. We record the
triples
\[
    (F_k,\phi_{k,s},b_{k,s})
\]
in chronological order and denote the resulting ordered transcript by
$\tau$. When only the outcomes are relevant, we refer to the outcome
component of this transcript. If the unknown phase were
$\theta\in\RR/\ZZ$, the likelihood of an observed outcome is given by
\refeq{eq:fibonacci-probe-likelihood}.

\paragraph{Representing the posterior by cells.}
We would like to retain the relative likelihood of every possible value of
$\theta$, but directly storing a continuous probability density becomes
increasingly expensive as the transcript grows. We instead partition the
phase circle into $B$ intervals,
which we call cells.
Each live cell $C$ is represented by its midpoint $m(C)$, width $|C|$,
current posterior weight $p_\tau(C)$, and creation level $\kappa(C)$. The
creation level is the Fibonacci level upon whose entry $C$ was created. We will refine our cells over the course of the algorithm -- specifically, we will allow refinement each time we begin probing a new Fibonacci level.
We represent the posterior within each cell by a
point mass at its midpoint. Thus the cell weights form a discrete midpoint
approximation to the posterior and satisfy
\begin{equation}
    p_\tau(C)\geq0,
    \qquad
    \sum_{C\in\mathcal C}p_\tau(C)=1.
    \label{eq:fibonacci-cell-normalization}
\end{equation}
Before any measurements, we use an equal-width partition and assign each
cell weight $1/B_0$.

Suppose that the $s$th probe at level $k$ produces outcome $b_{k,s}=b$. We
first append $(F_k,\phi_{k,s},b_{k,s})$ to $\tau$ and then update the cell
weights by Bayes' rule:
\begin{equation}
    p_\tau(C)
    \longleftarrow
    \frac{
        p_\tau(C)\Pr[b\mid m(C),F_k,\phi_{k,s}]
    }{
        \displaystyle
        \sum_{C'\in\mathcal C}
        p_\tau(C')\Pr[b\mid m(C'),F_k,\phi_{k,s}]
    }.
    \label{eq:fibonacci-cell-weight-update}
\end{equation}
Here the weights on the right-hand side are their values before the new
outcome is processed; afterward, $p_\tau$ denotes the weights associated
with the enlarged transcript.

\paragraph{Cell refinement: splitting large, high-probability cells.}

Ultimately, we want to estimate $\theta$ to high precision while keeping the
classical postprocessing efficient. This is particularly important for
streaming Regev factoring. For an $n$-bit integer, Regev's postprocessing
requires scalar precision
\(
    \varepsilon_R=2^{-(A+o(1))\sqrt n},
\)
as stated in
\refeq{eq:regev-required-scalar-precision}. A uniform grid at this
resolution would require $2^{\Theta(\sqrt n)}$ cells and therefore
superpolynomial classical space. We instead impose
$B_{\max}=\poly(n)$ and add resolution selectively in regions carrying
substantial posterior weight. Refinement occurs once upon entry to each new
Fibonacci level. If the refinement condition still requests a split when
$|\mathcal C|=B_{\max}$, the protocol returns \textsc{failure}.

At level $k$, a probe acquires relative phase
$\exp(2\pi iF_k\theta)$ and therefore gives information about
$F_k\theta\bmod1$. Multiplication by $F_k$ locally scales an interval of
possible phases by $F_k$, so $F_k|C|$ measures the width of $C$ at the
resolution of a level-$k$ probe. We combine this scaled width with the
posterior weight of the cell.

\begin{definition}[Cell refinement score]
\label{def:fibonacci-cell-refinement-score}
At Fibonacci level $k$, the refinement score of a cell $C\in\mathcal C$ is
\begin{equation}
    \mathsf{score}_k(C)
    \coloneqq
    p_\tau(C)\bigl(F_k|C|\bigr)^{2}.
    \label{eq:fibonacci-cell-refinement-score}
\end{equation}
\end{definition}

A cell receives a large refinement score when it remains wide relative to
the resolution $1/F_k$, carries substantial posterior weight, or both. Given
a threshold $\eta_{\mathrm{grid}}>0$, we repeatedly bisect a cell of greatest
score until
\[
    \mathsf{score}_k(C)
    \leq
    \eta_{\mathrm{grid}}
\]
for every $C\in\mathcal C$.
A smaller threshold creates more, narrower cells and represents the
posterior at finer resolution; a larger threshold uses fewer cells.

To bisect the parent cell $C$, we replace it with its left and right children
$C_{\mathrm L}$ and $C_{\mathrm R}$. Both children have width $|C|/2$ and
creation level $\kappa(C_{\mathrm L})  = \kappa(C_{\mathrm R})  = k.$
Their midpoints are
\[
    m(C_{\mathrm L})
    =
    m(C)-\frac{|C|}{4}\pmod 1,
    \qquad
    m(C_{\mathrm R})
    =
    m(C)+\frac{|C|}{4}\pmod 1.
\]
We divide the parent weight $p_\tau(C)$ between its two children in
proportion to how likely their respective midpoints are, given the
measurements collected since the parent cell $C$ was created at level
$\kappa(C)$.

We perform cell refinement only upon entering a new Fibonacci level, before
taking any measurements at that level. Therefore, the observations collected
since the parent cell $C$ was created are precisely those entries of $\tau$
whose Fibonacci level is at least $\kappa(C)$. For a candidate phase $x$,
define the parental likelihood
\begin{equation}
    \mathcal L_{\tau,C}(x)
    \coloneqq
    \prod_{\substack{
        (F_k,\phi_{k,s},b_{k,s})\in\tau\\
        k\geq\kappa(C)
    }}
    \Pr[b_{k,s}\mid x,F_k,\phi_{k,s}].
    \label{eq:fibonacci-since-birth-likelihood}
\end{equation}
We then assign the child weights according to
\begin{equation}
    p_\tau(C_\alpha)
    \coloneqq
    p_\tau(C)
    \frac{
        \mathcal L_{\tau,C}\!\left(m(C_\alpha)\right)
    }{
        \mathcal L_{\tau,C}\!\left(m(C_{\mathrm L})\right)
        +
        \mathcal L_{\tau,C}\!\left(m(C_{\mathrm R})\right)
    },
    \qquad
    \alpha\in\{\mathrm L,\mathrm R\}.
    \label{eq:fibonacci-child-cell-weights}
\end{equation}
By construction,
$p_\tau(C_{\mathrm L})+p_\tau(C_{\mathrm R})=p_\tau(C),$
so splitting a cell preserves its total posterior mass. If no observations
have been collected since $C$ was created, the two parental likelihoods are
equal and the parent mass is divided equally between its children. This can
occur, for example, when a newly created cell is subdivided again during the
same refinement stage. Subsequent observations update the child weights
using \refeq{eq:fibonacci-cell-weight-update}.

\paragraph{Adaptivity: choosing the IQPE rotation angle to maximize information gain}

Before every adaptive shot $s$ at level $k$, the protocol chooses a rotation
angle $\phi_{k,s}$ intended to make the next measurement as informative as
possible. For a candidate angle $\phi\in\Phi$, the posterior-predictive
probability of outcome $b$ is
\begin{equation}
    \Pr[b\mid\tau,F_k,\phi]
    =
    \sum_{C\in\mathcal C}
        p_\tau(C)
        \Pr[b\mid m(C),F_k,\phi].
    \label{eq:fibonacci-predictive-outcome-probability}
\end{equation}
Let $h_2$ denote the binary entropy function. The expected information gain
from the candidate angle $\phi$ is
\begin{align}
    \textsf{EIG}(\phi)
    \coloneqq{}&
    h_2\!\left(
        \sum_{C\in\mathcal C}
        p_\tau(C)
        \Pr[0\mid m(C),F_k,\phi]
    \right)
    \notag\\
    &-
    \sum_{C\in\mathcal C}
        p_\tau(C)
        h_2\!\left(
            \Pr[0\mid m(C),F_k,\phi]
        \right).
    \label{eq:fibonacci-expected-information-gain}
\end{align}
The first term is the entropy of the predicted measurement outcome. The
second is the expected outcome entropy conditioned on the cell containing
the phase. Their difference is the mutual information between the discrete
cell index and the next outcome under the midpoint-posterior approximation
\cite[Chapter~2]{CoverThomas06}. For every adaptive shot, we choose
\begin{equation}
    \phi_{k,s}
    \in
    \argmax_{\phi\in\Phi}
        \textsf{EIG}(\phi),
    \label{eq:fibonacci-per-shot-eig-angle}
\end{equation}
breaking ties in favor of the smallest candidate angle. After observing the
outcome, we update the posterior and maximize the EIG again before the next
shot.

\paragraph{Initial level.}

At the initial level $F_2=1$, the multiplied phase is simply
$F_2\theta\bmod1=\theta$, so these measurements introduce no modular
aliases. Their purpose is to obtain an initial localization of $\theta$ on
the unit circle. A single quadrature cannot distinguish phases related by a
reflection. At angle $0$,
\refeq{eq:fibonacci-probe-likelihood} becomes
\begin{equation}
    \Pr[b_{2,s}=b\mid\theta,F_2,0]
    =
    \frac{1+(-1)^{b}\cos(2\pi\theta)}{2},
    \label{eq:fibonacci-bootstrap-cosine-likelihood}
\end{equation}
whereas at angle $-\pi/2$ it becomes
\begin{equation}
    \Pr[b_{2,s}=b\mid\theta,F_2,-\pi/2]
    =
    \frac{1+(-1)^{b}\sin(2\pi\theta)}{2}.
    \label{eq:fibonacci-bootstrap-sine-likelihood}
\end{equation}
The sine and cosine quadratures together resolve the reflection ambiguity.
We therefore take exactly one bootstrap shot at angle $0$ and one at angle
$-\pi/2$, updating the cell weights after each outcome. We then check the
ordinary level-$2$ transition guard. If it does not pass, we take additional
level-$2$ shots using the per-shot EIG rule until the guard passes or the
ordinary-level adaptive-shot cap is exhausted.

\paragraph{Generic Fibonacci level: the half-circle transition guard.}

For every level $k\in\{2,\ldots,K-1\}$, the transition guard determines when
the protocol advances to level $k+1$. Larger Fibonacci exponents provide
finer phase resolution but also admit more inverse branches for the unknown
phase. The guard is a soft geometric criterion intended to limit this
ambiguity.

For each live cell $C\in\mathcal C$, define its midpoint location in
level-$k$ phase space by
\begin{equation}
    u_k(C)
    \coloneqq
    F_k\,m(C)\bmod 1.
    \label{eq:level-k-cell-location}
\end{equation}
For a circular arc $A\subseteq\RR/\ZZ$, define its level-$k$ posterior mass
by
\begin{equation}
    \textsf{arcmass}_{\tau,k}(A)
    \coloneqq
    \sum_{C\in\mathcal C}
    p_\tau(C)\,
    \mathbf 1\!\left\{u_k(C)\in A\right\}.
    \label{eq:fibonacci-midpoint-arc-mass}
\end{equation}
Circular arcs may wrap around zero. Consistent with the posterior
representation above, each cell contributes its full weight at its midpoint;
boundary cells are not divided fractionally. Let $A^\star_{\tau,k}$ be a shortest
level-$k$ arc containing at least $1-\delta_{\mathrm{move}}$ of the posterior
mass:
\begin{equation}
    A^\star_{\tau,k}
    \in
    \operatorname*{arg\,min}_{\substack{
        A\subseteq\RR/\ZZ\ \\
        \textsf{arcmass}_{\tau,k}(A)
        \geq1-\delta_{\mathrm{move}}
    }}
    |A|.
    \label{eq:fibonacci-shortest-transition-arc}
\end{equation}
The protocol advances to level $k+1$ precisely when
\begin{equation}
    \frac{F_{k+1}}{F_k}|A^\star_{\tau,k}|
    \leq
    \frac12.
    \label{eq:fibonacci-half-circle-transition-guard}
\end{equation}
This criterion requires the high-mass arc, after rescaling by the ratio of
successive Fibonacci exponents, to have width at most one half-turn. A half
turn is the natural scale on which a suitably shifted sinusoidal likelihood
can be restricted to one inverse branch. We therefore use
\refeq{eq:fibonacci-half-circle-transition-guard} as an
anti-aliasing criterion, rather than as a guarantee that every modular alias
has been eliminated.

The protocol checks the guard before every possible adaptive shot, including
immediately after the bootstrap at level $2$ and immediately upon entering
each later nonfinal level. If it fails, the protocol chooses another
level-$k$ angle by \refeq{eq:fibonacci-per-shot-eig-angle}, takes the
probe, updates the posterior, and checks again. Thus a newly entered level
may require no shots. If the guard remains false after the configured
ordinary-level adaptive-shot cap, the protocol returns \textsc{failure}.

The guard is soft because posterior mass outside the selected
$(1-\delta_{\mathrm{move}})$-mass arc is retained and participates in all
later Bayesian updates. No cells are pruned. If refinement requests another
split after the live-cell count reaches $B_{\max}$, the protocol instead
returns a grid-saturation failure.

\subsection{Formal estimator and certification algorithm}
\label{sec:formal-fibonacci-estimator}

The measurement counts $M_2,\ldots,M_K$ are determined adaptively. At each
nonfinal level $k<K$, the protocol takes measurements until the transition
guard passes and then refines the cell representation upon entering level
$k+1$. Upon entry to the final level $K$, this refinement is performed once
and the resulting grid is frozen. Certification is checked before every
possible level-$K$ shot. If the condition is false, the protocol takes one
EIG-selected level-$K$ measurement, updates the weights on the frozen grid,
and checks again. This continues until the certificate succeeds or the
final-level measurement cap is reached. The certificate is checked after the
last permitted shot; the protocol returns \textsc{failure} only if it still
fails at that point.

To define the final estimator, consider every circularly consecutive run
$\mathcal R\subseteq\mathcal C$ of cell midpoints whose circular span is at
most $2\varepsilon$. Define its posterior mass by
\begin{equation}
    p_\tau(\mathcal R)
    \coloneqq
    \sum_{C\in\mathcal R}p_\tau(C),
    \label{eq:fibonacci-certification-run-mass}
\end{equation}
and let $\mathcal R^\star$ be a run of maximum posterior mass, breaking ties
by the smallest starting midpoint. Equivalently,
\begin{equation}
    \mathcal R^\star
    \in
    \operatorname*{arg\,max}_{\substack{
        \mathcal R\subseteq\mathcal C
        \text{ circularly consecutive}\\
        \text{circular span of }\mathcal R\leq2\varepsilon
    }}
    p_\tau(\mathcal R).
    \label{eq:fibonacci-heaviest-certification-run}
\end{equation}
Let $a^\star$ and $b^\star$ be the first
and last midpoint representatives in $\mathcal R^\star$, unwrapped in
circular order. We set
\begin{equation}
    w(\mathcal R^\star)
    \coloneqq
    \frac{a^\star+b^\star}{2}
    \pmod1.
    \label{eq:fibonacci-certified-estimate}
\end{equation}
Because $b^\star-a^\star\leq2\varepsilon$, every midpoint in
$\mathcal R^\star$ lies within circular distance $\varepsilon$ of $w$. The
protocol returns $w$ when
\begin{equation}
    p_\tau(\mathcal R^\star)
    \geq
    1-\delta_{\mathrm{final}}.
    \label{eq:fibonacci-final-certification-condition}
\end{equation}

Let \(M_{\mathrm{move}}^{\max}\) denote the maximum number of
additional EIG-selected measurements permitted while waiting for any
nonfinal transition guard to pass. The two bootstrap measurements are not
counted against this cap. Let \(M_{\mathrm{final}}^{\max}\) denote the
maximum number of measurements permitted at level \(K\).

\begin{algorithm}[H]
\caption{Slow-refinement adaptive Bayesian Fibonacci IQPE.}
\label{alg:adaptive-bayesian-fibonacci-estimator}

\KwData{Target accuracy $\varepsilon$; parameters
$\delta_{\mathrm{move}}$ and $\delta_{\mathrm{final}}$; final level $K$;
initial cell count $B_0$; cell cap $B_{\max}$; refinement threshold
$\eta_{\mathrm{grid}}$; candidate-angle set $\Phi$; maximum number
$M_{\mathrm{move}}^{\max}$ of additional adaptive measurements at each
nonfinal level; and final-level measurement cap
$M_{\mathrm{final}}^{\max}$.}
\KwResult{A certified estimate $w$ of $\theta$, or \textsc{failure}.}

\begin{enumerate}
\item \textit{Initialize.}
Set $\tau\gets\emptyset$ and $M_k\gets0$ for every
$k\in\{2,\ldots,K\}$. Partition $\RR/\ZZ$ into $B_0$ equal-width
cells $\mathcal C$. For every $C\in\mathcal C$, set
$p_\tau(C)\gets1/B_0$ and $\kappa(C)\gets2$.

\item \textit{Bootstrap.}
Set $\phi_{2,1}\gets0$ and $\phi_{2,2}\gets-\pi/2$.
For $s=1,2$:
\begin{enumerate}
\item Take the level-$2$ probe with angle $\phi_{2,s}$, obtaining
$b_{2,s}$. Append $(F_2,\phi_{2,s},b_{2,s})$ to $\tau$.
\item Simultaneously update $p_\tau(C)$ for every
$C\in\mathcal C$ using
\refeq{eq:fibonacci-cell-weight-update}.
\end{enumerate}
Set $M_2\gets2$.

\item \textit{Climb the Fibonacci levels.}
For $k=2,\ldots,K-1$:
\begin{enumerate}
\item Set $s\gets M_k$.
\item While $s<M_k+M_{\mathrm{move}}^{\max}$ and
$(F_{k+1}/F_k)|A^\star_{\tau,k}|>1/2$:
\begin{enumerate}
\item Set $s\gets s+1$ and choose
\[
    \phi_{k,s}\in
    \operatorname*{arg\,max}_{\phi\in\Phi}\textsf{EIG}(\phi).
\]
\item Take the level-$k$ probe with angle $\phi_{k,s}$,
obtain outcome $b_{k,s}$, and append
$(F_k,\phi_{k,s},b_{k,s})$ to $\tau$.
\item Simultaneously update $p_\tau(C)$ for every
$C\in\mathcal C$ using
\refeq{eq:fibonacci-cell-weight-update}.
\end{enumerate}
Set $M_k\gets s$. If $(F_{k+1}/F_k)|A^\star_{\tau,k}|>1/2$, return
\textsc{failure}.

\item While there exists
$C\in\mathcal C$ such that
$\mathsf{score}_{k+1}(C)>\eta_{\mathrm{grid}}$:
\begin{enumerate}
    \item If $|\mathcal C|=B_{\max}$, return
    \textsc{failure}.
    \item Choose
        $C\in
        \operatorname*{arg\,max}_{C'\in\mathcal C}
        \mathsf{score}_{k+1}(C').$
    \item Bisect $C$ into $C_{\mathrm L}$ and $C_{\mathrm R}$, assign their
    midpoints, widths, and weights using
    \refeq{eq:fibonacci-child-cell-weights}, and set
    $\kappa(C_{\mathrm L})\gets k+1$ and
    $\kappa(C_{\mathrm R})\gets k+1$.
\end{enumerate}
\end{enumerate}

\item \textit{Certify at level $K$.}
\begin{enumerate}
\item Set $s\gets M_K$ and choose $\mathcal R^\star$ using
\refeq{eq:fibonacci-heaviest-certification-run}.
\item While $s<M_{\mathrm{final}}^{\max}$ and
$p_\tau(\mathcal R^\star)<1-\delta_{\mathrm{final}}$:
\begin{enumerate}
\item Set $s\gets s+1$ and choose
\[
    \phi_{K,s}\in
    \operatorname*{arg\,max}_{\phi\in\Phi}
    \textsf{EIG}(\phi).
\]
\item Take a level-$K$ probe with angle $\phi_{K,s}$,
obtain outcome $b_{K,s}$, and append
$(F_K,\phi_{K,s},b_{K,s})$ to $\tau$.
\item Simultaneously update $p_\tau(C)$ for every
$C\in\mathcal C$ using
\refeq{eq:fibonacci-cell-weight-update}.
\item Recompute $\mathcal R^\star$ using
\refeq{eq:fibonacci-heaviest-certification-run}.
\end{enumerate}
Set $M_K\gets s$. If
$p_\tau(\mathcal R^\star)<1-\delta_{\mathrm{final}}$, return
\textsc{failure}.
\item Set $w$ using
\refeq{eq:fibonacci-certified-estimate} and return $w$.
\end{enumerate}
\end{enumerate}
\end{algorithm}

\vspace{3mm}

The protocol stores at most $B_{\max}$ weighted cells, together with the
ordered transcript and the creation point of each live cell. A posterior
update and EIG evaluation require work polynomial in $B_{\max}$ and
$|\Phi|$; a since-birth split replays at most the current transcript; and the
transition and certification tests can be performed by sorting the current
midpoint representatives. Consequently, if $B_{\max}$, $|\Phi|$, and both
measurement caps are polynomial in $n$, then the protocol uses polynomial
classical time and memory.

\subsection{Detailed Results of Fibonacci IQPE Simulations}

This appendix records the implementation choices and complete results for the
numerical experiments in
\refsec{sec:fibonacci-estimator-numerics}. The complete simulation code is available in our public GitHub repository,
\href{https://github.com/katherinevankirk/streaming-regev/}
{\textsf{streaming-regev}}. \reftab{tab:fibonacci-iqpe-parameters} gives all parameters. We test many values of $n$ and, in particular, the cryptographically relevant sizes $n = 1024,2048,4096$.
We use $1{,}000$ scalar trials at each $n$. The trials choose a $\theta$ to learn by choosing some length-$P$ bit string. As such, the true phase is 
sampled uniformly from
$\{0,1/2^P,\ldots,(2^P-1)/2^P\}$.
We choose $P$ such that $2^P
   \geq
   2^{24} B_{\max} F_K$ -- in other words, 
   the fixed-point representation retains at least \(24\) guard bits beyond the combined scales set by \(F_K\) and \(B_{\max}\).
We choose \(P\) so that, even after multiplication by the largest Fibonacci exponent \(F_K\), the fixed-point grid remains much finer than the nominal \(1/B_{\max}\) resolution of the posterior representation. Specifically, we set 
\begin{equation}
    P
    =
    \max\!\left\{
        16,
        \operatorname{bitlength}(F_K)
        +\left\lceil\log_2 B_{\max}\right\rceil
        +24
    \right\}.
    \label{eq:numerical-fixed-point-precision}
\end{equation}
The lower bound \(P\geq16\) prevents an unnecessarily coarse grid for small test instances.
Finally, also note that all quantities on the unit circle -- true $\theta$, cell midpoints, their Fibonacci multiples, etc -- are represented as integers.

For every trial we record the total number of scalar probes.  Since the
experiment visits the $K-1$ levels $2,\ldots,K$, we report the total number of probes divided by 
$(K-1)$ as the observed number of ``scalar probes per Fibonacci level.''
\reftab{tab:fibonacci-iqpe-complete-sweep} gives the complete results. The column $\overline S$ gives the mean total probe count.
The columns $N_{\mathrm{cert}}$, $N_{\mathrm{ac}}$, and
$N_{\mathrm{cap}}$ give, respectively, the number of certified trials, the
number that are both accurate and certified, and the number that fail to
certify before reaching the $2{,}000$-probe cap at the final Fibonacci
level, each out of $1{,}000$ trials.
The column $\overline S$ gives the mean total probe count.
We are interested in probe complexity scaling with the problem size. Traditional powers-of-2 IQPE is linear in terms of the levels visited (here, the max level visited is $K$), so we fit the mean total probe
count to the following: $\overline S
    =
    a(K-1)+b.$ 
The resulting fit has a strong $R^2$:
\begin{equation}
    \overline S
    =
    5.9372(K-1)+3.0467,
    \qquad
    R^2=0.999982.
    \label{eq:numerical-probe-fit}
\end{equation}
Note, however, that this quick regression uses the $21$ per-setting means, rather than treating the
$21{,}000$ individual trials as independent regression observations.

\begin{table*}[t]
    \centering
    \begin{tabular}{l l}
        \toprule
        Parameter & Value \\
        \midrule
        Trials per value of $n$ & $1000$ \\
        Regev scale & $R(n)=2^{3\sqrt n}$ \\
        Target accuracy & $\varepsilon_R=1/(\sqrt2R(n))$ \\
        Required final level
            & $K_{\mathrm{required}}=\min\{k\geq2:F_k>R/(4\sqrt2)\}$ \\
        Largest simulated level & $K=K_{\mathrm{required}}+3$ \\
        Initial number of cells & $B_0=32$ \\
        Candidate rotations
            & $\Phi=\{2\pi j/64:j=0,\ldots,63\}$ \\
        Transition parameter & $\delta_{\mathrm{move}}=10^{-3}$ \\
        Final certification parameter & $\delta_{\mathrm{final}}=10^{-2}$ \\
        Refinement threshold & $\eta_{\mathrm{grid}}=10^{-4}$ \\
        Maximum number of cells & $B_{\max}=30{,}000$ \\
        Guard-waiting shot cap per nonfinal level & $500$ \\
        Final-level shot cap & $2000$ \\
        Base pseudorandom seed & $20260807$ \\
        \bottomrule
    \end{tabular}
    \caption{Complete parameters for the numerical experiments of Adaptive Bayesian Fibonacci IQPE.}
    \label{tab:fibonacci-iqpe-parameters}
\end{table*}

\begin{table*}[p]
    \centering
    \setlength{\tabcolsep}{3.2pt}
    \renewcommand{\arraystretch}{1.04}
    \scriptsize
    \resizebox{\textwidth}{!}{%
    \begin{tabular}{r r r r c r r r c r}
        \toprule
        $n$ & $K$ & $\overline S$
        & Mean/level & Median/level [IQR]
        & $N_{\mathrm{cert}}$ & $N_{\mathrm{ac}}$ & $N_{\mathrm{cap}}$
        & Median cells [IQR] & Maximum cells \\
        \midrule
        $4$    & $10$  & $54.668$   & $6.074$ & $6.111\ [6.000,\,6.222]$ & $1000$ & $1000$ & $0$ & $77\ [75,\,79]$ & $92$ \\
        $6$    & $12$  & $68.703$   & $6.246$ & $6.000\ [5.909,\,6.182]$ & $999$  & $999$  & $1$ & $90\ [88,\,93]$ & $120$ \\
        $8$    & $14$  & $78.717$   & $6.055$ & $6.077\ [5.923,\,6.154]$ & $1000$ & $999$  & $0$ & $100\ [97,\,103]$ & $124$ \\
        $11$   & $16$  & $92.178$   & $6.145$ & $6.000\ [5.917,\,6.133]$ & $1000$ & $999$  & $0$ & $112\ [109,\,115]$ & $150$ \\
        $16$   & $19$  & $110.016$  & $6.112$ & $6.000\ [5.889,\,6.056]$ & $999$  & $998$  & $1$ & $128\ [125,\,132]$ & $147$ \\
        $23$   & $22$  & $125.674$  & $5.984$ & $6.000\ [5.905,\,6.048]$ & $1000$ & $1000$ & $0$ & $144\ [141,\,148]$ & $170$ \\
        $32$   & $26$  & $150.428$  & $6.017$ & $6.000\ [5.920,\,6.080]$ & $1000$ & $999$  & $0$ & $165\ [161,\,169]$ & $222$ \\
        $45$   & $31$  & $179.411$  & $5.980$ & $5.967\ [5.900,\,6.033]$ & $1000$ & $999$  & $0$ & $191\ [187,\,196]$ & $234$ \\
        $64$   & $36$  & $212.305$  & $6.066$ & $5.971\ [5.914,\,6.029]$ & $1000$ & $996$  & $0$ & $218\ [213,\,223]$ & $310$ \\
        $91$   & $43$  & $251.863$  & $5.997$ & $5.952\ [5.905,\,6.024]$ & $1000$ & $999$  & $0$ & $255\ [250,\,261]$ & $303$ \\
        $128$  & $50$  & $294.917$  & $6.019$ & $5.959\ [5.898,\,6.020]$ & $999$  & $998$  & $1$ & $292\ [287,\,299]$ & $470$ \\
        $181$  & $60$  & $354.873$  & $6.015$ & $5.949\ [5.898,\,6.000]$ & $999$  & $998$  & $1$ & $347\ [341,\,353]$ & $409$ \\
        $256$  & $71$  & $419.441$  & $5.992$ & $5.943\ [5.886,\,5.986]$ & $999$  & $998$  & $1$ & $407\ [401,\,415]$ & $468$ \\
        $362$  & $84$  & $498.588$  & $6.007$ & $5.928\ [5.892,\,5.976]$ & $998$  & $997$  & $2$ & $478\ [470,\,485]$ & $764$ \\
        $512$  & $99$  & $590.363$  & $6.024$ & $5.939\ [5.898,\,5.980]$ & $996$  & $993$  & $4$ & $556\ [549,\,565]$ & $822$ \\
        $724$  & $118$ & $695.470$  & $5.944$ & $5.923\ [5.889,\,5.957]$ & $999$  & $997$  & $1$ & $658\ [649,\,668]$ & $779$ \\
        $1024$ & $140$ & $827.210$  & $5.951$ & $5.921\ [5.892,\,5.957]$ & $998$  & $993$  & $2$ & $779\ [768,\,789]$ & $1145$ \\
        $1448$ & $166$ & $984.729$  & $5.968$ & $5.927\ [5.897,\,5.958]$ & $997$  & $993$  & $3$ & $920\ [909,\,932]$ & $1335$ \\
        $2048$ & $197$ & $1164.012$ & $5.939$ & $5.918\ [5.893,\,5.944]$ & $998$  & $996$  & $2$ & $1084.5\ [1072,\,1097]$ & $1429$ \\
        $2896$ & $234$ & $1386.979$ & $5.953$ & $5.918\ [5.893,\,5.944]$ & $998$  & $991$  & $2$ & $1284\ [1270,\,1298]$ & $1758$ \\
        $4096$ & $278$ & $1646.391$ & $5.944$ & $5.921\ [5.895,\,5.942]$ & $998$  & $990$  & $2$ & $1521\ [1505,\,1536]$ & $2018$ \\
        \bottomrule
    \end{tabular}%
    }
    \caption{
    Complete numerical results for the parameters specified in \ref{tab:fibonacci-iqpe-parameters}. The cell-count
    columns report the final number of live posterior cells; because cells
    are only bisected and never merged, this is also the peak cell count.
    }
    \label{tab:fibonacci-iqpe-complete-sweep}
\end{table*}

Next notice that no trial reaches the grid-saturation condition. 
At $n=4096$, final number of live cells has median is $1521$ and
its interquartile range is $[1505,1536]$.  The largest number of live cells
in any $n=4096$ trial was $2018$, which is only $6.73\%$ of our allowed grid refinement budget of $30{,}000$.  These cell counts measure the size
of the maintained posterior representation and provide a proxy for its
classical storage and expected-information-gain workload.  For the problem sizes we consider, we are well within what is classically plausible, and our simulations were all run locally on a laptop.

\paragraph{Error diagnostic.}
For every trial that certifies, let $w$ denote the estimate returned by
\refalg{alg:adaptive-bayesian-fibonacci-estimator}. We define its
normalized circular error by
\begin{equation}
    \operatorname{error}(w)
    \coloneqq
    \frac{
        \operatorname{dist}_{\RR/\ZZ}(w,\theta)
    }{
        \varepsilon_R
    }.
    \label{eq:normalized-fibonacci-estimation-error}
\end{equation}
A certified estimate is accurate precisely when
$\operatorname{error}(w)\leq1$. Trials that do not certify return
\textsc{failure} rather than an estimate $w$, so we report them separately
and do not assign them a value of $\operatorname{error}(w)$.

The following table partitions all $21{,}000$ trials according to their
certification status and, among the certified trials, according to their
final error. Among the $45$ false certifications, seven satisfy
$1<\operatorname{error}(w)\leq2$, one has
$\operatorname{error}(w)=8.36$, and the remaining $37$ satisfy
$\operatorname{error}(w)>10$. No observed false certification has error
strictly between $8.36$ and $77.1$. An additional $23$ trials fail to
certify before reaching the final-level probe cap.

\begin{center}
\begin{tabular}{lrr}
    \toprule
    Outcome & Count & Fraction of all trials \\
    \midrule
    Accurate and certified
        & $20{,}932$ & $99.676\%$ \\
    Not certified
        & $23$ & $0.110\%$ \\
    Falsely certified:
    $1<\operatorname{error}(w)\leq2$
        & $7$ & $0.033\%$ \\
    Falsely certified:
    $2<\operatorname{error}(w)\leq10$
        & $1$ & $0.005\%$ \\
    Falsely certified:
    $\operatorname{error}(w)>10$
        & $37$ & $0.176\%$ \\
    \bottomrule
\end{tabular}
\end{center}

Investigating the $45$ false certifications reveals two dominant, though
rare, failure modes.  Seven are \emph{near misses}: the returned estimate is
between one and two target radii from the true phase.  In these trials the
posterior concentrates near the correct phase, but the true phase lies just
outside the radius-$\varepsilon_R$ neighborhood of the returned midpoint.
Apart from one intermediate error of $8.36\varepsilon_R$, the remaining
$37$ false certifications have error greater than $10\varepsilon_R$.  We
refer to these large-error events as \emph{aliased misses}.

To see why aliasing can occur, recall that a level-$k$ likelihood depends on
a candidate phase $x$ only through $F_kx\bmod1$.  The map
$x\mapsto F_kx\bmod1$ can wrap the phase circle around itself many times.
Consequently, a wrapped phase $y$ has the branches
\begin{equation}
    x
    =
    \frac{y+j}{F_k}
    \pmod 1,
    \qquad
    j\in\{0,\ldots,F_k-1\},
    \label{eq:numerical-fibonacci-inverse-branches}
\end{equation}
which are separated by $1/F_k$ around the original phase circle.  Data from
the earlier levels and the transition guard are intended to identify the
branch $j$ containing the true phase before the estimator advances.  If the
posterior instead concentrates on another branch, subsequent probes can
continue to narrow that incorrect mode and eventually certify around it.

\begin{figure*}[t]
    \centering
    \includegraphics[width=0.72\textwidth]{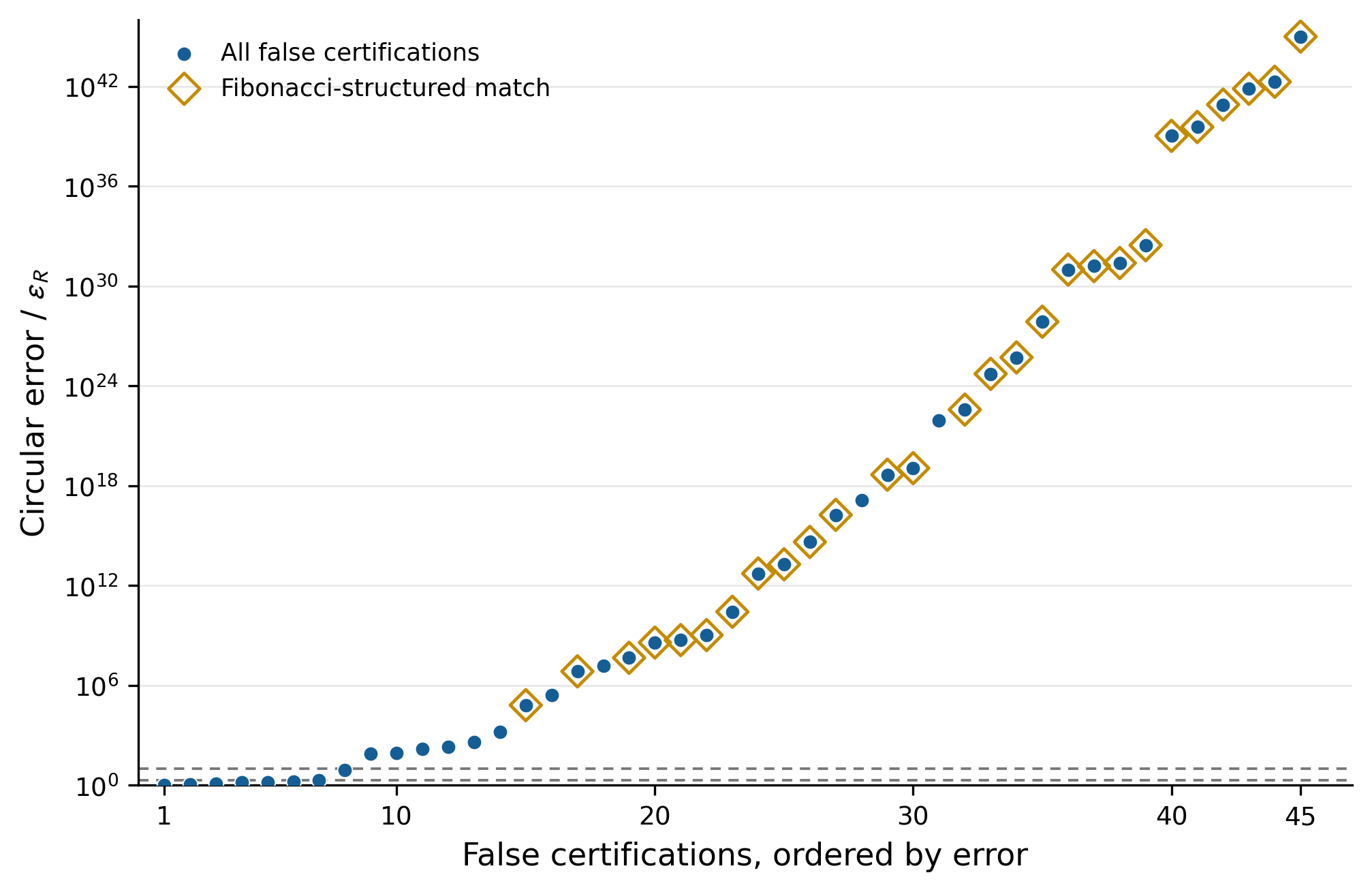}
    \caption{
    Circular error among the $45$ false certifications, ordered by magnitude
    and displayed in units of $\varepsilon_R$ on a logarithmic scale.  Seven
    estimates have error between $\varepsilon_R$ and $2\varepsilon_R$, one
    has error $8.36\varepsilon_R$, and the remaining $37$ have error greater
    than $10\varepsilon_R$.  The dashed lines at $2\varepsilon_R$ and
    $10\varepsilon_R$ are descriptive only and are not part of the protocol.
    The large-error events are consistent with selection of an incorrect
    inverse branch. 
    Hollow diamonds identify the events for which \(F_K\operatorname{dist}_{\mathbb R/\mathbb Z}(w,\theta)\) lies within relative error \(10^{-6}\) of \(F_j\) or \(2F_j\) for some \(j\). 
    }
    \label{fig:fibonacci-iqpe-error-diagnostic}
\end{figure*}

The Fibonacci exponents give these aliases additional structure.  
At the final level, every offset \(\ell/F_K\), for integer \(\ell\), is an exact inverse-branch alias. Among these aliases the choices \(\ell=F_j\) and \(\ell=2F_j\) form a distinguished subset: Fibonacci identities cause these offsets to remain close to inverse branches at several preceding levels.
Consider the offset $\Delta_j \coloneqq \frac{F_j}{F_K}$.
The Fibonacci identities give
\begin{equation}
    F_{K-r}\Delta_j
    =
    F_{j-r}
    +
    (-1)^{j-r}\frac{F_rF_{K-j}}{F_K},
    \qquad 0\leq r\leq j.
    \label{eq:numerical-fibonacci-alias-persistence}
\end{equation}
The first term is an integer, while the second remains small for the first
several levels preceding and following $K$.  The same offset can therefore remain close to
an inverse branch as the estimator advances.  We can investigate our aliases misses to see if any have this structure. In particular, we can consider the following diagnostic criterion
\begin{equation}
    \min_{\substack{2\leq j\leq K-2\\a\in\{F_j,2F_j\}}}
    \frac{|F_K\operatorname{dist}_{\RR/\ZZ}(w,\theta)-a|}{a}
    \leq
    10^{-6}.
    \label{eq:numerical-fibonacci-alias-diagnostic}
\end{equation}
It turns out that $27$ of our $37$ aliases misses satisfy this criterion: they exhibit a Fibonacci-structured branch displacement. In particular,
$22$ match $F_j$ and five match $2F_j$.  That said, while this alignment supports the inverse-branch interpretation, it remains a
retrospective diagnostic. Our terminal outputs do not record the
posterior trajectory and cannot identify the level at which the true branch
was lost or prove the mechanism that caused any particular trial.
\reffig{fig:fibonacci-iqpe-error-diagnostic} displays the $45$ false
certifications, ordered by error and labeled with whether they have a Fibonacci-structured branch displacement.

\end{document}